\newcommand*\bigcdot{\mathpalette\bigcdot@{.5}}
\newcommand*\bigcdot@[2]{\mathbin{\vcenter{\hbox{\scalebox{#2}{$\m@th#1\bullet$}}}}}
 \newcommand{\be}{\begin{equation}}
\newcommand{\ee}{\end{equation}}
\newcommand{\E}{\mathbb{E}}
\DeclareMathAlphabet{\pazocal}{OMS}{zplm}{m}{n}
\declaretheoremstyle[
  headfont=\normalfont\scshape,
  numbered=unless unique,
  bodyfont=\normalfont,
  spaceabove=1em,
  prefoothook=\newline\rule{\linewidth}{1pt},
  spacebelow=1em,
]{exmpstyle}
\declaretheorem[
  style=exmpstyle,
  title=\textbf{Definition},
  refname={example,examples},
  Refname={Example,Examples}
]{ddef}
\newcommand{\p}{\partial}
\newtheorem{Ass}{Assumption}
\newtheorem{prop}{Proposition}
\theoremstyle{definition}
\title{Simple Analytics of the Government Investment Multiplier\thanks{We thank He Nie and Zheng Zhongxi for excellent research assistance. We would also like to thank Thiago Teixeira Ferreira, Bruce Preston, Iacopo Varotto as well as seminar participants at the 2022 ICMAIF and EEA-ESEM conferences, the 2023 ZEW workshop, ICMAIF and Midwest Macro conferences as well as seminar particiants at the Monetary Authority of Singapore. A special thanks goes to Hafedh Bouakez for extensive comments and suggestions on an early version of the current paper. Finally, we also thank anonymous referees for comments on a previous version of this paper.}}
\author[1]{Chunbing Cai}
\author[2]{Jordan Roulleau-Pasdeloup}
\affil[1,2]{Department of Economics, National University of Singapore}
\date{\today}
\begin{document}
\begin{titlepage}
\maketitle
\thispagestyle{empty}

\abstract{What are the effects of investing in public infrastructure? We answer this question with a New Keynesian model. We recast the model as a Markov chain and develop a general solution method that nests existing ones inside/outside the zero lower bound as special cases. Our framework delivers a simple expression for the contribution of public infrastructure. We show that it provides a unified framework to study the effects of public investment in three scenarios: $(i)$ normal times $(ii)$ short-lived liquidity trap $(iii)$ long-lived liquidity trap. We find that calibrations commonly used lead to multipliers that diverge with the duration of the trap.}\\[.5cm]
\noindent{\bfseries JEL Codes: E3;C62;C68;D84} \\
\noindent{\bfseries Keywords: Zero Lower Bound; Public Investment; Markov chain} \\
\end{titlepage}
\setstretch{1.5}
\setlength{\parskip}{1em}

\section{Introduction}

The Global Financial Crisis of 2008 and the ensuing recession that followed sparked a renewed interest in the macroeconomic effects of fiscal policy. Indeed, fiscal policy has been used across countries when Central Banks started to run out of ammunition with interest rates close to zero. An interesting feature of these fiscal packages is that they usually feature a different composition in terms of expenditures compared to normal times. For example, the American Recovery and Reinvestment Act of 2009 allocated 40\% of non-transfer spending to government investment in public infrastructure\textemdash compared to 23\% in normal times.\footnote{See \cite{Bouakez2017} for a discussion. \cite{Bachmann2012} show that the share devoted to government investment usually rises in recessions and \cite{Bouakez2020optimal} show that this is in line with the optimal fiscal policy that a Ramsey planner would choose.} More recently, public investment packages such as the 2021 Infrastructure Investment and Jobs Act in the U.S have been passed. 

As a result, there is now a growing literature studying the macroeconomic effects of government investment. In a search for the mechanisms underpinning these empirical results, this literature has used DSGE models to incorporate government investment in public infrastructure\textemdash see \cite{Ramey2020macroeconomic} for a recent survey. Given that public investment is usually modeled as a flow that feeds into a stock of public capital/infrastructure, these papers using DSGE models have to rely on simulation results. Those that do not (see \cite{Linnemann2006} or \cite{Albertini2014} for example) have to make strong assumptions: the stock of public capital depreciates fully each period and becomes a flow by construction. These assumptions are far from innocuous for the results. The consequence is that little is known about the \emph{general} properties of government investment in DSGE models.
 
To overcome this difficulty, we develop a new analytical framework to study the effects of government investment in public infrastructure. We build on the literature that has followed \cite{Eggertsson2003} and has used simple Markov chains to represent the shocks. We show that all variables of the model can be recast as Markov chains and we use that to get simple analytical results. In particular, we provide a series of equivalence results to show that the Markov chains indeed compute a genuine equilibrium of the model. Importantly and in contrast with the literature that has followed \cite{Eggertsson2003}, our method allows for an arbitrary duration of the effective lower bound (henceforth ELB). In turn, the duration depends on both the shock persistence and magnitude. More precisely, we show how to compute the duration of the ELB in closed form for a given shock specification. 

Our method also builds on and nests \cite{Guerrieri2014} in that we assume perfect foresight \textit{before exiting} the ELB. As a result, our setup can be viewed as a unified framework to think about the cases where exit from the ELB is deterministic (\cite{Guerrieri2014}) or stochastic (\cite{Eggertsson2003}). In this paper, we focus almost exclusively on the case where exit from the ELB is stochastic instead of deterministic for the sake of space and analytical tractability. As a result, our framework nests the findings of \cite{Eggertsson2010}, \cite{Christiano2011} as well as \cite{Woodford2011} (from which the title of this paper is inspired) as a special case when government investment is not productive. This property does not hold for \cite{Guerrieri2014}. As a result, the method we propose is a genuine generalization of \cite{Eggertsson2010}, keeping the solution method fixed. Studying public investment using \cite{Guerrieri2014}'s algorithm then entails both a change in model as well as solution method compared with existing papers and we are able to sidestep this issue to get a better comparison. 

Consequently, we are able to decompose the effects of public investment in two distinct components: one that is due to public consumption and another that is due to public capital. These turn out to be quite different. In our framework that allows for an arbitrary duration of the ELB and stochastic exit, we show that the wasteful spending component is \textit{independent} of the time spent at the ELB: regardless of the assumed duration of the ELB, we find the same expressions as in \cite{Eggertsson2010} where the ELB effectively lasts for one period \textit{in expectations}. That is however not true for the second component. Using our framework, we derive a simple matrix expression that encodes the short run multiplier effects due to public capital in various situations:
\begin{align}
\text{Short run effect} = \underset{\text{gen. eq. - short run}}{\left(I-p\mathbf{A}_s\right)^{-1}}\quad\underset{\text{expectations}}{\left(\mathbf{A}_s\right)}\quad\underset{\text{gen. eq - medium run}}{\left(I-q\mathbf{A}_m\right)^{-1}}\quad\underset{\text{static}}{\left(B\right)},    
\label{eq:main_equation}
\end{align}
where $p$ is the persistence of the government investment shock and $q$ is one minus the depreciation rate of public capital. The vector $B$ encodes the static effects of public capital on consumption and inflation: it directly affects inflation
through real marginal costs in the Phillips curve. Matrix $\mathbf{A}_s$ governs how agents react to expected consumption and inflation in the short run, hence the subscript $s$. Matrix $\mathbf{A}_m$ does the same for the medium run, hence the subscript $m$. Intuitively, public capital has a static effect given by $B$ in the medium run. The general equilibrium effect, taking rational expectations into account, is given by $\left(I-q\mathbf{A}_m\right)^{-1}B$, where $I$ is the identity matrix. This effect is expected from the perspective of the short run, which gives a static effect of $\mathbf{A}_s\left(I-q\mathbf{A}_m\right)^{-1}B$ in the short run. The general equilibrium effect in the short run with endogenous expectations is then given by equation \eqref{eq:main_equation}. 

Depending on the value taken by the short/medium-run matrices, we can analyze different situations. If the ELB is never binding, then $\mathbf{A}_s=\mathbf{A}_m=\mathbf{A}$. If the ELB is binding in the short run, we study two situations. The first is a situation where there is an arbitrarily long ELB episode. In this case $\mathbf{A}_s=\mathbf{A}_m=\mathbf{A}^*$, where the last matrix encodes the fact that the Taylor rule is passive so that current inflation does not appear in the Euler equation anymore. Finally, we also consider a situation where the (expected) duration of the ELB is short: just one period. We show that here, we have an intermediate case: $\mathbf{A}_s=\mathbf{A}_*$ but now $\mathbf{A}_m=\mathbf{A}$. In the main text, we describe in detail the assumptions needed for these expressions to be valid. These boil down to simple conditions regarding the eigenvalues of matrices $\mathbf{A}$ and $\mathbf{A}^*$.

To illustrate our method, we use a simple New Keynesian model driven by a demand shock. In this context, we show that even though it has positive aggregate supply effects in the medium run, public investment is best thought of as an aggregate demand policy in the short run. If the ELB is relatively short-lived ($\mathbf{A}_m=\mathbf{A}$), then  we show how the expected persistence of the recession dynamics becomes crucial: we derive a threshold value for the persistence and explain how it depends on structural parameters. If the demand shock is both small and transitory so that expected persistence of recession dynamics is relatively short, then public investment has a larger multiplier compared to public consumption. If the shock is small so that we still have $\mathbf{A}_m=\mathbf{A}$ but more persistent so that the expected persistence of recession dynamics becomes relatively long instead; then the short run aggregate demand effects become contractionary while the medium run aggregate supply effects are still expansionary. Given these medium run effects, the impact multiplier may be less relevant compared to the cumulative Present Discount Value (PDV). 

Our framework also allows us to compute the PDV multiplier by hand. In the case of persistent recession dynamics, we derive a condition for the medium run aggregate supply effects to dominate and show that it most likely holds for usual calibrations. Finally, our framework also allows us to study what happens for a demand shock that is not too persistent but large in magnitude so that $\mathbf{A}_s=\mathbf{A}_m=\mathbf{A}^*$. We show that both aggregate supply and demand effects now turn contractionary: in this case, the government investment multiplier is clearly lower than that for public consumption.

In between these specific situations, we show that the effects of public capital depend non-trivially on the duration of the liquidity trap. Our framework delivers a closed form expression for this case that can be of independent interest. We use this expression to discuss how the government investment multiplier depends on the duration of the trap. In particular, we show that the nature of exit (deterministic versus stochastic) becomes both relevant and crucial. In addition, we show that calibrations most commonly used in the literature imply a multiplier that diverges as a function of the duration of the trap. 

To summarize, we find that public investment at the lower bound generally has relatively large effects, especially in present discount value terms. In addition, our new solution method highlights that the public investment multiplier may depend non-trivially on the duration of the trap. When that is the case, the choice of solution method (stochastic versus deterministic) becomes important and we would advise policymakers to compute multipliers using both. That way one can guarantee that the results are not reliant on one specific solution method. 

\noindent\textbf{Related Literature}\textemdash. 
This paper is related to the empirical literature that has seeked to estimate the effects of public investment/infrastructure following the seminal contribution of \cite{Aschauer1989}. Using ARRA highway grants as instruments, both \cite{Leduc2013} and \cite{Dupor2017} find limited short run effects across states, while \cite{Leduc2013} report large medium run effects. Using Mafia infiltration as an instrument in Italy, \cite{Acconcia2014} report a short run multiplier between 1.5 and 1.9. Looking at vehicle-intensive industries, \cite{Fernald1999} finds a positive effect of highways on firms' productivity. Finally, \cite{Leff2019interstate} finds a multiplier of 1.5 at an horizon of 15 years for the Interstate Highway System. See also \cite{Klein2023composition} for a recent example using a Bayesian VAR as well as well as \cite{Tervala2022building} for an analysis using Australia's recent school infrastructure reform. 

As a result, there is by now a growing literature that incorporates public investment in both Real Business Cycle and New Keynesian models following the seminal contribution of \cite{Baxter1993}. In this framework, public investment feeds into a stock of public capital that accumulates in a similar manner to private capital.\footnote{See \cite{Pappa2009}, \cite{Ganelli2010public,Ganelli2020welfare}, \cite{Leeper2010}, \cite{Coenen2012a}, \cite{Leduc2013}, \cite{Drautzburg2015}, \cite{Abiad2016},   \cite{Bouakez2017,Bouakez2020optimal}, \cite{Sims2018output},\cite{pappa2019local}, \cite{Boehm2019} \cite{Leff2019interstate}, \cite{Klein2023composition} and \cite{Tervala2022hysteresis}. See also \cite{Gallen2021transportation} for an analysis of public transportation infrastructure in a Computable General Equilibrium framework.} In turn, the stock of public capital appears in the production function and acts like a productivity shock. These papers all share the same shortcoming: they have to rely on numerical simulations since public capital adds an additional lagged state variable to the model.\footnote{\cite{Bouakez2017} do manage to get an analytical solution for the multiplier effects of government consumption/investment, but it is a bit too complicated to analyze.} Some papers (see \cite{Linnemann2006}) have studied public investment analytically by assuming that the stock of public capital depreciates every period, making it effectively a flow.\footnote{\cite{Albertini2014} also assume full depreciation of public capital but study a non-linear model nonetheless to deal with the non-linearity due to the ZLB.} In a recent contribution, \cite{Peri2022} study the effects of public investment in a production network economy.  

This paper is also related to the literature that has studied analytically the effects of public consumption through the lenses of DSGE models both in normal times and at the ELB: see \cite{Christiano2011}, \cite{Woodford2011}, \cite{Werning2011} as well as \cite{Eggertsson2010}. Our framework nests all of these as special cases if we make public investment unproductive. 

We briefly lay out a New Keynesian model with public capital in Section \ref{sec:public_capital}. In Section \ref{sec:MChain_representation}, we derive a Markov chain representation and prove our main equivalence results with respect to existing methods. In Section \ref{sec:public_investment_normal_times}, we study the aggregate effects of public investment in normal times. In Section \ref{sec:public_investment_zlb}, we study the effects of public investment in short and long-lived liquidity trap scenarios. We offer some concluding remarks in Section \ref{sec:conclusion}.

\section{A New Keynesian model with public capital}
\label{sec:public_capital}

Building on \cite{Bouakez2017} we now consider a model in which government spending is not wasteful but feeds into a stock of public capital. We leave the non-linear model to the Online Appendix and focus here on the semi log-linear approximation. Formally, the log-linear aggregate production function reads $y_t    = n_t+\epsilon_g k_t,$
where $\epsilon_g\geq 0$ is the elasticity of aggregate output $y_t$ with respect to public capital $k_t$. We assume that the stock of public capital evolves according to a standard law of motion with a depreciation rate given by $\delta\in(0,1)$. Beyond that, all model ingredients follow the standard New Keynesian model with a simple Phillips curve. More specifically, letting $c_t, \pi_t, r_t$ and $g_t$ denote consumption, inflation, the nominal interest rate and government investment respectively, the model takes the following form: 
\begin{align}
	c_t	& = \mathbb{E}_tc_{t+1} - [r_t - \mathbb{E}_t\pi_{t+1} + \xi_t] \label{eq:ll_EE} \\
	\pi_t	& = \kappa  \left[\Gamma_c c_t+\Gamma_g g_t-\Gamma_k k_t\right] \label{eq:ll_NKPC} \\
	r_t	& = \max[\log(\beta);\phi_\pi\pi_t], \label{eq:ll_TR}\\
	k_t	& = (1-\delta)k_{t-1}+\tilde{\delta}g_{t-1} \label{eq:ll_K_LoM}\\
    y_t	& = s_cc_t + g_t \label{eq:ll_RC} 
\end{align}
where $\xi_t$ is a preference shock, $\tilde{\delta}=\delta/(1-s_c)$ and $s_c$ is the steady state share of consumption in GDP. The introduction of parameters $\Gamma$ allow us to move from different types of preferences. For example, with log consumption utility and convex labor disutility with an inverse Frisch elasticity of $\eta$, we get $\Gamma_c=(1+\eta s_c)$, $\Gamma_g=\eta$, $\Gamma_k=(1+\eta)\epsilon_g$, where  $\epsilon_g$ is the elasticity of aggregate output with respect to public capital for a given labor input. Using these parameters, we can also nest the specification in \cite{Bouakez2017} as well as the recent contribution of \cite{Klein2023composition}. For simplicity and for future reference, we collect all the model parameters in the vector $\theta$. Following \cite{Bilbiie2019}, we have assumed that firms face a \cite{Rotemberg1982}-style adjustment cost based on yesterday's \textit{market} price and not their own price. This is chiefly done for analytical tractability and the method described in the next section also naturally applies to the more standard model. Finally, notice that with $\epsilon_g=0$ we nest the standard model with public consumption.

\section{An Exact Markov Chain Representation}
\label{sec:MChain_representation}

The model can be recast as two different systems of stochastic difference equations depending on whether the ELB binds or not. Throughout the paper, we run experiments where the ELB binds for $T\in [0,\infty)$ periods. Whenever the ELB doesn't bind, we can rewrite the forward looking part of the model as:
\begin{align}
\label{eq:LRE_inertial_fwd}
\mathbf{A}_0Y_{t} &= \mathbf{A}_1\E_t Y_{t+1}+B_0k_t+C_{0,1}g_t+C_{0,2}\xi_t
\end{align}
for $t=T+1,\dots$ where $Y_t$ is a column vector of size $N_y=2$ and both $g_t$ and $\xi_t$ are assumed to follow $AR(1)$ processes with persistence $p\in (0,1)$. Whenever the constraint is binding, equation \eqref{eq:LRE_inertial_fwd} becomes: 
\begin{align}
\label{eq:LRE_inertial_cstrt_fwd}
\mathbf{A}^*_{0}Y_{t} &= \mathbf{A}^*_{1}\E_t Y_{t+1}+B_0^*k_t+C^*_{0,1}g_t+C^*_{0,2}\xi_t+E^*
\end{align}
for $t=1,\dots,T$ if $T\geq 1$. Given that the law of motion for $k_t$ (equation \eqref{eq:ll_K_LoM}) does not depend on $Y_t$ explicitly, there is no endogenous persistence. In our case, the law of motion for $k_t$ is the same inside and outside the ELB.

Our main goal is to develop a solution method that computes an equilibrium with an occasionally binding ELB and that also allows for stochastic exit out of said ELB. We will build on the literature pioneered by \cite{Eggertsson2003} that has fruitfully made use of Markov chains. Our setup is different in that the presence of public capital gives an exogenous variable with more inertia: it follows an $ARMA(2,1)$. As a result, we cannot rely on the results in \cite{Eggertsson2010} and we will have to define a more general set of Markov chains. We describe these in the following definition. 

\begin{ddef}[Markov chain representation]
\label{ddef:MChains_inertial}
Let us define $N_y+3=5$ Markov chains $\mathbf{C}_{t},\mathbf{\Pi}_{t},\mathbf{G}_{t},\mathbf{\Xi}_{t},\mathbf{K}_{t}$. We let uppercase bold letters indexed by time $t$ denote the Markov chains themselves. The Markov states reached by these chains are denoted with lowercase plain letters. All Markov chains will follow the same transition matrix given by:
\begin{align*}
\mathcal{P}=
\begin{bmatrix}
0 & 1 & 0 & \dots & \dots & 0  \\
\vdots & \ddots & \ddots &  \ddots &  & \vdots \\
0 & \dots  & 0 & 1 & 0 & \vdots\\
0 & \dots &  0 &  p & 1-p & 0\\
0 & \dots &  \dots &  0 & q & 1-q\\
0 & \dots  & \dots & \dots & 0 & 1
\end{bmatrix}
\end{align*}
which is a stochastic matrix of size $(L+3)\times(L+3)$. The 1's on the first $L$ off-diagonals imply that we assume perfect foresight during the first $L$ periods. The initial distribution for each Markov chain is given by $u=\left[1,0\dots,0\right]$
of size $1\times(L+3)$. This implies that all Markov chains start in the first state. The matrix of Markov states for all the Markov chains representing the forward looking variables is given by:
\[
\mathbf{Y}=
\begin{bmatrix}
c_{1} & \dots & c_{\ell} & \dots & c_{L+2} & 0\\
\pi_{1} & \dots & \pi_{\ell} & \dots & \pi_{L+2} & 0\\
\end{bmatrix}
\]
which we have to solve for. We define $Y_\ell = \left[c_\ell, \pi_\ell\right]^\top$ for convenience. The Markov states for the Markov chains $j$ representing the exogenous variables are set up so that these exactly replicate the $AR(1)$ dynamics for the preference and government spending shocks as well as the $ARMA(2,1)$ dynamics for the stock of public capital. These are described in detail in Online Appendix \ref{sec:app_MChains_inertial}. 
\end{ddef}
With this definition in hand, we can solve the model backwards. The objective is to solve for the matrix of Markov states for forward looking variables $\mathbf{Y}$. Assuming that this has been done, the main object of interest in this paper will be the impulse response after the two correlated shocks. Defining an impulse response for a Markov chain is not straightforward and therefore we give a precise definition here.\footnote{This definition is consistent with the definition of a Generalized Impulse Response Function given in \cite{Koop1996impulse}. In our case, if $\xi_1=g_1=0$ then $\E_t \mathbf{X}_{t+n}=0$ for all $n\geq 0$. As a result, our impulse response is not written as a difference of expected paths simply because the baseline path is a vector of zeros.}
\begin{ddef}[Impulse Response]
Let us denote by $\pazocal{I}_{x,t+n}(\xi_1,g_1,\theta)$ the impulse response function for variable $x$. Throughout the paper, we define
\begin{align*}
\pazocal{I}_{x,t+n}(\xi_1,g_1,\theta)\equiv \E_t \mathbf{X}_{t+n},    
\end{align*}
where $\mathbf{X}_{t}$ is the Markov chain associated with variable $x$ and the conditional expectation on the right hand side is taken with respect to the (common) initial distribution $u$ for all Markov chains.  
\end{ddef}
The results in \cite{Roulleau2023analyzing} guarantee that in the special case of $L=0$ when $\xi_1$ is such that the ELB is never binding, this impulse response function solves the linear model given by the backward equation \eqref{eq:ll_K_LoM} and the system of forward equations \eqref{eq:LRE_inertial_fwd} for a given government spending shock $g_1$.\footnote{In \cite{Roulleau2023analyzing} the focus is on models with AR(1) shocks and one endogenous state variable. The framework in the current paper considers a model with AR(1) shocks but because public capital is not tied to decision variables, it is effectively exogenous. As a result, the current framework is a special case of the one studied in \cite{Roulleau2023analyzing}.} However, these results do not apply when the economy is at the ELB and is expected to be there for more than 1 period. In the following Proposition, we show that the Markov chains described in Definition \ref{ddef:MChains_inertial} do solve the model for an arbitrary duration of the ELB.  
\begin{prop}
\label{prop:match_IRF}
Assume that $\xi_1$ is set such that the ELB binds for $T\geq 1$ periods in expectations. It then follows that $\pazocal{I}_{x,t+n}(\xi_1,g_1,\theta)$ for $x\in\left\{c,\pi,\xi\,g,k\right\}$ solves the system of equations given by the backward equation \eqref{eq:ll_K_LoM} and the forward equation \eqref{eq:LRE_inertial_cstrt_fwd} for all $n\geq 0$.
\end{prop}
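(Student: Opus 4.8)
The plan is to reduce the statement to a handful of linear-algebra facts, using that, by construction, every impulse response is the linear functional $\pazocal{I}_{x,t+n}(\xi_1,g_1,\theta)=\mu_n X^{\mathrm{vec}}$, where $X^{\mathrm{vec}}$ is the column vector of Markov states of $\mathbf{X}_t$ (with $\mathbf{Y}^{\top}$ playing that role for the pair $(c,\pi)$) and $\mu_n:=u\,\mathcal{P}^{\,n}$ is the time-$(t+n)$ state distribution. Two observations drive everything. First, each of the first $L$ rows of $\mathcal{P}$ sends state $\ell$ deterministically to $\ell+1$, so $\mu_n$ is the point mass on state $n+1$ for all $0\le n\le L$; since the hypothesis on $\xi_1$ --- via the perfect-foresight-before-exit structure built into $\mathcal{P}$ --- places the ELB over a prefix $\{1,\dots,T\}$ of the Markov states with $T\le L$, this means that at every horizon $n\le T-1$ at which \eqref{eq:LRE_inertial_cstrt_fwd} must hold, $\mu_n$ is the point mass on the ELB state $n+1$ and $\mu_{n+1}=\mathcal{P}_{n+1,\cdot}$ is the one-step law out of it.

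For the backward block I would recall from Online Appendix \ref{sec:app_MChains_inertial} that the exogenous Markov states are fixed so that $G^{\mathrm{vec}}$ and $\Xi^{\mathrm{vec}}$ are right eigenvectors of $\mathcal{P}$ for eigenvalue $p$ and $K^{\mathrm{vec}}$ solves $\big(\mathcal{P}-(1-\delta)I\big)K^{\mathrm{vec}}=\tilde{\delta}\,G^{\mathrm{vec}}$ with first coordinate $k_1=0$. Left-multiplying the $k$-identity by the row vector $\mu_{n-1}$ gives $\pazocal{I}_{k,t+n}=(1-\delta)\pazocal{I}_{k,t+n-1}+\tilde{\delta}\,\pazocal{I}_{g,t+n-1}$ for all $n\ge1$, which is exactly \eqref{eq:ll_K_LoM} along the impulse response, and at $n=0$ it is the predetermined condition $k_1=0$; the eigenvector property likewise reproduces the $AR(1)$ shapes of $\pazocal{I}_{g,\cdot}$ and $\pazocal{I}_{\xi,\cdot}$.

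For the forward block I would recall the backward recursion that defines $\mathbf{Y}$ (sketched right after Definition \ref{ddef:MChains_inertial}): initialized at $Y_{L+3}=0$ and run from high to low index, it pins each $Y_\ell$ down as the solution of the relevant structural relation --- \eqref{eq:LRE_inertial_cstrt_fwd} when $\ell$ is one of the $T$ ELB states (by hypothesis states $1,\dots,T$) and \eqref{eq:LRE_inertial_fwd} otherwise --- with $\E_tY_{t+1}$ replaced by the one-step Markov expectation $\sum_{\ell'}\mathcal{P}_{\ell\ell'}Y_{\ell'}=\big(\mathcal{P}\mathbf{Y}^{\top}\big)_\ell$; at the two self-looping states $L+1,L+2$ this is a genuine linear solve, well posed under the assumed eigenvalue conditions on $\mathbf{A}$ and $\mathbf{A}^{*}$. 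Now fix a horizon $0\le n\le T-1$: by the first observation $\pazocal{I}_{Y,t+n}=Y_{n+1}$, $\pazocal{I}_{Y,t+n+1}=\mu_{n+1}\mathbf{Y}^{\top}=\big(\mathcal{P}\mathbf{Y}^{\top}\big)_{n+1}$, and $\pazocal{I}_{x,t+n}=x_{n+1}$ for $x\in\{k,g,\xi\}$, so substituting these into the recursion's relation at $\ell=n+1$ is literally \eqref{eq:LRE_inertial_cstrt_fwd} evaluated along the impulse response at horizon $n$; hence $\pazocal{I}_{x,t+n}$ solves \eqref{eq:LRE_inertial_cstrt_fwd} throughout the trap. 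For $n\ge T$ the support of $\mu_n$ lies in non-ELB states, so averaging the per-state copies of \eqref{eq:LRE_inertial_fwd} against $\mu_n$ --- using time-homogeneity and the tower property to collapse $\E_tY_{t+n+1}$ into $\mu_{n+1}\mathbf{Y}^{\top}$ --- shows the impulse response also solves \eqref{eq:LRE_inertial_fwd}, so together with the backward block it solves the whole model.

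The crux is the first observation and the use just made of it: \eqref{eq:LRE_inertial_cstrt_fwd} at horizon $n$ contains the genuine conditional expectation $\E_tY_{t+n+1}$, and what makes it collapse onto the single per-state relation that the backward recursion was built to satisfy is precisely that $\mu_n$ stays degenerate throughout the trap, i.e., the perfect-foresight-before-exit prefix of $\mathcal{P}$; without that prefix, $\E_tY_{t+n+1}$ would be a nondegenerate mixture over ELB and post-ELB continuation states and the correspondence would break. A smaller caveat worth flagging is that \eqref{eq:ll_K_LoM} holds for the impulse responses only in expectation --- it is violated path-by-path at the self-loops --- which is exactly why the statement is phrased in terms of $\pazocal{I}_{x,t+n}$ rather than of the chain itself.
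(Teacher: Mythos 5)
Your core device---representing $\pazocal{I}_{x,t+n}$ as $\mu_n X^{\mathrm{vec}}$ with $\mu_n=u\mathcal{P}^n$ degenerate during the perfect-foresight prefix, so that the per-state relations defining $\mathbf{Y}$ become the horizon-by-horizon equations---is sound and is essentially the same mechanism the paper exploits; your backward block (hitting the per-state identity $\mathcal{P}K^{\mathrm{vec}}=(1-\delta)K^{\mathrm{vec}}+\tilde{\delta}G^{\mathrm{vec}}$ with $\mu_{n-1}$) is a clean equivalent of the paper's Chapman--Kolmogorov computation. The genuine gap is in your post-trap step. Under stochastic exit, which is the case the paper focuses on, $T=L+1$, so for $n\geq T$ the support of $\mu_n$ is \emph{not} contained in non-ELB states: the ELB state $L+1$ retains probability $p^{\,n-L}$ at every horizon. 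Averaging the per-state copies then mixes the constrained relation at state $L+1$ with the unconstrained relations at states $L+2,L+3$, and collapses onto neither \eqref{eq:LRE_inertial_fwd} nor \eqref{eq:LRE_inertial_cstrt_fwd}. The paper's proof removes this obstruction by pinning down $\xi_1$ so that the bound holds \emph{with equality} at the exit state, $\phi_\pi\pi_{L+1}=\log(\beta)$ (equation \eqref{eq:condition_z21}); only then do the constrained and unconstrained per-state relations coincide at state $L+1$, after which the forward equation propagates to all horizons $n\geq L$ (the paper does this via the recurrence $\E_t\mathbf{Y}_{t+L+m+2}=(p+q)\E_t\mathbf{Y}_{t+L+m+1}-pq\,\E_t\mathbf{Y}_{t+L+m}$ and induction; your averaging would serve equally well once the exact-binding point is in place). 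Your restriction to $T\leq L$ quietly drops the stochastic-exit case altogether, which the proposition is explicitly meant to cover.

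A second, related omission: you treat the location of the ELB states as granted by the hypothesis, whereas the paper regards verifying it as part of the proof. Its construction only forces the constraint to bind exactly at the exit date; it then proves, by a separate induction (Proposition \ref{prop:zlb_binds}, taking $g_1\sim 0$ and using the implied bound $p^{L}\xi_1>-\log(\beta)$), that $\pi_{\ell}\leq\log(\beta)/\phi_\pi$ in every earlier state, so that the constrained equation \eqref{eq:LRE_inertial_cstrt_fwd} is in fact the correct structural relation in each state of the assumed trap and the guess of $T$ is self-consistent. Without this verification (or the determination of $\xi_1$ in the first place), your argument shows only that the chain solves whatever regime pattern was guessed, not that this pattern is compatible with the Taylor rule's max operator, which is the substantive content of the proposition.
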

\begin{proof}
See Online Appendix \ref{sec:proof_match_IRF}.
\end{proof}
In a nutshell, Proposition \ref{prop:match_IRF} guarantees that the Markov chains with the same transition matrix all solve\footnote{To be more precise, it solves for the Minimum State Variable solution of the underlying model.} the system of backward and forward equations \eqref{eq:ll_K_LoM}, \eqref{eq:LRE_inertial_fwd} and \eqref{eq:LRE_inertial_cstrt_fwd}. Note that we haven't specified the nature of the exit yet so that Proposition \ref{prop:match_IRF} applies both to stochastic and deterministic exits from the ELB. The following definition characterizes the main object of interest in this paper: the (marginal) government investment multiplier at the ELB:
\begin{ddef}
The impact multiplier effect for variable $x$ is then defined as:
\begin{align*}
\pazocal{M}_x(L,\theta) \equiv \lim_{g_1\to 0}\frac{\pazocal{I}_{x,t}(\xi_1,g_1,\theta)-\pazocal{I}_{x,t}(\xi_1,0,\theta)}{g_1},  
\end{align*}    
which can also be interpreted as $\partial \pazocal{I}_{x,t}(\xi_1,g_1,\theta)/\partial g_1$.
\end{ddef}

Note that this case nests the multiplier in normal times as a special case: in that case, the second impulse response on the numerator is given by zero: the multiplier obtains simply by dividing the impact of the impulse response by $g_1$. We now describe how we will solve for the Markov states to guarantee that the impulse responses actually solve the model in and out of the ELB. To solve for $\mathbf{Y}$, we work backwards and solve for $Y_{L+2}=\left[c_{L+2}, \pi_{L+2}\right]^\top$ first, when the ELB is not binding anymore. This will be the case in all the situations that we will study in this paper. For our simple New Keynesian model $\mathbf{A}_0$ is non-singular and thus we can define $\mathbf{A}\equiv \mathbf{A}_0^{-1}\mathbf{A}_1$ as well as $B\equiv \mathbf{A}_0^{-1}B_0$. Given this, we can solve for $Y_{L+2}$ as follows:
\begin{align}
Y_{L+2} &= q\mathbf{A}Y_{L+2}+Bk_{L+2}= \left(I-q\mathbf{A}\right)^{-1}Bk_{L+2},
\label{eq:YLp2_kLp2}
\end{align}
where we have used the fact that $g_{L+2}=\xi_{L+2}=0$. Given a government investment shock $g_1$, the whole path of public capital is determined. As a result, $k_{L+2}$ is effectively given and one can immediately solve for $Y_{L+2}$. What happens before state $L+2$ crucially depends on the size, sign and persistence of the shocks as well as the nature of exit. To get a set of baseline results first, we will assume that $\xi_1=0$ so that the ELB is never binding. Effectively, this amounts to study the effects of public investment in normal times, away from the ELB.

\section{Public Investment in Normal Times}
\label{sec:public_investment_normal_times}

When $\xi_1=0$ so that there is no preference shock, we can set $L=0$ without loss of generality: the ELB binds for $T=0$ periods. In that case, we only have to solve for two vectors of forward-looking variables: $Y_1$ and $Y_2$. The first one will encode short run dynamics when only government investment expenses appear in the Phillips curve, while the second one will encode medium run dynamics when only the stock of public capital appears in the Phillips curve. In that case, the transition matrix boils down to 
\begin{align*}
\mathcal{P} = 
\begin{bmatrix}
p & 1-p & 0\\
0 & q & 1-q\\
0 & 0 & 1
\end{bmatrix}
\end{align*}
as in \cite{Roulleau2023analyzing}, except that $q$ is given here and does not need to be solved for. We will solve the model backward and start by expressing consumption and inflation in $Y_2$ as a function of the stock of capital. Given the one period time to build delay, we have $k_1=0$ and thus we only have one Markov state for the stock of capital, which we will simply call $k_2=k$ for convenience. Setting $L=0$ and using equation \eqref{eq:YLp2_kLp2}, we get:
\begin{align}
\label{eq:YL2_k}
Y_{2} &=   \left(I-q\mathbf{A}\right)^{-1}Bk.  
\end{align}
In this 2 by 2 linear system of equations, $Bk$ represents the static effects of public capital on the aggregate supply side of this economy in the medium run. Multiplying by $\left(I-q\mathbf{A}\right)^{-1}$ amounts to compute the general equilibrium effect taking endogenous expectations into account. Indeed, conditional on being in the second state we can write $\E_{2}\mathbf{C}_{t+1} = qc_2$, where $\E_{2}$ denotes expectations conditional on being in state 2. We can proceed in the same manner for inflation expectations.\footnote{Note that since monetary policy is active, both the eigenvalues of $q\mathbf{A}$ are within the unit circle and there is no singularity as in \cite{Carlstrom2015inflation}. The fact that the \cite{Blanchard1980solution} condition is respected implies that the eigenvalues of $q\mathbf{A}$ are within the unit circle for $q\in(0,1)$.} Computing the matrix multiplication in equation \eqref{eq:YL2_k} gives us medium run consumption and inflation as a function of $k$:
\begin{align*}
c_2 &= \frac{\kappa(\phi_\pi-q)\Gamma_k}{\det\left(I-q\mathbf{A}\right)}k \equiv \Theta_{c,k}\cdot\epsilon_g\cdot k\quad\&\quad\pi_2 = \frac{-\kappa(1-q)\Gamma_k}{\det\left(I-q\mathbf{A}\right)}k \equiv \Theta_{\pi,k}\cdot\epsilon_g\cdot k,    
\end{align*}
where we have used the fact that $\Gamma_k$ is a linear function of $\epsilon_g$. Given that both eigenvalues of $q\mathbf{A}$ are in the $(0,1)$ interval, it can be shown that $\det\left(I-q\mathbf{A}\right)>0$. As a result, the common denominator to both of these fractions is necessarily positive. From these two equations, one can directly see that the newly defined coefficients on the right hand side are such that: (1) $\Theta_{c,k}>0$ (2) $\Theta_{\pi,k}<0$ and (3) $\Theta_{c,k} = -\frac{\phi_\pi-q}{1-q}\Theta_{\pi,k}$, which implies that $|\Theta_{c,k}|>|\Theta_{\pi,k}|$.

In words, this means that the effects of public capital in the medium run are unambiguously positive on consumption and unambiguously negative on inflation: an increase in public capital is decidedly an aggregate supply-type policy in the medium run. Fact number (3) will be useful shortly when we will compute the short term effects of public capital that work through rational expectations. For now, note that it must be the case that the effect on consumption in the medium run is larger in magnitude than the one on inflation as long as the Taylor principle holds. Note also that the effects of $k$ can be mapped out to the initial government investment shock as $k=\tilde{\delta}g/(1-p)$, where $g\equiv g_1$ is the initial government spending shock. 

Moving on to the short run, since $k_1=0$ only the government spending shock $g$ will appear. With this in mind and using the Taylor rule as well as the Markov structure to compute expectations, the relevant Markov states solve the following system of two equations:
\begin{align}
\label{eq:c_1_Ntimes}
c_1 &= \overbracket{pc_1 + (1-p)c_2}^{\E_1\mathbf{C}_{t+1}} -(\phi_\pi \pi_1 \overbracket{- p\pi_1 - (1-p)\pi_2}^{-\E_1\mathbf{\Pi}_{t+1}})\\
\pi_1 &= \kappa\left[\Gamma_c c_1 + \Gamma_g g\right].
\label{eq:pi_1_Ntimes}
\end{align}
To gain some insight on how the contribution of public capital relates to the one made by public consumption, it is useful to rewrite these two equations in matrix form as follows:
\begin{align*}
Y_1 &= p\mathbf{A}Y_1+(1-p)\mathbf{A}Y_2+C_{1}g\\
&= \left[(I-p\mathbf{A})^{-1}C_1 + (I-p\mathbf{A})^{-1}\mathbf{A}(I-q\mathbf{A})^{-1}B\Tilde{\delta}\right]g,
\end{align*}
where the vector $C_1$ denotes the first column of the matrix $\mathbf{A_0}^{-1}C_{0,1}$\textemdash see equation \eqref{eq:LRE_inertial_fwd}. The first term in brackets on the right hand side is the solution that obtains in the standard New Keynesian model with public consumption. The second part governs the contribution of public capital. This equation shows that these two effects appear linearly and thus can be analysed in isolation. Focusing on the contribution of public capital, one can see that we recover the matrix expression described in the introduction.

Given our assumption of a static Phillips curve, government investment has less of an aggregate supply flavor to it in the short run. Indeed, the expected decrease in inflation in the medium run does not feed into lower inflation in the short run through the firms' decisions. The aggregate supply effects of government investment kick in in the medium run and  influence short run outcomes exclusively through the Euler equation. There, as can be seen from equation \eqref{eq:c_1_Ntimes}, both expectations play conflicting roles: higher medium run consumption incentivizes higher consumption in the short run. Lower medium run inflation however brings about a larger expected real interest rate and thus disincentivizes consumption in the short run. From fact 3 above however, we know that the consumption effect dominates so that public capital makes a positive contribution on short run consumption everything else equal.\footnote{\cite{Klein2023composition} explicitly assume this property in their model with a forward-looking Phillips curve. See their assumption \textbf{A2}.} 

Beyond the specific contribution of public capital, we want to know whether government investment can be a useful tool to stimulate an economy. To do that, we need to put the pieces back together and add up the direct effects of $g$ with its indirect effects through $k$. This is done in the following proposition.

\begin{prop}
Assume that $\phi_\pi>1$. Then, it follows that:
\begin{enumerate}
    \item 
There exists a threshold $\epsilon_g^I$ such that 
$\frac{\partial c_1}{\partial g}>0\ $ if  $\ \epsilon_g > \epsilon_g^I>0$    
\item  The threshold value for $\epsilon_g$ is given by:
\begin{align*}
\epsilon_g^I = \frac{\phi_\pi-p}{\phi_\pi-1}\frac{\Gamma_g}{1+\Gamma_g}\frac{1-q+\kappa\Gamma_c(\phi_\pi-q)}{\tilde{\delta}}.    
\end{align*}
\item The government investment multiplier effect on inflation $\pi_1$ is strictly positive.
\end{enumerate}
\label{prop:pi_1_NTimes}
\end{prop}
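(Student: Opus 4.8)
The plan is to solve the two short-run equations \eqref{eq:c_1_Ntimes}--\eqref{eq:pi_1_Ntimes} in closed form and feed in the medium-run expressions for $c_2,\pi_2$ that precede the statement. First I would eliminate $c_1$: rearranging \eqref{eq:c_1_Ntimes} gives $c_1=c_2+\pi_2-\tfrac{\phi_\pi-p}{1-p}\pi_1$, and substituting into \eqref{eq:pi_1_Ntimes} yields
\begin{align*}
\pi_1=\frac{(1-p)\kappa\left[\Gamma_g g+\Gamma_c(c_2+\pi_2)\right]}{(1-p)+\kappa\Gamma_c(\phi_\pi-p)}.
\end{align*}
Since $\phi_\pi>1>p$ the denominator is strictly positive. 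From \eqref{eq:pi_1_Ntimes} one also has $c_1=(\pi_1-\kappa\Gamma_g g)/(\kappa\Gamma_c)$, so $\sign(\partial c_1/\partial g)=\sign(\partial\pi_1/\partial g-\kappa\Gamma_g)$ and the whole statement reduces to understanding the coefficient of $g$ in $\pi_1$.

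Next I would substitute the medium-run solutions. Using fact (3), $c_2+\pi_2=\Theta_{\pi,k}\epsilon_g k\cdot\tfrac{1-\phi_\pi}{1-q}$; then $\Theta_{\pi,k}\epsilon_g=-\kappa(1-q)\Gamma_k/\det(I-q\mathbf{A})$ together with $\Gamma_k=(1+\Gamma_g)\epsilon_g$ (the log-utility normalization of Section \ref{sec:public_capital}) and $k=\tilde{\delta}g/(1-p)$ give
\begin{align*}
c_2+\pi_2=\frac{\kappa(1+\Gamma_g)(\phi_\pi-1)\,\tilde{\delta}\,\epsilon_g}{(1-p)\,\det(I-q\mathbf{A})}\,g.
\end{align*}
Plugging this back makes $\pi_1$ and $c_1$ strictly proportional to $g$, so the multipliers are just the proportionality constants. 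For part (3): the coefficient of $g$ in $\pi_1$ has numerator $\kappa\left[(1-p)\Gamma_g+\kappa\Gamma_c(1+\Gamma_g)(\phi_\pi-1)\tilde{\delta}\,\epsilon_g/\det(I-q\mathbf{A})\right]$, a strictly positive term (the direct public-consumption channel) plus a nonnegative one (since $\epsilon_g\ge0$ and $\det(I-q\mathbf{A})>0$), over a positive denominator; hence $\partial\pi_1/\partial g>0$.

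For parts (1)--(2) I would form $\partial\pi_1/\partial g-\kappa\Gamma_g$; the $(1-p)\Gamma_g$ pieces cancel and, after clearing the positive denominator, the sign is that of $\kappa^2\Gamma_c\left[(1+\Gamma_g)(\phi_\pi-1)\tilde{\delta}\,\epsilon_g/\det(I-q\mathbf{A})-\Gamma_g(\phi_\pi-p)\right]$. This is affine and strictly increasing in $\epsilon_g$ (strictly positive slope, using $\phi_\pi>1$, $\tilde{\delta}>0$, $\det(I-q\mathbf{A})>0$) with strictly negative intercept $-\Gamma_g(\phi_\pi-p)$, so it has a unique zero; solving for it and using $\det(I-q\mathbf{A})=1-q+\kappa\Gamma_c(\phi_\pi-q)$ gives exactly $\epsilon_g^I=\tfrac{\phi_\pi-p}{\phi_\pi-1}\tfrac{\Gamma_g}{1+\Gamma_g}\tfrac{1-q+\kappa\Gamma_c(\phi_\pi-q)}{\tilde{\delta}}>0$, with $\partial c_1/\partial g>0$ for $\epsilon_g>\epsilon_g^I$ and $<0$ below it.

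The one substantive point, rather than bookkeeping, is signing the public-capital contribution $c_2+\pi_2$: it is this combination (not $c_2$ or $\pi_2$ separately) that drives both multipliers, and it is positive precisely because $\phi_\pi>1$ via fact (3), equivalently $\Theta_{c,k}+\Theta_{\pi,k}>0$. That is what makes the slope in $\epsilon_g$ strictly positive and hence guarantees existence, uniqueness and positivity of the threshold; I would flag the Taylor-principle hypothesis at exactly that step, since if $\phi_\pi\le1$ the monotonicity---and the proposition---would fail.
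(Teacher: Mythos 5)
Your proposal is correct and follows essentially the same route as the paper's proof: rearrange the short-run Euler equation, substitute the medium-run expressions $c_2+\pi_2\propto(\phi_\pi-1)\Gamma_k\tilde{\delta}g$, solve the resulting $2\times 2$ system, and identify $\epsilon_g^I$ as the unique zero of an affine, strictly increasing function of $\epsilon_g$ with negative intercept, with part (3) following because the direct $\Gamma_g$ channel is positive and the capital channel is nonnegative on inflation. The only difference is bookkeeping order\textemdash you eliminate $c_1$ and solve for $\pi_1$ first, then read off $\sign(\partial c_1/\partial g)=\sign(\partial\pi_1/\partial g-\kappa\Gamma_g)$, whereas the paper solves for $c_1$ first and then feeds it through the Phillips curve\textemdash which is immaterial.
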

\begin{proof}
See Online Appendix \ref{sec:proof_pi1_NTimes}.
\end{proof}
Part 1 of Proposition \ref{prop:pi_1_NTimes} says that a government investment shock can crowd in private consumption only if it is assumed to be productive enough. As a corollary, from the resource constraint equation \eqref{eq:ll_RC} it is evident that whenever this condition is cleared the government investment multiplier effect on output $\partial y_1/\partial g$ is strictly larger than 1. This results also nests the well known result that public consumption crowds out private consumption as this case obtains with $\epsilon_g=0$ which violates the condition in Part 1 of Proposition \ref{prop:pi_1_NTimes}.

Part 2 of Proposition \ref{prop:pi_1_NTimes} delineates how the threshold value for the productivity of public capital depends on the structural parameters of the model. One parameter that stands out here is $\Gamma_g$, the elasticity of marginal costs with respect to government investment: the lower this parameter, the lower the threshold for $\epsilon_g$. Indeed, a low value for this parameter requires a large Frisch elasticity of labor supply. The low value of $\Gamma_g$ guarantees that a government investment shock has little effect on marginal costs and thus inflation in the short run. This results in a very steep Phillips curve in a $(\pi_1,c_1)$ plane. In addition, a positive government spending shock shifts this line to the right: given consumption, inflation has to increase. Without public capital, this rise in inflation fuels an increase in the expected real interest rate through the reaction of the Central Bank: consumption necessarily declines in the short run. This is illustrated in the right panel of Figure \ref{fig:AS_AD_NTimes}. Importantly, our method ensures that the lines plotted in this figure take full account of expectations.

The dynamics change however when public capital is brought into the picture: the expected wealth effect from higher consumption in the medium run makes the Euler equation shift up. Given fixed inflation today, consumption increases in the short run. If this upward shift is large enough, consumption is crowded in in the short run. If $\Gamma_g$ is too high so that $\epsilon_g<\epsilon_g^I$, then the Phillips curve effectively shifts more to the right: the government spending shock generates too much inflation in the short run. The real rate effect in the Euler equation then gets magnified and more than compensates for the expected wealth coming from higher capital in the medium run. This expected wealth effect has been exposed  in \cite{Bouakez2017} and shown to dominate the aggregate supply effect in the medium run numerically. In our framework, we zoom in on the aggregate demand effects of government investment and show in closed form how big they need to be to crowd in consumption in the short run.

Part 3 of Proposition \ref{prop:pi_1_NTimes} guarantees that the impact effect of government investment on inflation is strictly positive regardless of the effect on short run consumption.\footnote{This result is consistent with the findings described in \cite{Morita2020empirical} as well as \cite{Klein2023composition}.} Again, this can be clearly seen in Figure \ref{fig:AS_AD_NTimes}. Both lines shift to the right so that inflation necessarily increases. Intuitively, the usual public consumption effect increases marginal costs and thus inflation through the Phillips curve. On top of that, the expected wealth effect from the Euler equation acts as an expansionary demand shock, which also pushes inflation upwards.  

\begin{figure}[htp]
\centering
\caption{Short run effects of government investment}
\includegraphics[width=0.8\textwidth]{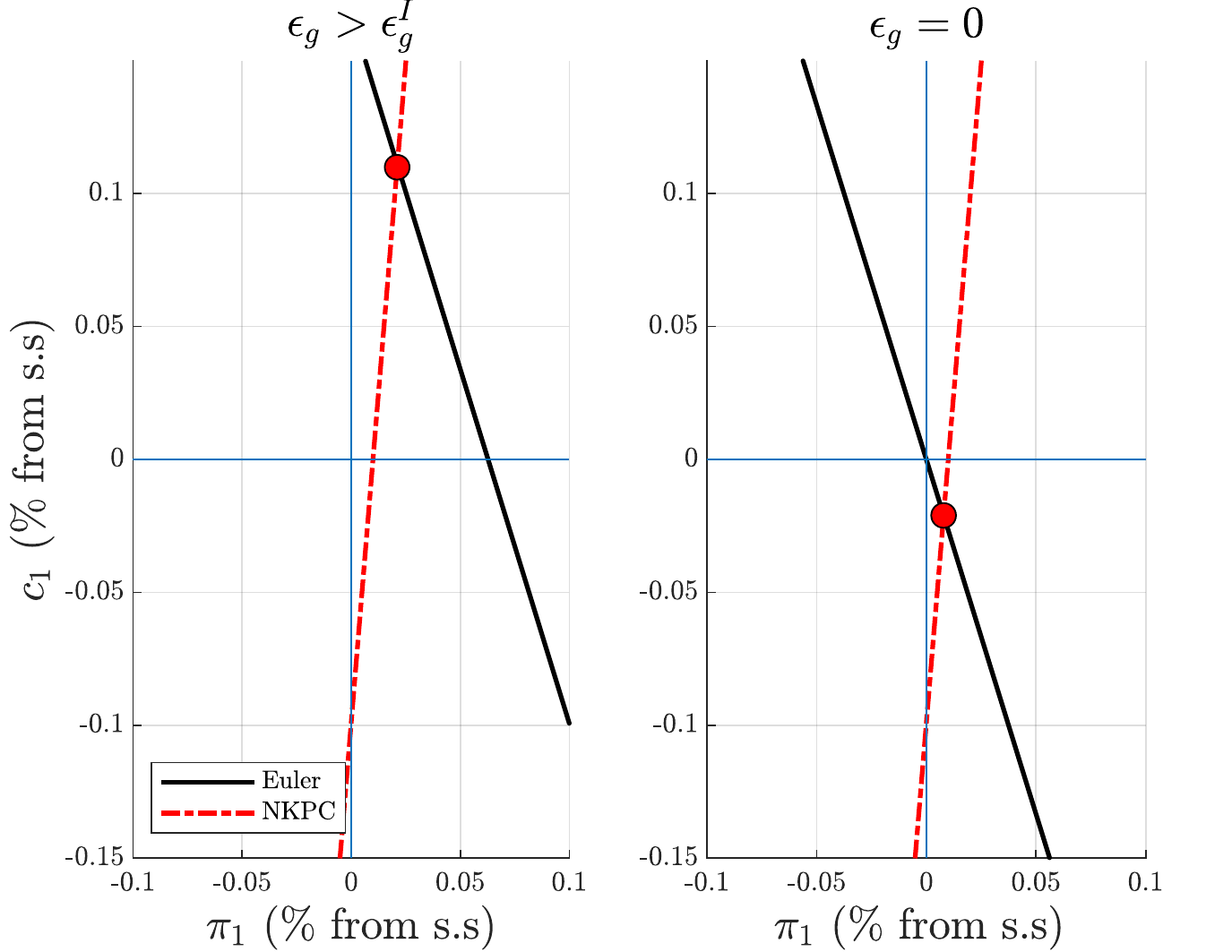}
\begin{minipage}{0.8\textwidth}
\protect\footnotesize Notes: \protect\scriptsize
We use the following calibration: $\beta=0.99, \kappa=0.1, \eta=0.01, \phi_\pi=1.5, q = 0.95$ as well as $p=0.7$.
\end{minipage}
\label{fig:AS_AD_NTimes}
\end{figure}

The tools that we have developed enable us to take a model that has usually been studied numerically and compute analytical results. For example, in a working paper version of \cite{Bouakez2017} (see \cite{Bouakezpublic}), the authors follow \cite{Linnemann2006} and solve a version of their model where public \textit{investment} (and not capital) appears in the production function. In their case, the presence of productive public spending can be found to reverse the direction of the shift in the Phillips curve. Our results show that this is not a good approximation of what is actually happening in the model with capital, as the latter manifests itself exclusively as a shift of the Euler equation. 

The point of Proposition \ref{prop:pi_1_NTimes} is not to argue that the scenario on the left panel is more likely than the one on the right. In this model, the configuration on the left requires a very specific set of parameters in the context of the standard NK model. This may however be different in a model that has amplification or dampening due to the presence of hand to mouth agents as in \cite{Bilbiie2020new}. Rather, the goal of Proposition \ref{prop:pi_1_NTimes} is to clearly delineate the conditions under which public investment can crowd in private consumption so that these can be checked in any variant of the standard New Keynesian model.

However, this is only the impact effect. As a result, it does not do full justice to the effects of public investment because the stock of public capital will stay elevated long past the first period. The cumulative present discount value (PDV) multiplier might then be a better measure of the overall impact of government spending then. Fortunately, our framework lends itself naturally to a simple computation of the cumulative PDV effects of government investment. Indeed, given our assumption that the ELB is never binding, we can still use the formulas derived in \cite{Roulleau2023analyzing}. The properties of the PDV multiplier are exposed in the following proposition. 
\begin{prop}
\label{prop:Mc_NTimes}
Let us define $\Theta$ such that $c_2+\pi_2 = \Theta \epsilon_g\Tilde{\delta}g/(1-p)$. The cumulative present discount value multiplier is then given by:
\begin{align}
\nonumber
\mathfrak{M}_c &= \frac{\partial c_1}{\partial g} + \beta\frac{1- p}{1-\beta q}\frac{\partial c_2}{\partial g}\\
&= \frac{\Theta\tilde{\delta}\epsilon_g-\kappa(\phi_\pi-p)\Gamma_g}{\det(I-p\mathbf{A})}+\frac{\beta}{1-\beta q}\Theta_{c,k}\tilde{\delta}\epsilon_g.
\label{eq:PDV_c}
\end{align}
It follows that there exists a threshold $\epsilon^M_g$ such that $\mathfrak{M}_c>0$ if $\epsilon_g>\epsilon^M_g$. We also have:
\begin{align*}
\epsilon^M_g = \frac{\epsilon_g^I}{1+\frac{\beta}{1-\beta q}\frac{\phi_\pi-q}{\phi_\pi-1}\det(I-p\mathbf{A})}<\epsilon_g^I.    
\end{align*}
\end{prop}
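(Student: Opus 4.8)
The plan is to prove the proposition in three stages: derive the two-line formula for $\mathfrak{M}_c$, establish existence and uniqueness of the threshold $\epsilon^M_g$, and then obtain its closed form together with the comparison $\epsilon^M_g<\epsilon^I_g$. For the formula I would start from the Markov-chain impulse responses: with $\xi_1=0$ and $L=0$ the relevant chain is the three-state chain displayed above, so starting from the first state $\E_t\mathbf{C}_{t+n}=p^n c_1+\big(\sum_{k=0}^{n-1}p^k(1-p)q^{n-1-k}\big)c_2$ (the absorbing state contributes $0$). Multiplying by $\beta^n$ and summing the geometric and convolution series gives $\sum_{n\ge0}\beta^n\E_t\mathbf{C}_{t+n}=\tfrac{1}{1-\beta p}c_1+\tfrac{\beta(1-p)}{(1-\beta p)(1-\beta q)}c_2$; normalizing by the present-discounted value of the shock path, $\sum_{n\ge0}(\beta p)^n=\tfrac{1}{1-\beta p}$ (i.e. multiplying by $1-\beta p$) and differentiating in $g$ gives the first line of the proposition, which coincides with the PDV formula of \cite{Roulleau2023analyzing} since the ELB never binds here. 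To reach \eqref{eq:PDV_c} I would then substitute $\partial c_2/\partial g=\Theta_{c,k}\epsilon_g\tilde\delta/(1-p)$, read off from $c_2=\Theta_{c,k}\epsilon_g k$ with $k=\tilde\delta g/(1-p)$, and for $\partial c_1/\partial g$ take the first component of the matrix solution $c_1=\big[(I-p\mathbf{A})^{-1}C_1+(I-p\mathbf{A})^{-1}\mathbf{A}(I-q\mathbf{A})^{-1}B\tilde\delta\big]g$ derived above — equivalently solve \eqref{eq:c_1_Ntimes}--\eqref{eq:pi_1_Ntimes} after inserting $c_2+\pi_2=\Theta\epsilon_g\tilde\delta g/(1-p)$, the same computation underlying Proposition \ref{prop:pi_1_NTimes} — which yields $\partial c_1/\partial g=\big(\Theta\tilde\delta\epsilon_g-\kappa(\phi_\pi-p)\Gamma_g\big)/\det(I-p\mathbf{A})$ and hence \eqref{eq:PDV_c}.

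For the threshold, the point is that \eqref{eq:PDV_c} is affine in $\epsilon_g$: since $\Gamma_k$ is linear in $\epsilon_g$, the scalars $\Theta,\Theta_{c,k},\Theta_{\pi,k}$ are $\epsilon_g$-free, so $\mathfrak{M}_c=\epsilon_g\big(\tfrac{\Theta\tilde\delta}{\det(I-p\mathbf{A})}+\tfrac{\beta}{1-\beta q}\Theta_{c,k}\tilde\delta\big)-\tfrac{\kappa(\phi_\pi-p)\Gamma_g}{\det(I-p\mathbf{A})}$. Under $\phi_\pi>1$ one has $\det(I-p\mathbf{A})>0$ (the eigenvalues of $p\mathbf{A}$ lie within the unit circle by the \cite{Blanchard1980solution} condition, exactly as for $q\mathbf{A}$), $1-\beta q>0$, and by facts $(1)$--$(3)$ above $\Theta_{c,k}>0$ and $\Theta=\Theta_{c,k}+\Theta_{\pi,k}>0$; hence the slope in $\epsilon_g$ is strictly positive while the intercept $-\kappa(\phi_\pi-p)\Gamma_g/\det(I-p\mathbf{A})$ is strictly negative ($\phi_\pi>1>p$, $\kappa,\Gamma_g>0$). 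A strictly increasing affine function with negative intercept has a unique positive root, so $\mathfrak{M}_c>0$ precisely when $\epsilon_g$ exceeds $\epsilon^M_g=\kappa(\phi_\pi-p)\Gamma_g\big/\big(\Theta\tilde\delta+\tfrac{\beta}{1-\beta q}\Theta_{c,k}\tilde\delta\,\det(I-p\mathbf{A})\big)$.

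Finally, two identities turn this into the stated form. Since $\partial c_1/\partial g$ vanishes at $\epsilon_g=\epsilon^I_g$ (Proposition \ref{prop:pi_1_NTimes}), we have $\epsilon^I_g=\kappa(\phi_\pi-p)\Gamma_g/(\Theta\tilde\delta)$; and fact $(3)$, namely $\Theta_{\pi,k}=-\tfrac{1-q}{\phi_\pi-q}\Theta_{c,k}$, gives $\Theta=\Theta_{c,k}\tfrac{\phi_\pi-1}{\phi_\pi-q}$, i.e. $\Theta_{c,k}=\tfrac{\phi_\pi-q}{\phi_\pi-1}\Theta$. Substituting both into the expression for $\epsilon^M_g$ and factoring $\Theta\tilde\delta$ out of the denominator gives $\epsilon^M_g=\epsilon^I_g\big/\big(1+\tfrac{\beta}{1-\beta q}\tfrac{\phi_\pi-q}{\phi_\pi-1}\det(I-p\mathbf{A})\big)$, the stated formula; every factor of the added term being positive ($\phi_\pi>1$, $q\in(0,1)$, $\beta q<1$, $\det(I-p\mathbf{A})>0$), the denominator exceeds $1$, so $\epsilon^M_g<\epsilon^I_g$. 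The step needing the most care is the first one — keeping the normalization by the PDV of the shock straight and matching the scalar coefficient of $c_1$ to $\det(I-p\mathbf{A})$ (and its $q$-analogue) — together with flagging the $\epsilon_g$-linearity of $\Gamma_k$ in the second step so that $\mathfrak{M}_c$ is manifestly affine; the rest is short algebra.
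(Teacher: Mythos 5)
Your proof is correct and follows essentially the same route as the paper's: substitute $\partial c_1/\partial g$ from the proof of Proposition \ref{prop:pi_1_NTimes} and $\partial c_2/\partial g=\Theta_{c,k}\epsilon_g\tilde\delta/(1-p)$ into the PDV formula, then rearrange the positivity condition using $\epsilon_g^I=\kappa(\phi_\pi-p)\Gamma_g/(\Theta\tilde\delta)$ and $\Theta_{c,k}/\Theta=(\phi_\pi-q)/(\phi_\pi-1)$. The only difference is that you derive the discounting weights $1$ and $\beta(1-p)/(1-\beta q)$ directly from the three-state chain, a step the paper outsources to \cite{Roulleau2023analyzing}, and you make the sign conditions ($\det(I-p\mathbf{A})>0$, $\Theta,\Theta_{c,k}>0$) explicit where the paper's inequality chain leaves them implicit.
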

\begin{proof}
See Online Appendix \ref{sec:proof_Mc_NTimes}.
\end{proof}
Given that the first term on the right hand side of equation \eqref{eq:PDV_c} is $\partial c_1/\partial g_1$, the PDV multiplier adds a second term that depends positively on $\epsilon_g$. Intuitively, beyond its expected effects in the short run public capital has long lasting effects on the productive capacity of this economy. This generates a higher level of consumption that persists for a long time. It then follows that a lower value of $\epsilon_g$ is required to get a positive PDV effect. The second part of Proposition \ref{prop:Mc_NTimes} describes exactly how much lower this threshold has to be compared to the one introduced in Proposition \ref{prop:pi_1_NTimes}. Given that both $\beta$ and $q$ are usually very close to 1, the second term on the denominator of $\epsilon^M_g$ can be sizable. For most calibrations that we have tried,  $\epsilon^M_g$ is usually one order of magnitude lower compared to $\epsilon^I_g$.

The quantitative properties of this model are largely consistent with the existing studies on the topic such as \cite{Leeper2010}, \cite{Bouakez2017}, \cite{Ganelli2020welfare}, \cite{Sims2018output}, \cite{Ramey2020macroeconomic} and \cite{Klein2023composition}. Our contribution is to derive a clear threshold for consumption to be crowded in, both in terms of the impact and the present discount value multiplier. 

Ultimately however, as the results in \cite{Bachmann2012} exemplify, policymakers are usually interested in public investment as a stimulative tool in a recession, especially if the ELB binds.\footnote{Another metric of interest to policymakers is the welfare impact of public investment. See \cite{Sims2018output}, \cite{Ganelli2020welfare} and \cite{Bouakez2020optimal}.} This is where the new method that we develop in this paper will be useful. Accordingly, we now move on to describe how these tools can be used to study the effects of public investment at the ELB.

\section{Public Investment at the ELB}
\label{sec:public_investment_zlb}

To study the effects of public investment at the ELB, we cannot rely on the results in \cite{Roulleau2023analyzing} anymore. For a given value of $\xi_1$, the ELB will be binding for a number of $T\geq 1$ periods, which needs to be determined. Loosely speaking, we can make two assumptions here. If we assume a deterministic exit as in \cite{Guerrieri2014}, then we have $T=L$. If we assume a stochastic exit as in \cite{Eggertsson2010}, then we have $T=L+1$. It should be noted however that the framework in \cite{Eggertsson2010} implicitly assumes $L=0$ and does not allow for inertial exogenous states such as public capital. We will mostly focus on the case of stochastic exit and highlight how it compares to the case of deterministic exit whenever it is relevant. We make this choice because the stochastic exit will enable us to get simpler expressions for a short liquidity trap.

To compute such a solution, we will proceed with a guess and verify method: guess first that the ELB is binding for an unspecified number of periods $T$, then compute the solution according to this guess. Next, we use this solution to back out the magnitude of $\xi_1$ that is consistent with this value of $T$.

In contrast with both \cite{Erceg2014} as well as \cite{Guerrieri2014}, we are able to solve for the exact date in closed form. In fact, we only need to solve for the allocation upon exit to determine the necessary size of the shock\textemdash see the proof of Proposition \ref{prop:match_IRF} in Online Appendix \ref{sec:proof_match_IRF}. Once we have that in hand, we will work backward until the period where the shock hits. In doing so, the main difference compared to what we did in last section is that agents weigh expectations for next period with matrix $\mathbf{A}^*$ instead of $\mathbf{A}$: this encodes the fact that the Taylor rule is not active anymore at the ELB. Throughout this section, the eigenvalues of $p\mathbf{A}^*$ will be crucial in that they can potentially be larger than 1. This is described in the following proposition.
\begin{prop}
\label{prop:eigs}
Let us assume that $\kappa\Gamma_c>0$. Then $p\mathbf{A}^*$ has two eigenvalues. The first one is equal to zero and is thus stable. The other is strictly larger than 1 if, and only if
\begin{align*}
p>\frac{1}{1+\kappa\Gamma_c}\equiv \overline{p} .
\end{align*}
If that condition is met then we have $\det\left(I-p\mathbf{A}^*\right) = 1-p(1+\kappa\Gamma_c) < 0$. 
\end{prop}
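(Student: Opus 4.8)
The plan is to write down the matrix $\mathbf{A}^* = (\mathbf{A}^*_0)^{-1}\mathbf{A}^*_1$ explicitly from the model equations at the ELB, compute its characteristic polynomial, and read off the two eigenvalues. At the ELB the Taylor rule is passive, so $r_t = \log\beta$ is a constant and current inflation $\pi_t$ drops out of the Euler equation; the forward-looking block \eqref{eq:LRE_inertial_cstrt_fwd} then couples $c_t$ only to $\E_t c_{t+1}$ and $\E_t\pi_{t+1}$, while the Phillips curve \eqref{eq:ll_NKPC} still reads $\pi_t = \kappa\Gamma_c c_t + (\text{exogenous})$. Consequently $\mathbf{A}^*$, acting on $Y_t = [c_t,\pi_t]^\top$, has the schematic form in which the second row is zero (current inflation is pinned down by current consumption, not by expectations). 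First I would verify this structure carefully, so that $\mathbf{A}^*$ has rank one and hence one eigenvalue equal to $0$; this immediately gives the first (stable) eigenvalue.

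Next I would compute the nonzero eigenvalue as the trace of $\mathbf{A}^*$ (since the product of eigenvalues is $\det\mathbf{A}^*=0$, the remaining eigenvalue equals $\Tr\mathbf{A}^*$). The Euler equation contributes a unit coefficient on $\E_t c_{t+1}$ and a unit coefficient on $\E_t\pi_{t+1}$; substituting $\pi_{t+1} = \kappa\Gamma_c c_{t+1} + \dots$ into that term produces an effective coefficient $1+\kappa\Gamma_c$ on $\E_t c_{t+1}$ in the reduced one-dimensional dynamics for $c_t$. Hence $\Tr\mathbf{A}^* = 1+\kappa\Gamma_c$, so the nonzero eigenvalue of $p\mathbf{A}^*$ is $p(1+\kappa\Gamma_c)$. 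This is strictly larger than $1$ if and only if $p > 1/(1+\kappa\Gamma_c) = \overline p$, which is the claimed condition; the hypothesis $\kappa\Gamma_c>0$ is exactly what makes $\overline p \in (0,1)$ well defined and guarantees the eigenvalue is positive.

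Finally, for the determinant claim, I would compute $\det(I - p\mathbf{A}^*)$ directly as the product of $1$ minus the eigenvalues of $p\mathbf{A}^*$, i.e. $(1-0)\bigl(1 - p(1+\kappa\Gamma_c)\bigr) = 1 - p(1+\kappa\Gamma_c)$, which is negative precisely when $p>\overline p$. Alternatively one can expand the $2\times 2$ determinant from the explicit entries of $\mathbf{A}^*$ and confirm the same expression; I would include this as a cross-check.

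The only real obstacle is bookkeeping: pinning down the precise entries of $\mathbf{A}^*_0$, $\mathbf{A}^*_1$ from \eqref{eq:ll_EE}–\eqref{eq:ll_RC} with the passive Taylor rule, in particular making sure the normalization $\mathbf{A}^* = (\mathbf{A}^*_0)^{-1}\mathbf{A}^*_1$ is applied consistently with the earlier definition $\mathbf{A} = \mathbf{A}_0^{-1}\mathbf{A}_1$, and checking that $\mathbf{A}^*_0$ is indeed invertible so that the reduction goes through. Once the rank-one structure of $\mathbf{A}^*$ is established, everything else is the trace/determinant computation above and is routine.
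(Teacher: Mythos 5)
Your proposal is correct and follows essentially the same route as the paper's appendix, which writes $\mathbf{A}^*=(\mathbf{A}_0^*)^{-1}\mathbf{A}_1^*=\bigl[\begin{smallmatrix}1 & 1\\ \kappa\Gamma_c & \kappa\Gamma_c\end{smallmatrix}\bigr]$, reads off the eigenvalues $0$ and $1+\kappa\Gamma_c$ (your rank-one/trace argument is equivalent to its characteristic-polynomial computation), and then gets $\det(I-p\mathbf{A}^*)=1-p(1+\kappa\Gamma_c)$, whose sign change at $\overline{p}$ gives the last claim. One bookkeeping note: after premultiplying by $(\mathbf{A}_0^*)^{-1}$ the second row of $\mathbf{A}^*$ is $[\kappa\Gamma_c,\ \kappa\Gamma_c]$ rather than zero, but it is proportional to the first row, so the rank-one structure and everything downstream in your argument is unaffected.
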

\begin{proof}
See Online Appendix \ref{sec:proof_eigs}.
\end{proof}
First of all, note that $\overline{p}$ is derived in \cite{Eggertsson2010} to guarantee a unique minimum state variable solution at the ELB.\footnote{See also \cite{Bilbiie2019} for a derivation in the context of the static Phillips curve that we use here.} \cite{Eggertsson2010} uses a standard New Keynesian model where expected inflation appears in the Phillips curve, but the derivations underlying Proposition \ref{prop:eigs} carry through in this case.

If both shocks are such that $p<\overline{p}$, then $\det\left(I-p\mathbf{A}^*\right)>0$ and the Euler equation and the Phillips curve only cross once. In \cite{Mertens2014}, the fact that $p>\overline{p}$ guarantees that $\det\left(I-p\mathbf{A}^*\right)<0$ and these cross twice and sunspots can happen at the ELB. Proposition \ref{prop:eigs} then provides a connection between the methods just described and these used in \cite{Guerrieri2014}. Indeed, the method in \cite{Guerrieri2014} requires one to compute $\left(p\mathbf{A}^*\right)^L$ where $L$ is the number of periods at the ELB under deterministic exit. As a result, if one of these eigenvalues is larger than 1 then this matrix power will diverge as $L\to\infty$. Given that our method follows theirs in assuming perfect foresight at the ELB, then this will be very important. Note that the same threshold applies to $q\mathbf{A}^*$: if $q>\overline{p}$ then $\det\left(I-q\mathbf{A}^*\right)<0$ and one eigenvalue is unstable.

With this in mind, we will consider two types of situations. The first one will be a short duration of the liquidity trap where $L=0$. In that case, the ELB only binds on impact, but exit from the ELB is stochastic as in \cite{Eggertsson2010}. After that, we will consider a case where the preference shock is such that $L$ is large instead.

\subsection{A short-lived ELB episode}
\label{sec:short_lived}

The simplest way to study a short-lived liquidity trap is to assume both a stochastic exit as well as $L=0$. This is equivalent to what  has been used in the literature following \cite{Eggertsson2003} as well as \cite{Eggertsson2010}.\footnote{Alternatively, under a deterministic exit one has to consider at least $L=1$ which brings in extra effects where the stock of public capital starts to kick in during the one period at the liquidity trap.} The stochastic exit combined with $L=0$ as well as the one period time to build delay guarantees that the stock of public capital kicks in after the liquidity trap \textit{in expectations}. As such, we argue that it provides the best thought experiment to study how the timing of aggregate supply effects matters at the ELB. Under these assumptions, we can still use the same transition matrix as in Section \ref{sec:public_investment_normal_times}. When that is the case, equations \eqref{eq:c_1_Ntimes}-\eqref{eq:pi_1_Ntimes} now become:
\begin{align}
\label{eq:c_1_Short_lived}
c_1 &= \overbracket{pc_1 + (1-p)c_2}^{\E_1\mathbf{C}_{t+1}} +\overbracket{p\pi_1 + (1-p)\pi_2}^{\E_1\mathbf{\Pi}_{t+1}}-\log(\beta)-\xi\\
\pi_1 &= \kappa\left[\Gamma_c c_1 + \Gamma_g g\right].
\label{eq:pi_1_Short_lived}
\end{align}
As before, to gain some insight on how the contribution of public capital relates to the one made by public consumption, it is useful to rewrite these two equations in matrix form. Letting $Z_1 \equiv \left[g\ \xi\right]^\top$, we can rewrite equations \eqref{eq:c_1_Short_lived}-\eqref{eq:pi_1_Short_lived} as:
\begin{align*}
Y_1 &= p\mathbf{A}^*Y_1+(1-p)\mathbf{A}^*Y_2+\mathbf{C}^*Z_1 + E^*\\
&= \left[(I-p\mathbf{A}^*)^{-1}C_1^* + (I-p\mathbf{A}^*)^{-1}\mathbf{A}(I-q\mathbf{A})^{-1}B\Tilde{\delta}\right]g+\text{t.i.p}
\end{align*}
where the vector $E^*$ collects the constant term from the Euler equation and t.i.p denotes terms independent of policy which include the aforementioned constant as well as terms involving the demand shock $\xi$. This expression is very similar to the one derived in the last section, except that now we have matrix $\mathbf{A}^*$ appearing on the right hand side. Notice that regarding the medium term dynamics, it is still matrix $\mathbf{A}$ that appears given our assumption that the ELB only binds for one period. As before, the contributions of public consumption and public capital appear additively and can thus be studied separately. Notice also that nothing changes regarding how $c_2$ and $\pi_2$ relate to $k$ from the last section. The one feature that changes now is how these two feed into short term outcomes $c_1$ and $\pi_1$ through rational expectations. We describe how both consumption and inflation react in the short run after an increase in government investment in the following proposition:
\begin{prop}
\label{prop:c_1pi_1_short_lived}
We define $\Theta$ such that $c_2+\pi_2 = \Theta \epsilon_g\Tilde{\delta}g/(1-p)$. Given our assumption that the ELB does not bind in the medium run, then $\Theta\geq 0$. Then the short run effect of government investment on consumption and inflation are given by:
\begin{align*}
\frac{\partial c_1}{\partial g} &= \frac{\Theta\tilde{\delta}\epsilon_g+ p\kappa\Gamma_g}{\det (I-p\mathbf{A}^*) } \quad\&\quad
\frac{\partial \pi_1}{\partial g} = \kappa\frac{\Gamma_c\Theta\tilde{\delta}\epsilon_g + (1-p)\Gamma_g}{\det (I-p\mathbf{A}^*)},
\end{align*}
where the sign of the determinant is ambiguous. Regardless of the sign of the determinant, the effects of public capital/wasteful spending go in the same direction.  
\end{prop}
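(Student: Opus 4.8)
The plan is to solve the two-equation linear system \eqref{eq:c_1_Short_lived}--\eqref{eq:pi_1_Short_lived} in closed form and read off the derivatives with respect to $g$. First I would make the short-run operator explicit: substituting $r_1=\log(\beta)$ into the Euler equation and using the static Phillips curve with $k_1=0$, the system takes the form $\mathbf{A}_0^*Y_1=\mathbf{A}_1^*\E_1 Y_{t+1}+C^*_{0,1}g+E^*$ with $\mathbf{A}_0^*=\left[\begin{smallmatrix}1&0\\-\kappa\Gamma_c&1\end{smallmatrix}\right]$, $\mathbf{A}_1^*=\left[\begin{smallmatrix}1&1\\0&0\end{smallmatrix}\right]$ and $C^*_{0,1}=\left[\begin{smallmatrix}0\\\kappa\Gamma_g\end{smallmatrix}\right]$, so that $\mathbf{A}^*=\left(\mathbf{A}_0^*\right)^{-1}\mathbf{A}_1^*=\left[\begin{smallmatrix}1&1\\\kappa\Gamma_c&\kappa\Gamma_c\end{smallmatrix}\right]$, $C_1^*=\left(\mathbf{A}_0^*\right)^{-1}C^*_{0,1}=\left[\begin{smallmatrix}0\\\kappa\Gamma_g\end{smallmatrix}\right]$ and $\det\!\left(I-p\mathbf{A}^*\right)=1-p(1+\kappa\Gamma_c)$, in agreement with Proposition~\ref{prop:eigs}. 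Using $\E_1 Y_{t+1}=pY_1+(1-p)Y_2$, this rearranges to $\left(I-p\mathbf{A}^*\right)Y_1=(1-p)\mathbf{A}^*Y_2+C_1^*g+\text{t.i.p.}$

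The key simplification is that $\mathbf{A}^*$ is rank one, $\mathbf{A}^*=\left[\begin{smallmatrix}1\\\kappa\Gamma_c\end{smallmatrix}\right]\begin{pmatrix}1&1\end{pmatrix}$, so that $\mathbf{A}^*Y_2=(c_2+\pi_2)\left[\begin{smallmatrix}1\\\kappa\Gamma_c\end{smallmatrix}\right]$: only the sum of the medium-run variables feeds back into the short run. Since the medium-run block is unchanged relative to Section~\ref{sec:public_investment_normal_times} — the maintained assumption that the ELB does not bind in the medium run is precisely what allows $\mathbf{A}_m=\mathbf{A}$ to be carried over — I would reuse $c_2+\pi_2=\Theta\epsilon_g\tilde{\delta}g/(1-p)$, so that $(1-p)\mathbf{A}^*Y_2=\Theta\epsilon_g\tilde{\delta}g\left[\begin{smallmatrix}1\\\kappa\Gamma_c\end{smallmatrix}\right]$. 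Adding $C_1^*g$, the whole $g$-dependent forcing collapses to $g\left[\begin{smallmatrix}\Theta\epsilon_g\tilde{\delta}\\\kappa(\Gamma_c\Theta\epsilon_g\tilde{\delta}+\Gamma_g)\end{smallmatrix}\right]$. Inverting $I-p\mathbf{A}^*=\left[\begin{smallmatrix}1-p&-p\\-p\kappa\Gamma_c&1-p\kappa\Gamma_c\end{smallmatrix}\right]$ explicitly as $\frac{1}{\det(I-p\mathbf{A}^*)}\left[\begin{smallmatrix}1-p\kappa\Gamma_c&p\\p\kappa\Gamma_c&1-p\end{smallmatrix}\right]$, applying it, and differentiating in $g$ (the t.i.p.\ term drops out), the $p\kappa\Gamma_c\Theta\epsilon_g\tilde{\delta}$ cross-terms cancel in the first row, leaving $\partial c_1/\partial g=(\Theta\tilde{\delta}\epsilon_g+p\kappa\Gamma_g)/\det(I-p\mathbf{A}^*)$; in the second row the $p$- and $(1-p)$-weighted pieces recombine to $\kappa(\Gamma_c\Theta\tilde{\delta}\epsilon_g+(1-p)\Gamma_g)/\det(I-p\mathbf{A}^*)$, as claimed.

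For the sign statements I would argue that $\Theta\ge 0$ is inherited from the three facts already established for the medium-run coefficients: with monetary policy active in the medium run, $\det(I-q\mathbf{A})>0$, $\Theta_{c,k}>0$, $\Theta_{\pi,k}<0$ and $\Theta_{c,k}=-\frac{\phi_\pi-q}{1-q}\Theta_{\pi,k}$, hence $\Theta=\Theta_{c,k}+\Theta_{\pi,k}=\Theta_{c,k}\frac{\phi_\pi-1}{\phi_\pi-q}\ge 0$ under the Taylor principle. Given $\Theta\ge 0$ and the maintained positivity of $\kappa,\Gamma_c,\Gamma_g,\tilde{\delta}$, of $p\in(0,1)$ and of $\epsilon_g\ge 0$, both numerators are nonnegative, so the only sign ambiguity is carried by the common denominator $\det(I-p\mathbf{A}^*)$, which by Proposition~\ref{prop:eigs} switches sign as $p$ crosses $\overline{p}$. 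Within each numerator the public-capital contribution (the $\epsilon_g$ term) and the wasteful-spending contribution (the $\Gamma_g$ term) enter additively with coefficients of the same sign and are scaled by the same factor, so they always move $c_1$ — and likewise $\pi_1$ — in the same direction, whatever direction the determinant dictates.

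The main obstacle is not conceptual: the result is essentially two-by-two linear algebra, and the real work is bookkeeping — pinning down $\mathbf{A}^*$, $C_1^*$ and the $(1-p)$ factor relating $k$, $g$ and $c_2+\pi_2$ consistently with the conventions of Section~\ref{sec:public_capital}, and spotting the cancellation of the $p\kappa\Gamma_c\Theta\tilde{\delta}\epsilon_g$ terms that produces the clean final expressions. The one point to state carefully is that ``the ELB does not bind in the medium run'' is exactly what licenses reusing the normal-times medium-run block, and hence the sign $\Theta\ge 0$ together with the fact that the short-run dynamics never get to offset the medium-run aggregate-supply channel.
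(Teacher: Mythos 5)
Your proposal is correct and in substance the same as the paper's proof: the paper solves the same two-equation short-run system by scalar substitution (collecting the $c_1$ terms to get $c_1=c_2+\pi_2+\frac{p}{1-p}\pi_1$, then substituting the Phillips curve), which is exactly the computation your rank-one observation $\mathbf{A}^*Y_2=(c_2+\pi_2)\bigl[1,\ \kappa\Gamma_c\bigr]^\top$ and explicit inversion of $I-p\mathbf{A}^*$ reproduce, with the same cancellation of the $p\kappa\Gamma_c\Theta\tilde{\delta}\epsilon_g$ cross-terms. Your added verification that $\Theta=\Theta_{c,k}+\Theta_{\pi,k}=\Theta_{c,k}\frac{\phi_\pi-1}{\phi_\pi-q}\geq 0$ is a correct detail that the paper's appendix asserts but does not spell out.
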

\begin{proof}
See Online Appendix \ref{sec:proof_c_1pi_1_short_lived}.
\end{proof}
The main takeaway from Proposition \ref{prop:c_1pi_1_short_lived} is that regardless of the sign of the determinant, the contribution of public capital goes in the same direction as the one from public consumption. This is in contrast with what we had before where the effects on consumption were ambiguous outside of the ELB depending on whether public capital was sufficiently productive or not.

Let us first consider the case where $p<\overline{p}$ so that both eigenvalues are stable. In that case, we have $\det (I-p\mathbf{A}^*)>0$. Given the restriction on $p$, the ELB is the result of a shock that is relatively large in magnitude but not too persistent. In this context, consumption increases on impact at the ELB after an increase in government investment. What's more, given that $\Theta$ is an increasing function of $\epsilon_g$, the more productive public capital the bigger the effect on private consumption. This is the expected wealth effect described in \cite{Bouakez2017}. In their numerical experiments, they ensure that the productive effects of public capital kick in due to a time to build delay. In our framework, we achieve this by having a liquidity trap of a short duration. As in \cite{Christiano2011}, the effect of wasteful spending is also positive in that situation: both components of government investment generate an increase in aggregate demand that pushes up both consumption and inflation in this situation.

In this scenario, considering the cumulative PDV multiplier does not bring in new insights: both the effects from public consumption as well as public capital go in the same direction. As a result, the cumulative PDV
multiplier will necessarily be of the same sign and larger in magnitude. 

The advantage of our framework is that we can also encompass the case where one of the eigenvalues of $p\mathbf{A}^*$ is unstable. To the best of our knowledge, we are the first to consider the effects of government investment in this kind of situation. 
In contrast with \cite{Mertens2014} however, we do not need to invoke an additional sunspot shock for the ELB to bind: the ELB still binds because of a decrease in the preference shock. The one difference with the situation that we have just covered is that the expected return to the steady state is now \textit{slower}. More precisely, the increased persistence of the shock makes the income effects in the Euler equation more potent, which is what generates the different results that have been reported in the literature \textemdash see in particular \cite{Bilbiie2022neo} as well as \cite{Nakata2022expectations}. In our case, the fact that $\det (I-p\mathbf{A}^*)<0$ if $p$ is large enough implies that \textit{both} public consumption as well as public capital work to make consumption decrease in the short run. 

Our results put into perspective the results of \cite{Mertens2014} or more recently \cite{Bilbiie2022neo} and \cite{Nakata2022expectations}: the presence of sunspots does not matter \textit{per se}. What does matter is the kind of mechanisms that the sunspots bring to the table. In our case, we bring these mechanisms via a more persistent fundamental shock. As a result, at a conceptual level the source of the ELB is not structural. What \textit{is} structural is the expected persistence of the recession dynamics and the mechanisms that it carries with it. Taken at face value then, the model says that if both shocks are persistent enough, government investment makes both inflation and consumption decrease at the ELB.

Compared to the case where $\det(I-p\mathbf{A}^*)>0$, when $p$ is large enough so that this condition does not hold the cumulative PDV multiplier can potentially bring non-trivial dynamics into play. In that case, it is no longer a given that the cumulative effects go in the same direction as the ones stemming from public consumption. The next proposition describes under which conditions these two scenarios arise. 

\begin{prop}
\label{prop:PDV_long_trap}
Assume that $p$ is such that $\det(I-p\mathbf{A}^*)<0$. If     
\begin{align*}
\beta\frac{\phi_\pi-q}{1-\beta q}>-\frac{\phi_\pi-1}{\det(I-p\mathbf{A}^*)}>0    
\end{align*}
then there exists a threshold $\epsilon^{M,z}_g$ such that if $\epsilon_g>\epsilon^{M,z}_g$ then $\mathfrak{M}_c>0$.
Otherwise, $\mathfrak{M}_c<\partial c_1/\partial g<0$.
\end{prop}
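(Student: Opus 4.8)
The plan is to assemble the present-value multiplier $\mathfrak{M}_c$ for the short-lived trap from pieces that are already in hand and then reduce the sign question to an elementary one-variable inequality. Because the transition matrix and the (unconstrained) medium-run block of Definition~\ref{ddef:MChains_inertial} are exactly as in Section~\ref{sec:public_investment_normal_times} — only the constant and $\xi$-dependent parts of the short-run block change once the ELB binds — the present-value weighting of Proposition~\ref{prop:Mc_NTimes} transfers unchanged, so $\mathfrak{M}_c=\partial c_1/\partial g+\frac{\beta(1-p)}{1-\beta q}\,\partial c_2/\partial g$. Into this I would substitute $\partial c_1/\partial g$ from Proposition~\ref{prop:c_1pi_1_short_lived}, the medium-run response $\partial c_2/\partial g=\Theta_{c,k}\epsilon_g\tilde{\delta}/(1-p)$ from Section~\ref{sec:public_investment_normal_times}, and the proportionality $\Theta_{c,k}=\frac{\phi_\pi-q}{\phi_\pi-1}\Theta$ (immediate from fact (3) there and $\Theta=\Theta_{c,k}+\Theta_{\pi,k}$), to obtain
\begin{align*}
\mathfrak{M}_c &= \frac{\Theta\tilde{\delta}\epsilon_g + p\kappa\Gamma_g}{\det(I-p\mathbf{A}^*)} + \frac{\beta}{1-\beta q}\cdot\frac{\phi_\pi-q}{\phi_\pi-1}\cdot\Theta\tilde{\delta}\epsilon_g \\
&= \frac{\Theta\tilde{\delta}\epsilon_g\left(1+\frac{\beta(\phi_\pi-q)}{(1-\beta q)(\phi_\pi-1)}\det(I-p\mathbf{A}^*)\right)+p\kappa\Gamma_g}{\det(I-p\mathbf{A}^*)}.
\end{align*}

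The next step is purely about signs. Under the maintained hypothesis $\det(I-p\mathbf{A}^*)<0$, the sign of $\mathfrak{M}_c$ is the opposite of the sign of the numerator above. The algebraic core of the proof is that the inequality displayed in the statement is nothing but a sign condition on the coefficient of $\Theta\tilde{\delta}\epsilon_g$ in that numerator: multiplying $\beta\frac{\phi_\pi-q}{1-\beta q}>-\frac{\phi_\pi-1}{\det(I-p\mathbf{A}^*)}$ through by $\det(I-p\mathbf{A}^*)/(\phi_\pi-1)$, which is negative since $\phi_\pi>1$, and reversing the inequality turns it into $1+\frac{\beta(\phi_\pi-q)}{(1-\beta q)(\phi_\pi-1)}\det(I-p\mathbf{A}^*)<0$; the ``$>0$'' half of the displayed condition is automatic because $\phi_\pi>1$ and $\det(I-p\mathbf{A}^*)<0$.

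Granting the condition, the numerator is $p\kappa\Gamma_g>0$ plus a strictly negative constant times $\Theta\tilde{\delta}\epsilon_g$. Since $\Theta\ge 0$ (Proposition~\ref{prop:c_1pi_1_short_lived}, the trap being short-lived) and $\epsilon_g\mapsto\Theta\tilde{\delta}\epsilon_g$ is continuous, nonnegative, nondecreasing, vanishes at $\epsilon_g=0$ and grows without bound, the numerator is a continuous, nonincreasing function of $\epsilon_g$ that is strictly positive at $\epsilon_g=0$ and tends to $-\infty$; setting $\epsilon^{M,z}_g=\sup\{\epsilon_g\ge 0:\ \text{numerator}\ge 0\}$ gives a positive threshold, and for $\epsilon_g>\epsilon^{M,z}_g$ the numerator is negative, hence $\mathfrak{M}_c>0$. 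If instead the condition fails, the coefficient of $\Theta\tilde{\delta}\epsilon_g$ is $\ge 0$, so the numerator is $\ge p\kappa\Gamma_g>0$ for every $\epsilon_g\ge 0$; therefore $\mathfrak{M}_c<0$, and since the numerator of $\partial c_1/\partial g$ is likewise $\ge p\kappa\Gamma_g>0$ over the same negative determinant, $\partial c_1/\partial g<0$ as well. The ordering of $\mathfrak{M}_c$ and $\partial c_1/\partial g$ is then read off from the sign of their difference, which by construction equals the discounted medium-run supply term $\frac{\beta(1-p)}{1-\beta q}\,\partial c_2/\partial g$.

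I expect the only real obstacle to be book-keeping: carrying the sign of $\det(I-p\mathbf{A}^*)$, and of the compound coefficient $1+\frac{\beta(\phi_\pi-q)}{(1-\beta q)(\phi_\pi-1)}\det(I-p\mathbf{A}^*)$, correctly through each multiplication, together with the (harmless but necessary) verification that the present-value weighting of Proposition~\ref{prop:Mc_NTimes} indeed carries over to the stochastic-exit trap. Everything else is a one-line monotonicity / intermediate-value argument.
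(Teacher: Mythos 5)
Your decomposition and threshold argument coincide with the paper's own proof: it likewise writes $\mathfrak{M}_c=\partial c_1/\partial g+\frac{\beta}{1-\beta q}\Theta_{c,k}\tilde{\delta}\epsilon_g$ with $\partial c_1/\partial g$ taken from Proposition \ref{prop:c_1pi_1_short_lived}, uses $\Theta_{c,k}/\Theta=(\phi_\pi-q)/(\phi_\pi-1)$ to show the displayed inequality is exactly the condition for the $\epsilon_g$-coefficient to make $\mathfrak{M}_c$ increasing in $\epsilon_g$ while $\mathfrak{M}_c|_{\epsilon_g=0}=p\kappa\Gamma_g/\det(I-p\mathbf{A}^*)<0$, and concludes the threshold $\epsilon^{M,z}_g$ exists. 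Your common-denominator repackaging and the sign bookkeeping behind it (multiplying the condition by $\det(I-p\mathbf{A}^*)/(\phi_\pi-1)<0$) are correct, modulo the innocuous observation that $\Theta>0$ strictly (it is proportional to $\phi_\pi-1$) so the affine function is genuinely monotone.

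The gap is the final clause. You say the ordering of $\mathfrak{M}_c$ and $\partial c_1/\partial g$ ``is read off from the sign of their difference, which by construction equals $\frac{\beta(1-p)}{1-\beta q}\,\partial c_2/\partial g$,'' but you never read it off\textemdash and if you do, it goes the wrong way. In the short-lived trap the medium-run state is unconstrained, so $\partial c_2/\partial g=\Theta_{c,k}\tilde{\delta}\epsilon_g/(1-p)$ with $\Theta_{c,k}=\kappa(\phi_\pi-q)(1+\eta)/\det(I-q\mathbf{A})>0$; the difference is therefore strictly positive for $\epsilon_g>0$, and your own decomposition yields $\mathfrak{M}_c>\partial c_1/\partial g$, not $\mathfrak{M}_c<\partial c_1/\partial g$ as the statement asserts. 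What your argument actually delivers in the ``otherwise'' case is $\mathfrak{M}_c\le p\kappa\Gamma_g/\det(I-p\mathbf{A}^*)<0$ together with $\partial c_1/\partial g<0$, i.e.\ the PDV multiplier lies below the $\epsilon_g=0$ (wasteful-spending) impact multiplier, which is the comparison the surrounding text describes. To be fair, the paper's appendix proof also asserts ``$\mathfrak{M}_c<\partial c_1/\partial g<0$'' with no supporting step, so your hedge mirrors the paper; but as a proof of the proposition as written, the last inequality is neither established by you nor obtainable from the decomposition when $\Theta_{c,k}>0$, so you should either prove the weaker (correct) comparison explicitly or flag the stated ordering as problematic rather than leave it to ``read off.''
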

\begin{proof}
See Online Appendix \ref{sec:proof_PDV_long_trap}.  
\end{proof}
The first condition in Proposition \ref{prop:PDV_long_trap} ensures that the expansionary medium run effects of public capital dominate the short run contractionary ones. Given the analysis in \cite{Mertens2014} as well as \cite{Bilbiie2022neo}, the expected wealth effect is contractionary if $p$ is above the threshold for stable eigenvalues: the Euler equation slopes more than the Phillips curve at the ELB and an upward shift of the Euler equation decreases consumption. When that is the case, there is a scope for the medium to long run effects of public capital to compensate for these. If the condition spelled out in Proposition \ref{prop:PDV_long_trap} doesn't hold, then the presence of public capital only contribute negatively: in this case, the cumulative PDV multiplier is strictly lower than the (negative) impact multiplier.

The natural question is then: which case is more likely in practice? Given that public capital depreciates slowly so that $q$ is close to 1, both terms on the numerators are roughly equal. For the same reason, the term $\beta/(1-\beta q)$ is likely quite large given that $\beta\simeq 1$ in most calibrations.\footnote{This depends on the particular choice of model though. If one chooses to study a model as in \cite{Michaillat2019resolving}, which uses a low value of $\beta$ to get stable dynamics at the ELB, then the conclusion that can be drawn from Proposition \ref{prop:PDV_long_trap} may be very different. 
 The goal of this paper is to make these features as transparent as possible.} As a result, unless one works with a calibration where $\det(I-p\mathbf{A}^*)$ is both negative
\textit{and} quite close to zero, then the medium run effects most likely dominate. 

As a result, if the relevant metric is the cumulative PDV multiplier, then public investment can potentially answer the challenge laid out in \cite{Bilbiie2022neo}'s conclusion: he does not find any policy that is a good tool to stimulate an economy at the ELB regardless of the sign of $\det(I-p\mathbf{A}^*)$. Our findings echo his conclusion if we constrain ourselves to the impact multiplier. If we broaden the horizon to consider the cumulative PDV effects however, we find that there is potentially a scope for public investment to be an effective stabilization tool at the ELB.

It should be kept in mind that these conclusions are valid under the maintained assumption of a short lived ELB episode. That being said, this assumption may be at odds with the experience of the Japanese economy where expectations of zero interest rates are quite entrenched. The U.S and Eurozone countries have also spent multiple consecutive years at the ELB. Accordingly, we now turn our attention to another polar case where the shock is assumed to be large and not too persistent so that the ELB binds for several periods. 
We will show that, once again, our framework is amenable to simple analytical results.

\subsection{A long-lived ELB episode}
\label{sec:long_lived}

We now go back to our backward solution of the model. As before, we start from the second to last vector of states given by $Y_{L+2} =   \left(I-q\mathbf{A}\right)^{-1}Bk_{L+2}$
where we have used $\mathbf{A}$ because the ELB is not binding in this state. Solving for $Y_{L+1}$ is again far from straightforward and merits some discussion. Once again, we can make two choices. If we assume that the ELB is not binding, then we effectively replicate the algorithm developed in \cite{Guerrieri2014}. Instead, we can assume that the ELB is still binding in state $Y_{L+1}$. As we have seen before, this nests the framework in \cite{Eggertsson2010} in the case $L=0$. Our method can be seen as a generalization of the latter framework that allows for a generic $L$. In our case, whether we assume a deterministic or a stochastic exit will turn out to be inconsequential. To guarantee that this happens, we will rely on two assumptions:
\begin{Ass}
\label{Ass:eigenvalues}
All the eigenvalues of $p\mathbf{A}^*$ as well as $q\mathbf{A}^*$ are in the unit circle.         
\end{Ass}
\begin{Ass}
\label{Ass:big_shock}
The initial shock $\xi_1$ is arbitrarily large such that the duration of the ELB is $T\to\infty$.
\end{Ass}
We will build on the results in \cite{Erceg2014labor} or \cite{Giannoni2023forward} and assume a high degree of price stickiness to guarantee that the first assumption holds. The second assumption is meant to be an approximation: if the duration of the ELB is relatively large, then we can effectively take $T\to\infty$. With this in mind, the main result of this subsection is that under these two assumptions we can derive an "\textit{as if}" result: even though the dynamics of this economy are given by a Markov chain of size $L+3$ that tends to infinity, these are identical to the ones from a suitably defined three-state Markov chain. This is described in detail in the following proposition. 

\begin{prop}
\label{prop:long-lived_trap}
Assume that assumption \ref{Ass:eigenvalues} holds. Then there exists a unique value of $\xi_1$ such that the ELB binds for exactly $T=L+1$ periods in expectations. The vector of impact states can be written as:
\begin{align*}
Y_1 = \left[(I-p\mathbf{A}^*)^{-1}C_1^* + \pazocal{Q}(L,\theta)\right]g+\text{t.i.p},    
\end{align*}    
where, as before, $\theta$ collects all the structural parameters of the model. Furthermore, regardless of the nature of exit from the liquidity trap we have:
\begin{align*}
\lim_{L\to\infty} \pazocal{Q}(L,\theta) = (I-p\mathbf{A}^*)^{-1}\mathbf{A^*}(I-q\mathbf{A}^*)^{-1}B^*\Tilde{\delta}   
\end{align*}
\end{prop}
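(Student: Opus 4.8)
The plan is to solve the model backward along the $(L+3)$-state chain of Definition~\ref{ddef:MChains_inertial}, collapse the finite sums that appear into resolvent form, and then let $L\to\infty$ using Assumption~\ref{Ass:eigenvalues}. Existence and uniqueness of the $\xi_1$ for which the ELB binds for exactly $T=L+1$ periods in expectations is obtained exactly as in the proof of Proposition~\ref{prop:match_IRF}: conditional on the guess $T=L+1$, the backward solution makes the notional policy rate in the last ELB state $L+1$ an affine and monotone function of $\xi_1$ (entering through $\xi_{L+1}=p^{L}\xi_1$), so the requirement that the ELB just binds in state $L+1$ but is slack in state $L+2$ pins $\xi_1$ down; since only the allocation upon exit enters this condition, I would simply cite that proof and devote the argument to the form of $Y_1$.

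For the substantive part I would first record the backward recursion. In state $L+2$ the ELB is slack, so $Y_{L+2}=(I-q\mathbf{A})^{-1}Bk_{L+2}$ as in \eqref{eq:YLp2_kLp2}. In the persistent ELB state $L+1$, using $\E_{L+1}\mathbf{Y}_{t+1}=pY_{L+1}+(1-p)Y_{L+2}$ together with the binding system \eqref{eq:LRE_inertial_cstrt_fwd},
\[
Y_{L+1}=(I-p\mathbf{A}^{*})^{-1}\bigl[(1-p)\mathbf{A}^{*}Y_{L+2}+B^{*}k_{L+1}+C_1^{*}g_{L+1}\bigr]+\text{t.i.p},
\]
and in the deterministic ELB states $\ell=L,\dots,1$, where $\E_{\ell}\mathbf{Y}_{t+1}=Y_{\ell+1}$, one has $Y_{\ell}=\mathbf{A}^{*}Y_{\ell+1}+B^{*}k_{\ell}+C_1^{*}g_{\ell}+\text{t.i.p}$. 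Unrolling gives
\[
Y_1=(\mathbf{A}^{*})^{L}Y_{L+1}+\sum_{\ell=1}^{L}(\mathbf{A}^{*})^{\ell-1}\bigl(B^{*}k_{\ell}+C_1^{*}g_{\ell}\bigr)+\text{t.i.p}.
\]
I would then substitute the exogenous-chain values $g_{\ell}=p^{\ell-1}g$ and, crucially, the $ARMA(2,1)$ capital profile $\partial k_{\ell}/\partial g=\tilde{\delta}\,(p^{\ell-1}-q^{\ell-1})/(p-q)$ (with $k_1=0$), differentiate in $g$, and split the coefficient of $g$ into the wasteful-spending block (the $C_1^{*}$ terms) and the public-capital block $\pazocal{Q}(L,\theta)$ (the $B^{*}$ terms).

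The wasteful-spending block telescopes exactly: since $\sum_{j=0}^{L-1}(p\mathbf{A}^{*})^{j}+(p\mathbf{A}^{*})^{L}(I-p\mathbf{A}^{*})^{-1}=(I-p\mathbf{A}^{*})^{-1}$, it equals $(I-p\mathbf{A}^{*})^{-1}C_1^{*}$ for every $L$, which is the first term in the statement. For the public-capital block the products recombine as $(\mathbf{A}^{*})^{j}(p^{j}-q^{j})=(p\mathbf{A}^{*})^{j}-(q\mathbf{A}^{*})^{j}$, so
\[
\pazocal{Q}(L,\theta)=\frac{\tilde{\delta}}{p-q}\sum_{j=0}^{L-1}\bigl[(p\mathbf{A}^{*})^{j}-(q\mathbf{A}^{*})^{j}\bigr]B^{*}+R_L,
\]
where $R_L$ collects the state-$(L+1)$ boundary terms; it is $(\mathbf{A}^{*})^{L}$ times $\partial k_{L+1}/\partial g$ and $\partial k_{L+2}/\partial g$ multiplied by fixed bounded matrices, and using $(\mathbf{A}^{*})^{L}(p^{L}-q^{L})=(p\mathbf{A}^{*})^{L}-(q\mathbf{A}^{*})^{L}$ (and similarly for $p^{L+1}-q^{L+1}$) it is a combination of $(p\mathbf{A}^{*})^{L}$ and $(q\mathbf{A}^{*})^{L}$ with bounded coefficients. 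Under Assumption~\ref{Ass:eigenvalues} the eigenvalues of $p\mathbf{A}^{*}$ and $q\mathbf{A}^{*}$ lie inside the unit circle, so $R_L\to0$, the geometric sums converge to $(I-p\mathbf{A}^{*})^{-1}$ and $(I-q\mathbf{A}^{*})^{-1}$ (well defined since $1$ is an eigenvalue of neither matrix), and the resolvent identity $(I-p\mathbf{A}^{*})^{-1}-(I-q\mathbf{A}^{*})^{-1}=(p-q)(I-p\mathbf{A}^{*})^{-1}\mathbf{A}^{*}(I-q\mathbf{A}^{*})^{-1}$ delivers $\lim_{L\to\infty}\pazocal{Q}(L,\theta)=(I-p\mathbf{A}^{*})^{-1}\mathbf{A}^{*}(I-q\mathbf{A}^{*})^{-1}B^{*}\tilde{\delta}$. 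For a deterministic exit only $R_L$ changes — state $L+1$ becomes slack, so $\mathbf{A}^{*},B^{*}$ and $(I-p\mathbf{A}^{*})^{-1}$ there are replaced by $\mathbf{A},B$ and the corresponding non-ELB inverse — but $R_L$ is still $(\mathbf{A}^{*})^{L}$ times a bounded matrix carrying a factor $O(\max(p,q)^{L})$, hence still vanishes, while the interior sum is unchanged; so the limit is the same regardless of the nature of exit.

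I expect the main obstacle to be the bookkeeping around public capital: one must pin down the $ARMA(2,1)$ coefficients exactly so that each $(\mathbf{A}^{*})^{j}\partial k_{j+1}/\partial g$ factors through the two matrices $p\mathbf{A}^{*}$ and $q\mathbf{A}^{*}$ whose spectra are precisely what Assumption~\ref{Ass:eigenvalues} controls, and one must verify that the state-$(L+1)$ boundary block is $O(\rho^{L})$ with $\rho=\max(\text{spectral radii})<1$ uniformly, irrespective of whether exit is stochastic or deterministic — this is where the conceptual content sits, since the limit ends up ``forgetting'' the non-ELB exit state entirely. The non-generic case $p=q$, where $(p^{j}-q^{j})/(p-q)$ is replaced by $jp^{j-1}$ and the limit by $(I-p\mathbf{A}^{*})^{-2}\mathbf{A}^{*}B^{*}\tilde{\delta}$, and the invertibility of $I-q\mathbf{A}^{*}$, are minor points dispatched by continuity in $(p,q)$ and by the eigenvalue assumption respectively.
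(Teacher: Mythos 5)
Your proposal is correct and follows essentially the same route as the paper's proof: solve backward from the slack state $L+2$ through the binding states, unroll the perfect-foresight recursion to get $Y_1=(\mathbf{A}^*)^LY_{L+1}+\sum_{j=0}^{L-1}(\mathbf{A}^*)^{j}(B^*k_{j+1}+C_1^*g_{j+1})+\text{t.i.p}$, telescope the $C_1^*$ block to the $L$-independent term $(I-p\mathbf{A}^*)^{-1}C_1^*$, and use Assumption \ref{Ass:eigenvalues} to kill the $(p\mathbf{A}^*)^L$ and $(q\mathbf{A}^*)^L$ boundary terms before applying the resolvent identity $(I-p\mathbf{A}^*)^{-1}-(I-q\mathbf{A}^*)^{-1}=(p-q)(I-p\mathbf{A}^*)^{-1}\mathbf{A}^*(I-q\mathbf{A}^*)^{-1}$, exactly as in Online Appendix \ref{sec:proof_long-lived_trap} (with uniqueness of $\xi_1$ delegated, as in the paper, to the affine exit condition from the proof of Proposition \ref{prop:match_IRF}). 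The only differences are presentational (you keep an explicit remainder $R_L$ instead of writing the closed geometric-sum forms directly, and you spell out the deterministic-exit and $p=q$ cases), so this matches the paper's argument.
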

\begin{proof}
See Online Appendix \ref{sec:proof_long-lived_trap}.
\end{proof}
Two main results emerge from Proposition \ref{prop:long-lived_trap}. The first one is that the component associated with wasteful spending is independent\footnote{This echoes the results obtained in \cite{Carlstrom2015inflation} for an exogenous interest rate peg in the standard New Keynesian model: they find that the initial value of inflation does not depend on $L$. Our result is more general in that it does not depend on the assumption of an exogenous peg and holds for an endogenous ELB episode.} of the duration $L$. In contrast, the component associated with public capital explicitly depends on $L$. As a result, one can use the expression from Proposition \ref{prop:long-lived_trap} to study how the overall multiplier varies with $L$. It should be noted that this expression (detailed in the Online Appendix) is also valid even if Assumption \ref{Ass:eigenvalues} does not hold. In that case, the multiplier will be finite for finite values of $L$ but will diverge for $L\to\infty$. We will cover this case in a later subsection.

The second result is that if one adds Assumption \ref{Ass:big_shock}, this multiplier takes on a very simple form that once again boils down to the matrix expression given in the introduction. In this case, the model is in a quasi-permanent liquidity trap. Contrary to most of the literature, this does not come from an exogenous peg but is the result of a shock. As a consequence, whether this liquidity trap is stable will crucially depend on the persistence of this shock but not only. In fact, it also depends on $q$, which roughly measures the persistence of the stock of public capital. Lastly, the nature of exit is irrelevant because the relationship between what happens upon exit and in the short run is weighed by both $(p\mathbf{A}^*)^L$ and $(q\mathbf{A}^*)^L$. Under Assumptions \ref{Ass:eigenvalues}-\ref{Ass:big_shock}, both of these matrices converge to a matrix of zeros as $L\to\infty$.

What Proposition \ref{prop:long-lived_trap} guarantees is that this model can be studied \textit{as if} it followed a three state Markov chain where the ELB also binds in the medium run. This is the reason why matrix $\mathbf{A}$ never appears here and only $\mathbf{A}^*$ does. With this in mind, we will proceed as before and solve first for the medium run dynamics and then study how they influence the short term dynamics through expectations. Ignoring terms independent of policy, we now get:
\begin{align*}
c^z_2 &= -\frac{\kappa q\Gamma_k}{\det\left(I-q\mathbf{A}^*\right)}k \equiv \Theta^z_{c,k}\cdot\epsilon_g\cdot k\quad\&\quad    
\pi^z_2 = \frac{-\kappa(1-q)\Gamma_k}{\det\left(I-q\mathbf{A}^*\right)}k \equiv \Theta^z_{\pi,k}\cdot\epsilon_g\cdot k,    
\end{align*}
where the superscript $z$ indicates that the ELB is binding in the second state of the Markov chain. Compared with what we had in Section \ref{sec:public_investment_normal_times}, now both $\Theta^z_{c,k}$ and $\Theta^z_{\pi,k}$ are \textit{negative}. As a result, their relative magnitude does not really matter anymore: public capital makes a negative contribution to both consumption and inflation in the short run. For inflation, this comes from direct aggregate supply effects. For consumption, this is because decreased inflation at the ELB decreases consumption. Note that this crucially relies on Assumption \ref{Ass:eigenvalues} which guarantees that $\det\left(I-q\mathbf{A}^*\right)>0$. Again ignoring constant terms, we can proceed as in the short-lived scenario and compute the government spending multipliers in the long-lived scenario. These are described in the following proposition.

\begin{prop}
\label{prop:c_1pi_1_long_lived}
We now define $\Theta^z$ such that $c_2+\pi_2 = \Theta^z \epsilon_g\Tilde{\delta}g/(1-p)$. Given that the ELB does bind in the medium run, we have $\Theta^z\leq 0$. Then it follows that the short run multiplier effect of government investment on consumption and inflation is given by:
\begin{align*}
\frac{\partial c_1}{\partial g} &= \frac{\Theta^z\tilde{\delta}\epsilon_g+ p\kappa\Gamma_g}{\det (I-p\mathbf{A}^*) } \quad\&\quad
\frac{\partial \pi_1}{\partial g} = \kappa\frac{\Gamma_c\Theta^z\tilde{\delta}\epsilon_g + (1-p)\Gamma_g}{\det (I-p\mathbf{A}^*)},
\end{align*}
where the determinant is strictly positive given Assumption \ref{Ass:eigenvalues}. The effects of public capital go in the opposite direction compared to the ones from public consumption.  It follows that there exist a threshold $\epsilon_{g,c}^z$ (resp. $\epsilon_{g,\pi}^z$) such that if $\epsilon_g>\epsilon_{g,c}^z$ (resp. $\epsilon_g>\epsilon_{g,\pi}^z$) then $\partial c_1/\partial g <0$ (resp. $\partial \pi_1/\partial g <0$). Furthermore, we have
\begin{align*}
\epsilon_{g,\pi}^z= \frac{1-p}{p}\frac{1}{\kappa\Gamma_c} \epsilon_{g,c}^z.   
\end{align*}
\end{prop}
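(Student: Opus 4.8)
The plan is to leverage Proposition~\ref{prop:long-lived_trap}, which under Assumptions~\ref{Ass:eigenvalues}--\ref{Ass:big_shock} collapses the infinite-state chain to the three-state chain in which the ELB binds in both state~$1$ (short run) and state~$2$ (medium run), so that only $\mathbf{A}^*$ --- never $\mathbf{A}$ --- enters the relevant blocks. I would begin with the medium-run block recorded just above the proposition, $c_2^z=\Theta^z_{c,k}\epsilon_g k$ and $\pi_2^z=\Theta^z_{\pi,k}\epsilon_g k$ with $k=\tilde{\delta}g/(1-p)$. Applying the eigenvalue characterization of Proposition~\ref{prop:eigs} to $q\mathbf{A}^*$ (valid verbatim by the remark following that proposition), Assumption~\ref{Ass:eigenvalues} forces $\det(I-q\mathbf{A}^*)>0$, hence both $\Theta^z_{c,k}$ and $\Theta^z_{\pi,k}$ are strictly negative and so is $\Theta^z\equiv\Theta^z_{c,k}+\Theta^z_{\pi,k}$, giving the claimed sign $\Theta^z\le 0$. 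I would flag here that, since $\Gamma_k$ is linear in $\epsilon_g$, the coefficient $\Theta^z$ is itself independent of $\epsilon_g$; this is what later makes the thresholds well defined.

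Next I would write the state-$1$ equilibrium conditions explicitly: the binding Euler equation $c_1=\bigl[pc_1+(1-p)c_2^z\bigr]+\bigl[p\pi_1+(1-p)\pi_2^z\bigr]+\text{t.i.p.}$ together with the static Phillips curve $\pi_1=\kappa[\Gamma_c c_1+\Gamma_g g]$ (using $k_1=0$). Differentiating with respect to $g$ and substituting $(1-p)\,\partial(c_2^z+\pi_2^z)/\partial g=\Theta^z\tilde{\delta}\epsilon_g$, I solve the resulting $2\times 2$ linear system for $(\partial c_1/\partial g,\partial\pi_1/\partial g)$. The common denominator that appears is $1-p(1+\kappa\Gamma_c)$, which Proposition~\ref{prop:eigs} identifies with $\det(I-p\mathbf{A}^*)$ and which Assumption~\ref{Ass:eigenvalues} makes strictly positive; after collecting terms --- the $\pi_1$ numerator simplifies using $\det(I-p\mathbf{A}^*)=1-p-p\kappa\Gamma_c$ --- one recovers the two displayed formulas.

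The sign claims are then immediate: with $\det(I-p\mathbf{A}^*)>0$, the wasteful-spending pieces $p\kappa\Gamma_g$ and $(1-p)\Gamma_g$ are positive while the public-capital piece $\Theta^z\tilde{\delta}\epsilon_g$ is nonpositive, so the two channels pull in opposite directions. Because each numerator is affine in $\epsilon_g$ (as $\Theta^z$ does not depend on it), setting the numerators to zero yields $\epsilon_{g,c}^z=p\kappa\Gamma_g/(-\Theta^z\tilde{\delta})>0$ and $\epsilon_{g,\pi}^z=(1-p)\Gamma_g/(-\kappa\Gamma_c\Theta^z\tilde{\delta})>0$, with $\partial c_1/\partial g<0$ (resp.\ $\partial\pi_1/\partial g<0$) exactly when $\epsilon_g$ exceeds the corresponding threshold; dividing one expression by the other gives $\epsilon_{g,\pi}^z=\frac{1-p}{p}\frac{1}{\kappa\Gamma_c}\epsilon_{g,c}^z$.

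I expect the only genuinely delicate step to be the bookkeeping that ties Assumption~\ref{Ass:eigenvalues} to the positivity of \emph{both} determinants $\det(I-p\mathbf{A}^*)$ and $\det(I-q\mathbf{A}^*)$ simultaneously, and the justification that the ``as if'' reduction of Proposition~\ref{prop:long-lived_trap} is precisely what licenses using $\mathbf{A}^*$ rather than $\mathbf{A}$ in the medium-run block here --- everything downstream is a routine $2\times 2$ computation.
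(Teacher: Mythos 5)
Your proposal is correct and follows essentially the same route as the paper's proof: write the binding-ELB Euler equation and static Phillips curve in state 1, substitute $c_2+\pi_2=\Theta^z\epsilon_g\tilde{\delta}g/(1-p)$ from the medium-run block governed by $\mathbf{A}^*$, and solve the $2\times 2$ system with common denominator $\det(I-p\mathbf{A}^*)=1-p(1+\kappa\Gamma_c)$. Your explicit derivation of $\Theta^z\le 0$ from $\det(I-q\mathbf{A}^*)>0$ under Assumption \ref{Ass:eigenvalues}, and of the closed-form thresholds $\epsilon_{g,c}^z=p\kappa\Gamma_g/(-\Theta^z\tilde{\delta})$ and $\epsilon_{g,\pi}^z=(1-p)\Gamma_g/(-\kappa\Gamma_c\Theta^z\tilde{\delta})$ with their stated ratio, simply makes explicit steps the paper leaves to the surrounding text.
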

\begin{proof}
See Online Appendix \ref{sec:proof_c_1pi_1_long_lived}.
\end{proof}
What Proposition \ref{prop:c_1pi_1_long_lived} clarifies is that public capital now makes a negative contribution to both consumption and inflation in the short run. Is this contribution big enough to compensate for the positive contribution of wasteful spending on both of these variables? According to Proposition \ref{prop:c_1pi_1_long_lived}, that depends on $\epsilon_g$ as well as the other structural parameters of the model. In this model, the shift in the Euler equation from public consumption is solely due to the effect on expected inflation. As a result, it is scaled by $p$ since $\E_1\mathbf{\Pi}_{t+1}=p\pi_1 + (1-p)\pi_2$. In the Phillips curve, the shift does not depend on $p$ given the contemporaneous curve that we have assumed. The larger $p$ however, the steeper the Euler equation and any shift in the Phillips curve will have less impact on inflation. As a result, the impact of wasteful spending on inflation scales with $1-p$. In addition, $\kappa$ and $\Gamma_c$ magnify the impact of public capital on inflation through their effect on consumption in the short run. 

For the sake of the argument, let us assume that the parameters are such that $\epsilon_{g,\pi}^z>\epsilon_{g,c}^z$. In this context, we consider the following experiment: we start from the wasteful spending benchmark with $\epsilon_g=0$ and then we increase $\epsilon_g$. Initially, both consumption and inflation are crowded in. As $\epsilon_g$ crosses $\epsilon_{g,c}^z$, inflation is crowded in but consumption is crowded out. Finally, when $\epsilon_g $ passes $\epsilon_{g,\pi}^z$, both inflation and consumption decrease after an increase in government investment. For completeness, we now provide some numerical results on what happens for intermediate values of $L$.

\subsection{Results for $L\in \left[1,\infty\right)$}

So far, we have studied the simple cases of $(i)$ $L=0$ combined with stochastic exit as well as $(ii)$ $L\to\infty$ with stable eigenvalues where the nature of exit is irrelevant. In this subsection, we want to describe what happens outside of these two special cases. Given the complexity of the expressions of the multiplier for intermediate values of $L$, we report numerical results here. In order to connect with the literature, we work with the following standard Phillips curve throughout this subsection:
\begin{align*}
\pi_t = \beta\E_t\pi_{t+1} + \kappa\left[\Gamma_c c_t+\Gamma_g g_t - \Gamma_k k_t\right],
\end{align*}
which nests the one we have used so far as a special case. This will also help us illustrate that the results obtained earlier did not hinge on the static Phillips curve assumption. As an organizing principle and to understand better what is driving the aggregate consumption multiplier, we will use the following decomposition:
\begin{align}
\label{eq:decomp_MQQ}
\pazocal{M}_c(L,\theta) \equiv \pazocal{M}_c^{waste}(L,\theta) +  \pazocal{Q}^{deter}_c(L,\theta) +  \pazocal{Q}^{exit}_c(L,\theta), 
\end{align}
where the first component is the public consumption multiplier, the second component picks up the deterministic dynamics before exiting the ELB and the third governs how the dynamics upon exit percolate back to the first period through rational expectations. Note that under Assumption \ref{Ass:eigenvalues}, all the terms on the right hand side do converge to a finite value as $L\to\infty$. In an effort to connect with the numerical results that can be found in the literature, we will also describe what happens under a deterministic exit as in \cite{Guerrieri2014} whenever it is relevant. Note that this is the reason why the component associated with wasteful spending in equation \eqref{eq:decomp_MQQ} now depends on $L$: under stochastic exit, $\pazocal{M}^{waste}_c(L,\theta) = \pazocal{M}^{waste}_c(\theta)$ while under deterministic exit this is not the case.

Let us first study what happens under stable eigenvalues with a stochastic exit.\footnote{To guarantee the stable eigenvalues, we set $\kappa=0.001$.} Combined with the results of subsection \ref{sec:short_lived}, the results in Proposition \ref{prop:c_1pi_1_long_lived} are indicative of how the consumption multiplier varies with $L$. If all the eigenvalues are stable, then for $L=0$ the consumption multiplier is strictly positive. For $L\to\infty$, it will necessarily be lower. As a result, one can expect the output multiplier to be decreasing as a function of $L$. 

\begin{figure}[htp]
\centering
\caption{Decomposition under stable eigenvalues - stochastic exit}
\includegraphics[width=0.8\textwidth]{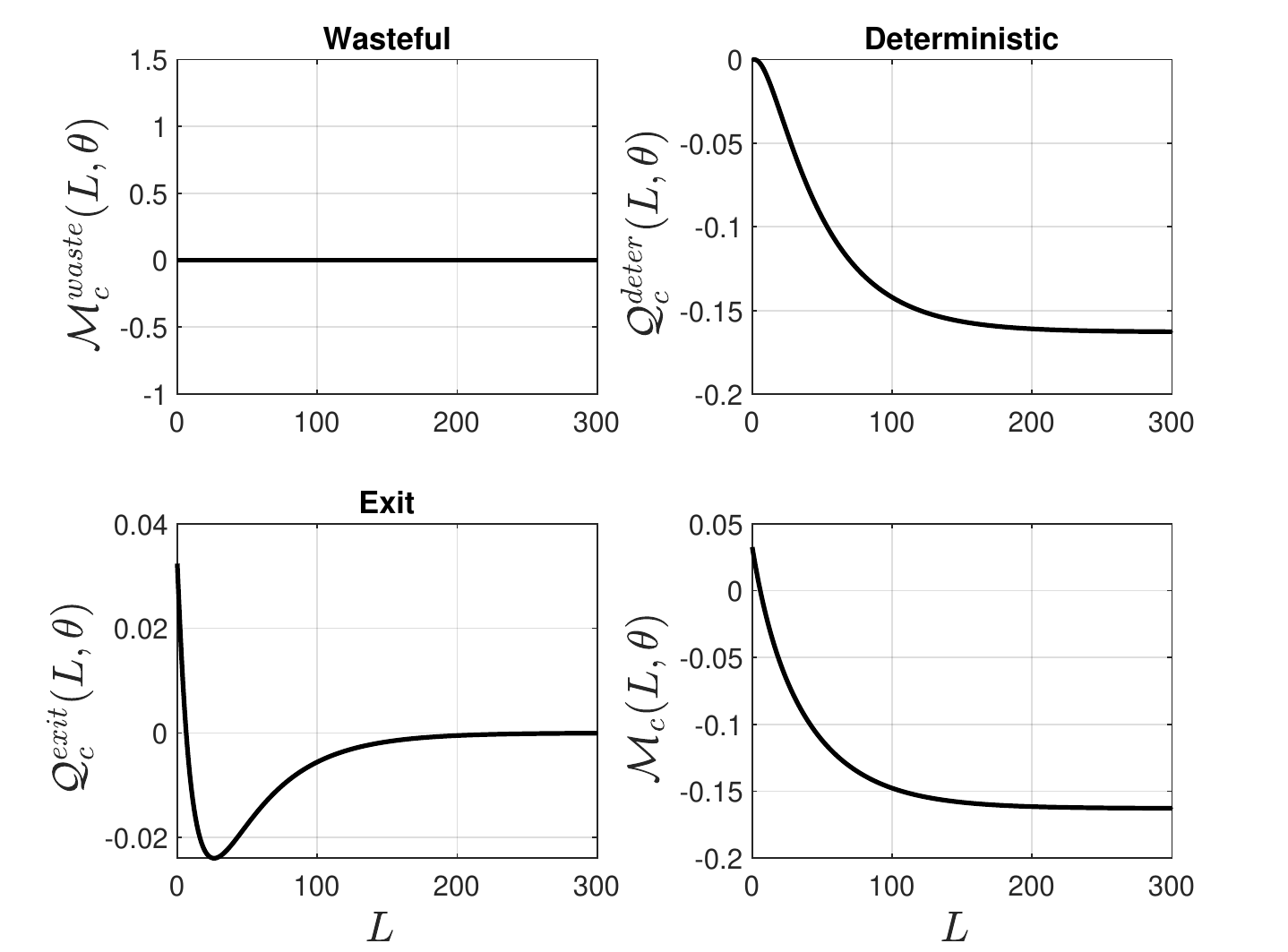}
\begin{minipage}{0.8\textwidth}
\protect\footnotesize Notes: \protect\scriptsize
We use the same calibration as in Figure \ref{fig:AS_AD_NTimes}, except that we now set $\kappa=0.001$.
\end{minipage}
\label{fig:Mc_stable_hybrid}
\end{figure}

First, note that all the terms converge to a finite value as $L$ grows larger. Second, consistent with the results exposed before, the component associated with public consumption is constant. In our case, it is also very close to zero. This is due to the fact that, in our model with a forward-looking Phillips curve the consumption multiplier with respect to public consumption is proportional to $\kappa$, which we have set to a very low value.\footnote{It will not be constant anymore under a deterministic exit, but it will still remain quantitatively negligible regardless of the value of $L$. See Online Appendix figure \ref{fig:Mc_unstable_q_deter}.} Third, notice that in line with the results of Propositions \ref{prop:c_1pi_1_short_lived} and \ref{prop:c_1pi_1_long_lived}, the consumption multiplier is strictly larger than 0 on impact and lower for large values of $L$. One can still use the intuition developed in the previous sections to understand these results. Consistent with the intuition developed in the last subsections, the negative effects come from the dynamics inside the trap (top-right panel of Figure \ref{fig:Mc_stable_hybrid}) where lower inflation is detrimental.

If any of the eigenvalues is unstable however, the public investment multiplier necessarily diverges as a function of $L$. To illustrate that, we plot in Figure \ref{fig:Mc_unstable_q_hybrid} the consumption multiplier as a function of $L$ with the same calibration as in Figure \ref{fig:Mc_stable_hybrid}, but now with $q=0.98$ instead. This makes one of the eigenvalues of $q\mathbf{A}^*$ unstable.\footnote{For our particular calibration, the threshold is given by $\overline{p}= 0.9731$.} More precisely, for this particular calibration we assume stochastic exit as well as $p<\overline{p}<q$. Notice that the consumption multiplier diverges towards $-\infty$: the longer the economy is expected to stay in the liquidity trap, the lower the public investment multiplier.\footnote{In Online Appendix Figure \ref{fig:Mc_unstable_q_deter}, we plot the same experiment for a deterministic exit and show that the same pattern emerges.}

\begin{figure}[htp]
\centering
\caption{Decomposition with one unstable eigenvalue - stochastic exit}
\includegraphics[width=0.8\textwidth]{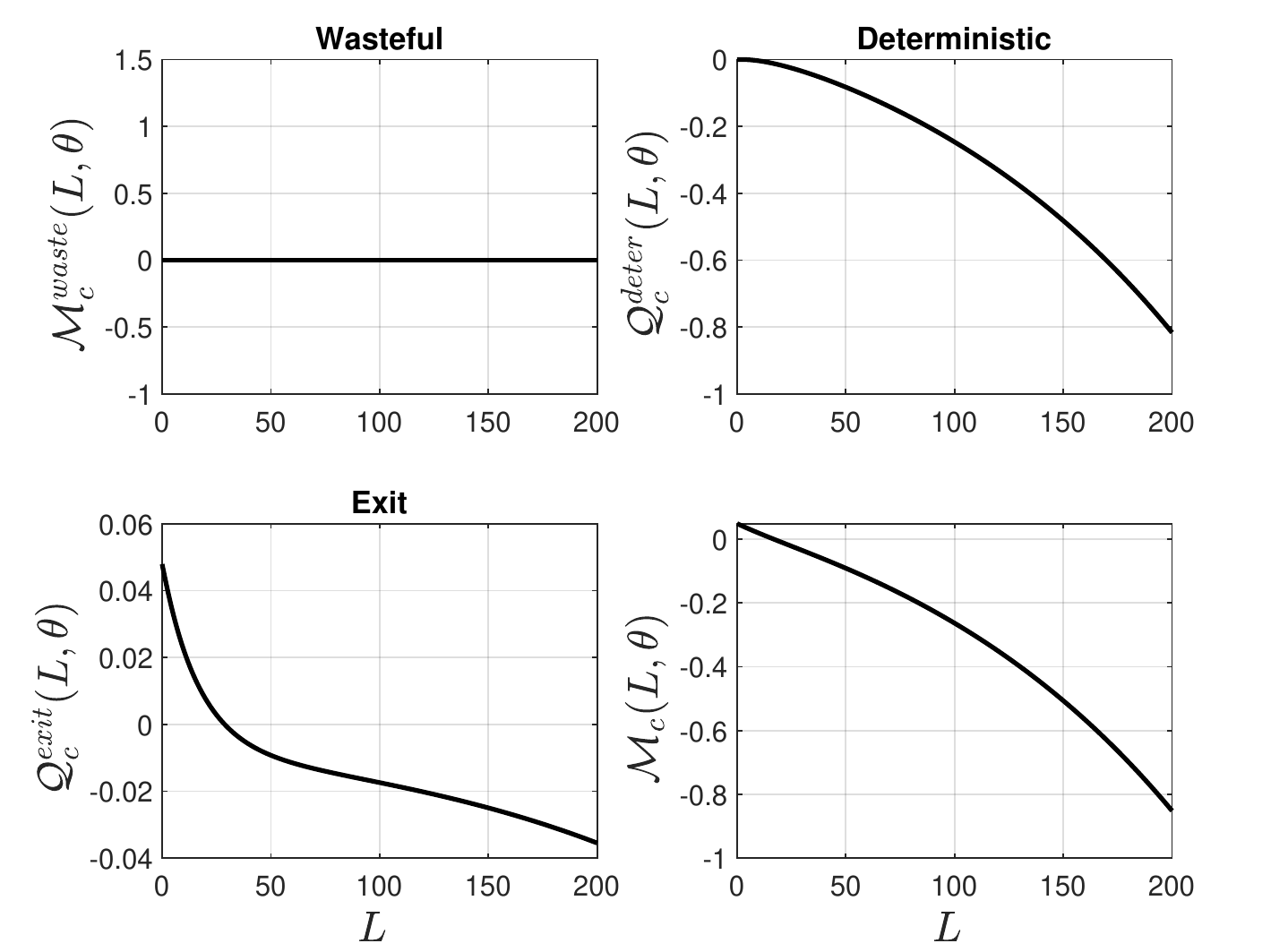}
\begin{minipage}{0.8\textwidth}
\protect\footnotesize Notes: \protect\scriptsize
We use the same calibration as in Figure \ref{fig:Mc_stable_hybrid}, except that we now set $q=0.98$.
\end{minipage}
\label{fig:Mc_unstable_q_hybrid}
\end{figure}

Notice also that the anticipated effects upon exit are expansionary for relatively low values of $L$. This is because for a short exit the aggregate supply effects barely kick in at first. This echoes our results from before. As $L$ gets higher however, the aggregate supply effects from public investment dominate the expected wealth effect quantitatively. This results in anticipated deflation while at the ELB, which is contractionary.  

Lastly, we examine a configuration where both $p$ and $q$ are above the threshold $\overline{p}$. Interestingly, this case encompasses the numerical results provided in \cite{Bouakez2017}. It turns out that in this case, the nature of the exit is crucial. Let us begin with stochastic exit. Notice that now there is a clear tension between what happens during the trap and upon exit. During the trap, there is no uncertainty due to the assumption of perfect foresight: the effects of government investment are solely due to \textit{shifts} in the Euler equation. At the ELB, current consumption does not depend on current inflation and the slope of the Euler equation is flat. Given that aggregate supply effects dominate, the Euler equation shifts down at the ELB. With a positively sloping Phillips curve, this means lower consumption/inflation at the ELB.

\begin{figure}[htp]
\centering
\caption{Decomposition with two unstable eigenvalues - stochastic exit}
\includegraphics[width=0.8\textwidth]{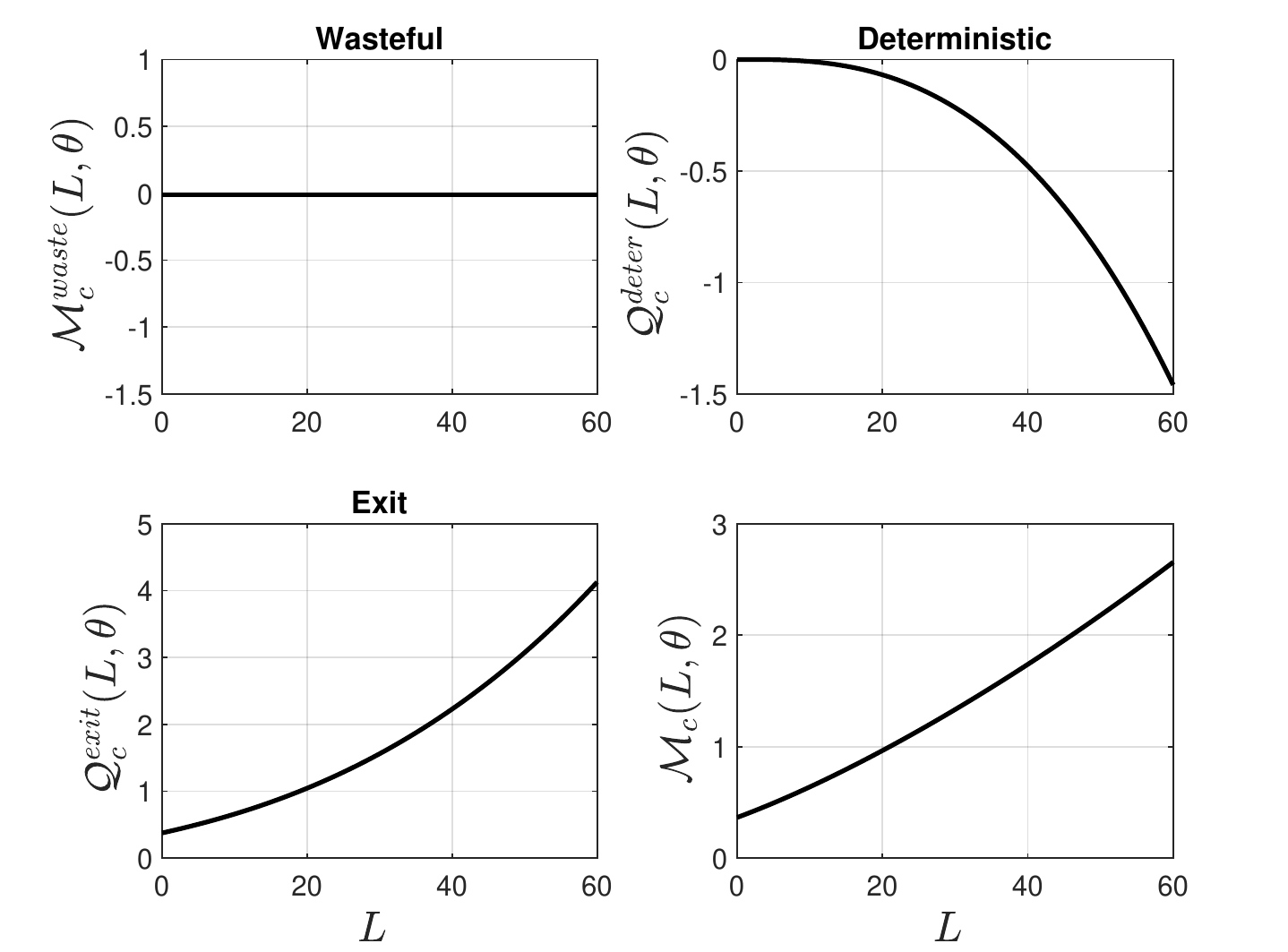}
\begin{minipage}{0.8\textwidth}
\protect\footnotesize Notes: \protect\scriptsize
We use the same calibration as in Figure \ref{fig:Mc_unstable_q_hybrid}, except that we now set $\overline{p}<q<p=0.99$.
\end{minipage}
\label{fig:Mc_unstable_pq_hybrid}
\end{figure}

Upon exit however, there \textit{is} uncertainty and the Euler equation is upward sloping as in \cite{Mertens2014}. As before, this is due to the fact that in this situation, $\E_{L+1}\mathbf{\Pi}_{t+L+1}=p \pi_{L+1} + (1-p)\pi_{L+2}$ whereas under perfect foresight expected inflation is effectively exogenous. The same logic holds for expected consumption. Given that $p>\overline{p}$, we know that the slope of the Euler equation at the ELB is larger in magnitude compared to the Phillips curve. As a result, a downward shift of the Euler equation is now expansionary at the ELB: income effects dominate as in \cite{Bilbiie2022neo} and a positive aggregate supply shock is expansionary. For the same reason, the component associated with public consumption is now negative just as in \cite{Mertens2014}, even if the small magnitude makes it hard to see in Figure \ref{fig:Mc_unstable_pq_hybrid}.

Compared to the other cases we have studied in this subsection, now the nature of the exit will matter in a non trivial way. Accordingly, we now consider a setup with both $p$ and $q$ are above the threshold and deterministic exit. This encompasses the approach taken in \cite{Bouakez2017} who use the algorithm developed in \cite{Guerrieri2014}, except that we do not cover time-to-build. This experiment is reported in Figure \ref{fig:Mc_unstable_pq_deter}. 

\begin{figure}[htp]
\centering
\caption{Decomposition with two unstable eigenvalues - deterministic exit}
\includegraphics[width=0.8\textwidth]{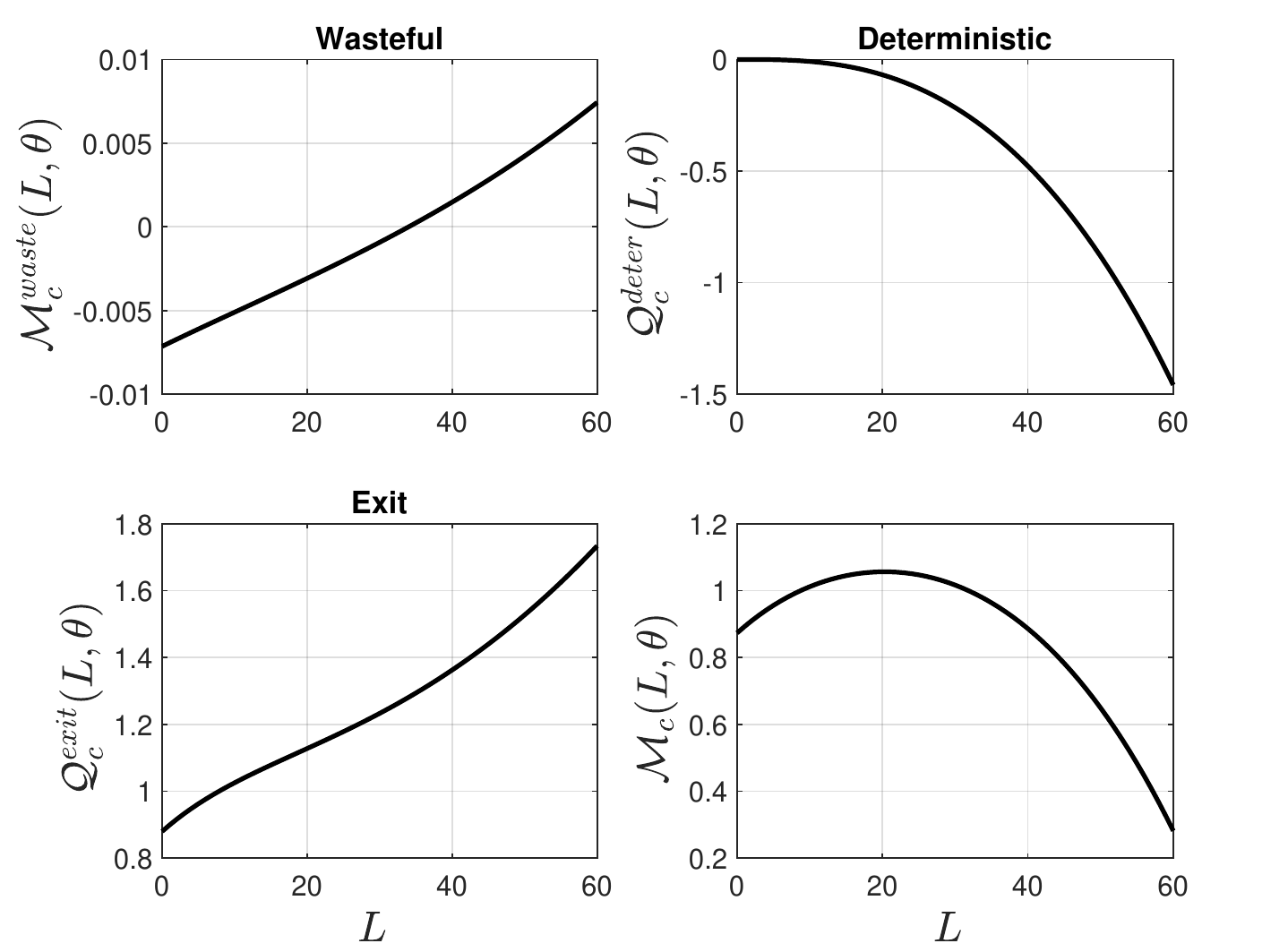}
\label{fig:Mc_unstable_pq_deter}
\end{figure}

In this case, notice that the component associated with public consumption is now increasing in $L$: the longer the trap, the more the inflationary effects of public consumption crowd in private consumption. As with the stochastic exit case, what happens under perfect foresight during the trap is contractionary, while the anticipated effects upon exit are expansionary. Crucially however, now the dynamics upon exit are weighed with matrix $\mathbf{A}$ instead of $\mathbf{A}^*$ for the stochastic exit case. Given that the eigenvalues of $r\mathbf{A}$ for $r=p,q$ are necessarily stable, the dynamics upon exit generate less amplification. In terms of mechanisms, this happens because the Taylor rule is active upon exit and mitigates the aggregate supply effects of public capital. For these reasons, these effects are eventually dominated quantitatively by the ones at the ELB as $L$ grows larger. As a consequence, there is a value of $L$ for which the public investment multiplier is maximal. In our case, this happens for 20 quarters or 5 years. 

This puts the results of \cite{Bouakez2017} in perspective. In their model, the presence of time to build pushes the aggregate supply effects out of the liquidity trap. In our model which corresponds to theirs with a one quarter time to build delay, we see that the mechanisms involved turn out to be more complicated. As $L$ increases, a bigger portion of the supply effects happen at the liquidity trap. Contrary to the intuition developed in \cite{Bouakez2017} however, this actually \textit{increases} the multiplier because of the counteracting effects described in the last paragraph. Eventually however, as $L$ grows larger the intuition developed in \cite{Bouakez2017} prevails. We believe that this highlights the value of the approach that we have developed in this paper: it allows us to peer into the underlying mechanisms with greater detail.  

\subsection{Discussion}

Before concluding, we want to say a few words on a number of related issues: the connection of our results with the existing theoretical and empirical literature as well as the policy relevance of our findings. 

Our results using a standard New Keynesian model point to the following conclusions: $(i)$ in normal times the government investment multiplier is low on impact but may be large in PDV terms and $(ii)$ in a recession the one situation under which the government investment multiplier is relatively low is when the liquidity trap is long as a result of a large but not too persistent shock. In all other cases, we find a relatively large government investment multiplier (at least in PDV terms) where private consumption is crowded in. Our finding of overall high public investment multipliers at the lower bound is consistent with the results of \cite{Bouakez2017} as well as \cite{Tervala2022building}.

Regarding the connection to the empirical literature, our results broadly agree with the low impact multipliers outside the lower bound reported in \cite{Abiad2016}, \cite{Boehm2019} as well as the large multipliers under constant interest rates reported in \cite{Acconcia2014}, \cite{Boehm2019}
and 
\cite{Morita2020empirical}. 

While we have presented our results in a standard New Keynesian model, these are more general and apply to an extended version that may include heterogeneous agents or deviates from the assumption of full information rational expectations \textemdash as long as said model does not feature any endogenous state variables. Under any of these extensions, the conditions for stability of the eigenvalues will most likely change. For example, a model with heterogeneous agents may make the explosive dynamics more prevalent. In that case, our results show that choosing between the deterministic algorithm developed in \cite{Guerrieri2014} or the stochastic one that we introduce in this paper is not innocuous in terms of policy implications. We would advise the interested policymaker to try both for reasonable values of $L$ to make sure that the conclusions do not depend on a particular choice of solution method.  

\section{Conclusion}
\label{sec:conclusion}

In this paper, we have proposed a simple framework to think about the aggregate effects of public investment. We have kept the model deliberately simple in order to clearly expose the driving forces behind public investment. The model produces results that are in line with the existing theoretical and empirical literature, but clarifies that under some circumstances the public investment multiplier may explode as the duration of the liquidity trap increases. To check whether that is the case or not, we have provided clear conditions that can readily be checked with other versions of the standard New Keynesian model.

With that being said, the results that we have developed in this paper hold more generally for any tractable New Keynesian model, as long as it does not feature endogenous persistence. As a result, this framework is well suited to be applied to the rich literature on tractable heterogeneous agent versions of the standard New Keynesian models that have been developed recently: \cite{Werning2015incomplete}, \cite{Acharya2020understanding}, \cite{Bilbiie2019a,Bilbiie2020new}, \cite{Broer2020new}, \cite{Holm2020monetary}, \cite{Ravn2021macroeconomic}, \cite{Acharya2023optimal} and \cite{Pfauti2022behavioral}. I study this issue in \cite{Roulleau2021public}.

Allowing for endogenous persistence brings in a whole new set of challenges that we have not addressed in this paper. This however has the potential to bring in new and interesting insights for a much larger class of DSGE models compared to the one studied in the present paper. Accordingly, we are following this avenue in current research.  

\bibliographystyle{apalike2}
\bibliography{refs}
\pagebreak

\begin{center}
\Large
\textbf{Online Appendix}
\normalsize    
\end{center}

\appendix

\section{Derivations for the full model}
\label{app:full_model}

The household maximization program can be written with the following Lagrangian:
\begin{align*}
&\mathcal{L}_0^{H}\equiv \E_0\sum_{t=0}^\infty \beta^t\bigg\{\left[\log(C_{t})-\chi\frac{N_{t}^{1+\eta}}{1+\eta}\right]\\
&-\Lambda_t\left[C_t+\frac{B_t}{P_t}-W_tN_t+\mathcal{T}_t-\mathcal{D}_t-\xi_{t-1}\frac{1+R_{t-1}}{P_t}B_{t-1}\right]\bigg\},
\end{align*}
where all the variables are the usual suspects. The first order condition with respect to consumption is given by:
\begin{align}
\nonumber
\frac{\p \mathcal{L}_0^{H}}{\p C_t} &= 0\\
\nonumber
\Leftrightarrow \beta^t \left\{\frac{1}{C_t}-\Lambda_t\right\}&=0\\
\Leftrightarrow \Lambda_t &=  \frac{1}{C_t}.
\label{eq:FOC_C}
\end{align}
The first order condition with respect to hours worked is given by:
\begin{align}
\nonumber
\frac{\p \mathcal{L}_0^{H}}{\p N_t} &= 0\\
\nonumber
\Leftrightarrow \beta^t \left\{-\chi N_t^\eta + \Lambda_t W_t\right\}&=0\\
\Leftrightarrow W_t &=  \chi C_tN_t^\eta,
\label{eq:FOC_N}
\end{align}
where we have used equation \eqref{eq:FOC_C} to substitute for $\Lambda_t$. Finally, the first order condition with respect to government bonds is given by:
\begin{align}
\nonumber
\frac{\p \mathcal{L}_0^{H}}{\p B_{t}} &= 0\\
\nonumber
\Leftrightarrow -\beta^t \frac{\Lambda_t}{P_t} + \beta^{t+1}\E_t\left\{\xi_t\frac{1+R_t}{P_{t+1}}\Lambda_{t+1}\right\} &=0\\
\Leftrightarrow \Lambda_t &= \beta\E_t\left\{ \xi_t\frac{1+R_t}{1+\Pi_{t+1}}\Lambda_{t+1}\right\}.
\label{eq:FOC_B}
\end{align}
Now turning to the supply side, we focus on a symmetric equilibrium where all monopolistically competitive firms use the following production function:
\begin{align}
\label{eq:prodfunc_no_g}
Y_t = K_t^{\epsilon_g} N_t,
\end{align}
where $K_t$ is a stock of public capital which evolves according to the following law of motion:
\begin{align*}
K_t = (1-\delta)K_{t-1} + G_t,    
\end{align*}
where $G_t$ denotes government investment. All firms indexed by $z$ produce a differentiated good and face a Rotemberg-style quadratic cost to adjust their price and seek to maximize expected discounted profits:
\be
\mathbbm{E}_{t}\sum_{s=0}^{\infty }\beta ^{s}%
\frac{C_t}{C_{t+s}}\mathcal{D}_{t+s}(z) ,
\label{Firm's Objective}
\ee
where profits at time $t$ are given by:
\begin{align*}
\mathcal{D}_t(z) &= (1+\tau)Y^D_t(z)-W_tN_t(z)-\frac{\psi }{2}\left( \frac{P_{t}(z)}{%
P_{t-1}}-1\right) ^{2}Y_{t}\label{eq:profits_Gen_Eq}\\
&= (1+\tau)\left(\frac{P_t(z)}{P_t}\right)^{-\nu}Y_t-MC_t\left(\frac{P_t(z)}{P_t}\right)^{-\nu}Y_t-\frac{\psi }{2}\left( \frac{P_{t}(z)}{%
P_{t-1}}-1\right) ^{2}Y_{t},
\end{align*}
where $\nu>1$ is the elasticity of substitution across goods and, as in \cite{Bilbiie2019a}, we assume that the quadratic cost is with respect to the past \textit{aggregate} price. As is standard, we will focus on a symmetric equilibrium where all firms $z$ will make identical choices. In turn, marginal cost are given by 
\begin{align}
MC_t = \frac{W_t N_t}{Y_t}    =\frac{W_t}{K_t^{\epsilon_g}}
\end{align}
The optimal pricing decision gives rise to the following non linear and static Phillips curve:
\begin{align}
-\nu+\nu MC_t-\psi d(\Pi_t)=0,
\label{eq:NKPC}    
\end{align}
where we have assumed the optimal constant subsidy of $\tau=1/(\nu-1)$. Assuming that the government has access to lump-sum taxes, combining firm's profits with the government and household budget constraints yields the resource constraint:
\begin{align}
\label{eq:ge_no_g}
\Delta_t Y_t = C_t+G_t\quad\text{with}\quad \Delta_t\equiv 1-\frac{\psi}{2}\Pi_t^2.  
\end{align}
To close the model, we assume that the Central Bank sets the nominal rate $R_t$ according to a rule that will be specified directly in log-linear terms. 

\subsection{The full model - steady state}
It will be useful to define the following steady-state ratios
\begin{align}
s_c = \frac{C}{Y}\quad \Rightarrow \quad \frac{G}{Y}=1-s_c.
\end{align}
Using the public capital accumulation equation, we get:
\begin{align*}
K &= (1-\delta)K +  G\\
\Rightarrow\quad G &= \delta K\\
\Rightarrow\quad \frac{K}{Y} &= \frac{1-s_c}{\delta}.
\end{align*}
We can write steady state output as
\begin{align*}
Y = \frac{C}{s_c}.    
\end{align*}
Using that to substitute in the steady state version of the intra-temporal labor supply choice, we obtain:
\begin{align*}
\chi CN^\eta &= \frac{Y}{N}  \\
\Rightarrow\quad \chi CN^\eta &= \frac{C}{s_c N}\\
\Rightarrow\quad \chi N^{1+\eta} &= \frac{1}{s_c}\\
\Rightarrow\quad N &= \left[\frac{1}{s_c\chi}\right]^{\frac{1}{1+\eta}},
\end{align*}
where we calibrate $\chi$ so that $N=1/3$ at steady state. Using again the steady state share of consumption, we get:
\begin{align*}
C&=s_cY\\
&=s_cNK^{\epsilon_g}\\
&=s_cN\left(\frac{1-s_c}{\delta}Y\right)^{\epsilon_g}\\
&=s_cN\left(\frac{1-s_c}{\delta}\right)^{\epsilon_g}\left(\frac{C}{s_c}\right)^{\epsilon_g}\\
\Leftrightarrow\ C^{1-\epsilon_g}&=s_c^{1-\epsilon_g}N\left(\frac{1-s_c}{\delta}\right)^{\epsilon_g}\\
\Leftrightarrow\ C&=s_c\left[N\left(\frac{1-s_c}{\delta}\right)^{\epsilon_g}\right]^{\frac{1}{1-\epsilon_g}}.
\end{align*}
Now that we have $C$, we can directly compute $Y$ and thus both $K$ and $G$. Independent of the real allocation, we can compute the return on government bonds using the bonds Euler equation:
\begin{align*}
R = \frac{1}{\beta}-1    
\end{align*}

\subsection{The full model - linear approximation}

The linear approximation of equation \eqref{eq:FOC_C} is given by
\begin{align}
\lambda_t = -c_t,    
\end{align}
where lowercase letters denote percent deviations from steady state. Moving on to equation \eqref{eq:FOC_N}, we get:
\begin{align}
w_t &= c_t+\eta n_t.    
\end{align}
Moving on to the bond Euler equation, we get the following expression:
\begin{align}
c_t = \E_tc_{t+1}-(r_t-\E_t\pi_{t+1}+\xi_t).    
\end{align}
The Phillips curve can be expressed as
\begin{align}
\pi_t=\kappa mc_t,   
\end{align}
where $\kappa=\psi/\nu$. Labor demand implies that real marginal costs can be written as:
\begin{align}
\nonumber
mc_t &= w_t + n_t - y_t\\
\nonumber
&= w_t -\epsilon_g k_t\\
&= c_t +\eta n_t -\epsilon_g k_t
\end{align}
The production function is now given by
\begin{align}
y_t = \epsilon_g k_t + n_t 
\end{align}
Letting $g_t$ denote $(G_t-G)/Y$, the resource constraint is given by
\begin{align}
y_t = s_c c_t + g_t.    
\end{align}
Using these last two equations to rewrite the real marginal cost, we obtain:
\begin{align}
\nonumber
mc_t &= c_t +\eta (y_t-\epsilon_g k_t) -\epsilon_g k_t   \\
\nonumber
&= c_t + \eta (s_c c_t + g_t) - (1+\eta)\epsilon_g k_t \\
\nonumber
&= (1+\eta s_c)c_t +\eta g_t - (1+\eta)\epsilon_g k_t\\
&\equiv \Gamma_c c_t+\Gamma_g g_t-\Gamma_k k_t,
\end{align}
where we have defined
\[
\Gamma_c = 1+\eta s_c\quad\&\quad\Gamma_g = \eta\quad\&\quad \Gamma_k = (1+\eta)\epsilon_g.
\]
Finally, the law of motion for public capital can be approximated as;
\begin{align*}
k_t &= (1-\delta)k_{t-1}+\frac{G}{K}\frac{G_t-G}{G}\\
    &= (1-\delta)k_{t-1}+\delta\frac{G_t-G}{G}\\
    &= (1-\delta)k_{t-1}+\delta\frac{Y}{G}\frac{G_t-G}{Y}\\
    &= (1-\delta)k_{t-1}+\frac{\delta}{1-s_c}g_t\\
    &\equiv  (1-\delta)k_{t-1}+\tilde{\delta}g_t
\end{align*}

Regrouping equations and imposing the ELB, we end up with the following system of equations:
\begin{align}
	c_t	& = \mathbb{E}_tc_{t+1} - [r_t  - \mathbb{E}_t\pi_{t+1} + \xi_t] \label{eq:full_EE} \\
	\pi_t	& =\kappa \left[ (1+\eta s_c)c_t+\Gamma_g g_t-\Gamma_k k_t\right] \label{eq:full_NKPC} \\
	r_t	& = \max[\log(\beta); \phi_\pi\pi_t], \label{eq:full_TR}\\
	k_t	& = (1-\delta)k_{t-1}+\tilde{\delta}g_t \label{eq:full_K_LoM}\\
    y_t	& = s_cc_t + g_t \label{eq:full_RC} 
\end{align}
Notice that we can use equations \eqref{eq:full_K_LoM}-\eqref{eq:full_TR} to uniquely solve for $c_t,\pi_t,R_t$ for a given path of $g_t$ and $k_t$. We can then use equation \eqref{eq:full_RC} to back out the log deviation of aggregate output. Assuming that the ELB does not bind, we can re-arrange this system as follows:
\begin{align*}
c_t +\phi_\pi\pi_t   &= \E_t c_{t+1} + \E_t\pi_{t+1} -\xi_t\\
-\kappa\Gamma_c c_t+ \pi_t &= \beta\E_t\pi_{t+1}-\kappa \Gamma_k k_t +\kappa\Gamma_g g_t
\end{align*}
As a result, casting this in the matrix framework in the main text implies:
\begin{align*}
\mathbf{A}_0
&=
\begin{bmatrix}
1 & \phi_\pi\\
-\kappa\Gamma_c & 1
\end{bmatrix},
\mathbf{A}_1
=
\begin{bmatrix}
1 & 1\\
0 & \beta
\end{bmatrix}
, 
B_0=
\begin{bmatrix}
0\\
-\kappa\Gamma_k
\end{bmatrix}
,\\
C_{0,1}
&=
\begin{bmatrix}
0 \\
\kappa\Gamma_g
\end{bmatrix},
C_{0,2}
=
\begin{bmatrix}
-1 \\
0
\end{bmatrix},
\end{align*}
and where $d=(1-q)/(1-s_c)$. Now turning to the case where the ZLB is binding, we have the following system of equations:
\begin{align*}
c_t   &= \E_t c_{t+1} + \E_t\pi_{t+1} -\xi_t -\log(\beta)\\
-\kappa\Gamma_c c_t+ \pi_t &= \beta\E_t\pi_{t+1}-\kappa\Gamma_k k_t +\kappa\Gamma_g g_t
\end{align*}
As a result, casting this in the matrix framework in the main text implies:
\begin{align*}
\mathbf{A}^*_0
&=
\begin{bmatrix}
1 & 0\\
-\kappa\Gamma_c & 1
\end{bmatrix},
\mathbf{A}^*_1
=
\begin{bmatrix}
1 & 1\\
0 & \beta
\end{bmatrix}
, 
B^*_0=
\begin{bmatrix}
0\\
-\kappa\Gamma_k
\end{bmatrix}
,\\
C^*_{0,1}
&=
\begin{bmatrix}
0 \\
\kappa\Gamma_g
\end{bmatrix},
C^*_{0,2}
=
\begin{bmatrix}
-1 \\
0
\end{bmatrix},
E_0^*
=
\begin{bmatrix}
-\log(\beta) \\
0
\end{bmatrix},
\end{align*}
and where we still have $d=(1-q)/(1-s_c)$.
\pagebreak

\section{Definition of Markov states for Definition \ref{ddef:MChains_inertial}}
\label{sec:app_MChains_inertial}

For the preference and government spending shocks, the Markov states are given by
\begin{align*}
z_{\ell} &=
\begin{cases}
z_{1} \quad &\text{for}\ \ell=1\\
pz_{\ell-1}\quad &\text{for}\ \ell=2,\dots,L+1\\
0\quad &\text{for}\ \ell\in\left\{L+2,L+3\right\},
\end{cases}    
\end{align*}
for $z\in\left\{g,\xi\right\}$. This effectively ensures that both exogenous shocks follow an $AR(1)$ with persistence $p$ in expectations. The Markov states for the inertial exogenous variable are given by the vector $\left[k_1,\dots,k_{L+2},0\right]$. We need to make sure that these replicate an $ARMA(2,1)$ in expectations. Given the one period time to build delay, we know that $k_t=0$ on impact, while
\[
k_{t+n} = \frac{q^{n}-p^{n}}{q-p}k_{t+1}
\]
for $n\geq 1$ after an initial shock to $g_t$. From equation \eqref{eq:ll_K_LoM}, this implies that the first state is given by $k_2=\Tilde{\delta}g_1$. To solve for $k_3,\dots,k_{L+1}$, let us remember that
\[
\E_t \mathbf{K}_{t+\ell} = k_{\ell+1}
\]
for $\ell=1,\dots,L$ given the assumption of perfect foresight and the 0 diagonal elements. As a result, we can set:
\[
k_{\ell+1} = \frac{q^{\ell}-p^{\ell}}{q-p}k_2
\]
for the Markov states indexed by $\ell=0,\dots,L$. In turn, this implies that $k_{L+2}$ must be such that:
\begin{align*}
\E_t \mathbf{K}_{t+L+1} &= pk_{L+1}+(1-p)k_{L+2} = \frac{q^{L+1}-p^{L+1}}{q-p}k_2\\
\Rightarrow (1-p)k_{L+2} &= \frac{q^{L+1}-p^{L+1}}{q-p}k_2 -pk_{L+1}\\
\Rightarrow k_{L+2} &=\frac{1}{1-p}\left(\frac{q^{L+1}-p^{L+1}}{q-p}-p\frac{q^{L}-p^{L}}{q-p}\right)k_2 \\
\Rightarrow k_{L+2} &= \frac{k_2}{1-p}q^{L}\\
&= \frac{\Tilde{\delta}g_1}{1-p}q^{L}
\end{align*}
for $L\geq 0$. Note that if $L=0$ so that the ELB only binds for the first period in expectations, then this boils down to the Markov chain transition matrix developed in \cite{Roulleau2023analyzing} with $k_2=\Tilde{\delta}g_1/(1-p)$.

\pagebreak

\section{Proof of Proposition \ref{prop:match_IRF}}
\label{sec:proof_match_IRF}
By construction and given the assumption of perfect foresight for the first $L$ periods, the Markov chain matches both the forward and the backward equations for a given guess of $L$. What remains to be proven is that the Markov chain solves the forward and backward equations for time periods $L+1,L+2,\dots$. Let us begin with the backward-looking variable $k_t$ first. From Definition \ref{ddef:MChains_inertial}, we have
\begin{align*}
k_{L+1} &= \frac{q^{L}-p^{L}}{q-p}k_2\\
k_{L+2} &= \frac{k_2}{1-p}q^L\\
k_{L+3} &= 0.
\end{align*}
Using the Chapman-Kolmogorov theorem, we can write for $m\geq 0$:
\begin{align*}
\E_t\mathbf{K}_{t+L+m} &= u\mathcal{P}^{L+m}\mathbf{K}  \\
&= p^mk_{L+1} + (1-p)\frac{q^m-p^m}{q-p}k_{L+2}\\
&= p^m\frac{q^{L}-p^{L}}{q-p}k_2 + (1-p)\frac{q^m-p^m}{q-p}\frac{k_2}{1-p}q^L\\
&= p^m\frac{q^{L}-p^{L}}{q-p}k_2 + \frac{q^m-p^m}{q-p}k_2q^L\\
&= \frac{k_2}{q-p}\left[p^m(q^{L}-p^{L}) + (q^m-p^m)q^L\right]\\
&= \frac{q^{L+m}-p^{L+m}}{q-p}k_2,
\end{align*}
which matches the deterministic $AR(2)$ formulation for $k_t$ in Definition \ref{ddef:MChains_inertial}. Let us now turn to the system of forward equations. As before, these trivially hold for the first $L$ periods. To show that it also holds for time periods $L+1$ onward, we will proceed by induction as in \cite{Roulleau2023analyzing}. To accommodate for the fact that the dynamics in normal time start at time $L+2$, notice that the dynamics from time $L+1$ onward can be represented by a Markov chain with the transition probability for the case $L=0$, but with a vector of Markov states  $\left[Y_{L+1},\ Y_{L+2}, 0_{2,1}\right]$ for the forward-looking variables and likewise for the other variables. In some sense, the perfect foresight assumption implies that the chain somewhat forgets the dynamics before time $L+1$: the first $L$ states are transient and will never be visited again as the chain converges to its absorbing state. With this in mind and using the fact that the ELB is exactly binding at time $L+1$ we can write that
\begin{align}
\nonumber
\mathbf{A}_0\E_t\mathbf{Y}_{t+L+1} &= \mathbf{A}_1\E_t \mathbf{Y}_{t+L+2}+B_0\E_t \mathbf{K}_{t+L+1}+\mathbf{C}_0\E_t\mathbf{Z}_{t+L+1}  \\
\Leftrightarrow\quad \mathbf{A}_0Y_{L+1} &= p\mathbf{A}_1Y_{L+1} + (1-p)\mathbf{A}_1Y_{L+2} + B_0k_{L+1}+\mathbf{C}_0Z_{L+1}
\label{eq:EtYtlp1}\\
&= p\mathbf{A}^*_1Y_{L+1} + (1-p)\mathbf{A}^*_1Y_{L+2} + B_0^*k_{L+1}+\mathbf{C}_0^*Z_{L+1} +E^*,
\nonumber
\end{align}
which is the definition of $Y_{L+1}$ given in the main text and where we have regrouped the Markov states for both exogenous shocks in the vector $Z$. Likewise, using the fact that the ELB is not binding anymore in period $L+2$ notice that 
\begin{align*}
\mathbf{A}_0\E_t\mathbf{Y}_{t+L+2} &= \mathbf{A}_1\E_t \mathbf{Y}_{t+L+3}+B_0\E_t \mathbf{K}_{t+L+2}+\mathbf{C}_0\E_t\mathbf{Z}_{t+L+2} \\
\Leftrightarrow\quad \mathbf{A}_0(pY_{L+1}+(1-p)Y_{L+2}) &= \mathbf{A}_1(p^2Y_{L+1} + (1-p)(q+p)Y_{L+2})\\
&+ B_0(pk_{L+1}+(1-p)k_{L+2})\\
&+ \mathbf{C}_0(pZ_{L+1}+(1-p)Z_{L+2}).
\end{align*}
Using equation \eqref{eq:EtYtlp1} to substitute for $p\mathbf{A}_0Y_{L+1}$ on the left hand side, we get:
\begin{align*}
(1-p)\mathbf{A}_0Y_{L+2} &= q(1-p)\mathbf{A}_1Y_{L+2} + (1-p)B_0k_{L+2} +   (1-p)\mathbf{C}_0Z_{L+2}.
\end{align*}
Dividing both sides by $(1-p)$, we get the expression for $Y_{L+2}$ given in the main text. As a result, we have shown that the expressions for $Y_{L+1}$ and $Y_{L+2}$ are such that the forward equation holds for time $L+1$ and $L+2$. Accordingly, let us now assume that the forward equation holds for generic time periods $L+m,L+m+1$ for $m\geq 1$. We will now show that if that is the case, then it also holds for time period $L+m+2$. As in \cite{Roulleau2023analyzing}, we will use the fact that 
\begin{align*}
\E_t\mathbf{Y}_{t+L+m+2} = (p+q)\E_t\mathbf{Y}_{t+L+m+1}-pq\E_t\mathbf{Y}_{t+L+m}    
\end{align*}
for $m\geq 0$ given the structure of the Markov chain and likewise for Markov chains $\mathbf{K}_t$ and $\mathbf{Z}_t$. Using this property, we can write that the following linear recurrence holds:
\begin{align*}
\mathbf{A}_0\E_t\mathbf{Y}_{t+L+m+2} &= (p+q)\mathbf{A}_0\E_t\mathbf{Y}_{t+L+m+1}-pq\E_t\mathbf{A}_0\mathbf{Y}_{t+L+m}\\
&= (p+q)\left[\mathbf{A}_1\E_t\mathbf{Y}_{t+L+m+2}+B\E_t \mathbf{K}_{t+L+m+1}+\mathbf{C}\E_t\mathbf{Z}_{t+L+m+1}\right]\\
&-pq\left[\mathbf{A}_1\E_t\mathbf{Y}_{t+L+m+1}+B\E_t \mathbf{K}_{t+L+m}+\mathbf{C}\E_t\mathbf{Z}_{t+L+m}\right]\\
&= \mathbf{A}_1\E_t\mathbf{Y}_{t+L+m+3} + B\E_t \mathbf{K}_{t+L+m+2}+\mathbf{C}\E_t\mathbf{Z}_{t+L+m+2}
\end{align*}
where we have used the induction hypothesis on the second line and the linear recurrence on the last line. We now need to make sure that the ELB is binding exactly at time $L+1$. To do so, the vector $Y_{L+1}$ has to be such that:
\begin{align*}
Y_{L+1} &= p\mathbf{A}^*Y_{L+1}+(1-p)\mathbf{A}^*Y_{L+2}+B^*k_{L+1}+C^*_{1}g_{L+1}+\text{t.i.p}\\
&= p\mathbf{A}^*Y_{L+1}+(1-p)\mathbf{A}^*\left(I-q\mathbf{A}\right)^{-1}Bk_{L+2}+B^*k_{L+1}+C^*_{1}g_{L+1}+ \text{t.i.p}\\
&= p\mathbf{A}^*Y_{L+1}+q^{L}\mathbf{A}^*\left(I-q\mathbf{A}\right)^{-1}Bk_{2}+\frac{q^{L}-p^{L}}{q-p}B^*k_{2}+p^LC^*_{1}g_1+ \text{t.i.p}\\
&= p\mathbf{A}^*Y_{L+1}+q^{L}\mathbf{A}^*\left(I-q\mathbf{A}\right)^{-1}B\tilde{\delta}g_1+\frac{q^{L}-p^{L}}{q-p}B^*\tilde{\delta}g_1+p^LC^*_{1}g_1+ \text{t.i.p}\\
&= \left(I-p\mathbf{A}^*\right)^{-1}\left[q^{L}\mathbf{A}^*\left(I-q\mathbf{A}\right)^{-1}B\tilde{\delta}+\frac{q^{L}-p^{L}}{q-p}B^*\tilde{\delta}+p^LC^*_{1}\right]g_1+\text{t.i.p}\\
&\equiv \Theta_{L,1} g_1+p^L\Theta_{2} \xi_1+\Tilde{E}^*
\end{align*}
We can now use this equation to compute the value of $\xi_1$ such that the ELB is exactly binding at time $t+L+1$. For this, we assume that the policy shock $g_1\sim 0$. Given that inflation is the second element in the vector of forward looking variables, we need to have $\xi_1$ such that:
\begin{align}
p^L\Theta_{2}(2,1)\xi_1 + \Tilde{E}^*(2,1) = \frac{\log(\beta)}{\phi_\pi},
\label{eq:condition_z21}
\end{align}
which can easily be solved for $\xi_1$ for a given $L$. This only proves that the constraint is exactly binding at time $t+L+1$, but not necessarily before that. As a result, we need to make sure that the ELB is indeed binding for all periods $t,t+1,\dots, t+L$. To do so, we will first work under the assumption that it is indeed binding for these periods and solve for the Markov states for inflation and consumption under this assumption. Then we will use these states to show that the ELB is indeed binding. The following proposition exposes this result.
\begin{prop}
\label{prop:zlb_binds}

Assume that $\xi_1$ satisfies equation \eqref{eq:condition_z21} for a given $L$. It follows that 
\begin{align*}
\E_t\mathbf{\Pi}_{t+\ell}< \frac{\log(\beta)}{\phi_\pi}\quad \Rightarrow \quad -\log(\beta) + \E_t\mathbf{R}_{t+\ell} = 0 
\end{align*}
so that the ELB is binding for all $\ell=1,\dots,L$.
\end{prop}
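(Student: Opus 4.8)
The plan is the usual guess-and-verify step: having fixed $\xi_1$ through \eqref{eq:condition_z21} so that the constraint \emph{just} binds in state $L+1$, I would show that inflation is strictly lower at every earlier node of the trap, so that the $\max$ in the Taylor rule \eqref{eq:ll_TR} selects $\log(\beta)$ there as well. Since \eqref{eq:condition_z21} was itself derived under $g_1\sim 0$, and the multipliers studied in the paper are derivatives at $g_1=0$, I may take $g_1=0$ throughout; then $k_\ell=\Tilde{\delta}g_1 q^{\ell-2}/(1-p)=0$ at every node, so $Y_{L+2}=\left(I-q\mathbf{A}\right)^{-1}Bk_{L+2}=0$ by \eqref{eq:YLp2_kLp2}, and the ELB recursion \eqref{eq:LRE_inertial_cstrt_fwd} reduces to $\mathbf{A}^*_0 Y_\ell=\mathbf{A}^*_1 Y_{\ell+1}+C^*_{0,2}\xi_\ell+E^*$ for $\ell=1,\dots,L$ together with $\mathbf{A}^*_0 Y_{L+1}=p\,\mathbf{A}^*_1 Y_{L+1}+C^*_{0,2}\xi_{L+1}+E^*$, where $\xi_\ell=p^{\ell-1}\xi_1$.

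The engine of the proof is that, with the static Phillips curve, substituting $\pi_t=\kappa\Gamma_c c_t$ into the (ELB) Euler equation collapses the $2\times 2$ system to a scalar forward equation for inflation, $\pi_t=\mu\,\E_t\pi_{t+1}-\kappa\Gamma_c(\xi_t+\log\beta)$ with $\mu\equiv 1+\kappa\Gamma_c$, $\mu$ being precisely the nonzero eigenvalue of $\mathbf{A}^*$ identified in Proposition \ref{prop:eigs}. Reading this along the Markov chain (perfect foresight through node $L+1$) gives $\pi_\ell=\mu\pi_{\ell+1}-\kappa\Gamma_c\left(p^{\ell-1}\xi_1+\log\beta\right)$ for $\ell=1,\dots,L$, and at node $L+1$, where the continuation is $p\pi_{L+1}+(1-p)\pi_{L+2}$ with $\pi_{L+2}=0$, $\pi_{L+1}=-\kappa\Gamma_c\left(p^{L}\xi_1+\log\beta\right)/(1-\mu p)$. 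Equation \eqref{eq:condition_z21} is exactly the normalization $\pi_{L+1}=\log(\beta)/\phi_\pi$; solving it for the shock yields $p^{L}\xi_1=\log(\beta)\big[(p-1)-\kappa\Gamma_c(\phi_\pi-p)\big]/(\phi_\pi\kappa\Gamma_c)>0$, since $\log\beta<0$, $p<1$ and $\phi_\pi>1>p$, so the preference shock is genuinely contractionary, $\xi_1>0$.

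It then remains to show $\pi_\ell<\pi_{L+1}$ for $\ell=1,\dots,L$. Differencing the scalar recursion, the gaps $D_\ell\equiv\pi_{\ell+1}-\pi_\ell$ satisfy $D_\ell=\mu D_{\ell+1}+\kappa\Gamma_c(1-p)p^{\ell-1}\xi_1$ for $\ell=1,\dots,L-1$, while a one-line computation using $\pi_{L+1}=\log(\beta)/\phi_\pi$ and the closed form for $\xi_1$ gives $D_L=\log(\beta)(p-1)(1+\phi_\pi\kappa\Gamma_c)/(p\phi_\pi)>0$. Because $\mu>0$, $\kappa\Gamma_c>0$, $1-p>0$ and $\xi_1>0$, backward induction from $D_L>0$ delivers $D_\ell>0$ for every $\ell$, hence $\pi_1<\pi_2<\dots<\pi_{L+1}=\log(\beta)/\phi_\pi$. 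Therefore $\phi_\pi\,\E_t\mathbf{\Pi}_{t+\ell}<\log(\beta)$ at every node $\ell=1,\dots,L$ of the guessed episode (and $=\log(\beta)$ at $L+1$), so the $\max$ in \eqref{eq:ll_TR} returns $\log(\beta)$ and $-\log(\beta)+\E_t\mathbf{R}_{t+\ell}=0$, which is the claim and verifies the guess underlying Proposition \ref{prop:match_IRF}.

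The step I expect to be the real obstacle is the scalar collapse: it relies on the \emph{static} Phillips curve (equivalently $\det\mathbf{A}^*=0$, with $E^*$ and the preference-shock loading $C^*_{0,2}$ collinear with the dominant eigenvector of $\mathbf{A}^*$). For the forward-looking Phillips curve of the numerical section $\det\mathbf{A}^*=\beta\neq 0$, the backward map does not decouple, monotonicity of inflation along the trap is no longer automatic, and one must iterate $p\mathbf{A}^*$ directly and invoke Assumption \ref{Ass:eigenvalues} (or $p<\overline{p}$) to sign the resulting geometric sums. A secondary subtlety is the sign of $\xi_1$: if $p>\overline{p}$ one can engineer parameters for which the $\xi_1$ solving \eqref{eq:condition_z21} is not positive, in which case the guessed $L$-period episode is not an equilibrium and the verification — correctly — fails, so the statement should be read as applying whenever such a contractionary $\xi_1$ exists.
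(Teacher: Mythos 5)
Your proof is correct, and it gets to the conclusion by a genuinely different route than the paper's. The paper also sets $g_1\sim 0$ and works backwards through the trap, but it inducts on the \emph{level} of consumption, showing $c_{L+1-k}\leq \log(\beta)/(\kappa\Gamma_c\phi_\pi)$ state by state, and its induction step leans on the auxiliary bound $p^{L}\xi_1>-\log(\beta)$, extracted from the Euler equation at the exit state. You instead exploit the static Phillips curve to collapse the ELB system to the scalar recursion $\pi_\ell=\mu\,\pi_{\ell+1}-\kappa\Gamma_c\left(p^{\ell-1}\xi_1+\log\beta\right)$ (with $\mu=1+\kappa\Gamma_c$, the nonzero eigenvalue of $\mathbf{A}^*$), solve \eqref{eq:condition_z21} in closed form for $p^{L}\xi_1$, and then run the backward induction on the first differences $D_\ell=\pi_{\ell+1}-\pi_\ell$ starting from $D_L>0$. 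This buys three things the paper's argument does not deliver as cleanly: strict monotonicity $\pi_1<\dots<\pi_{L+1}=\log(\beta)/\phi_\pi$ (the paper only obtains weak inequalities), an explicit expression for the shock size consistent with a trap of length $L$, and no reliance on the bound $p^{L}\xi_1>-\log(\beta)$ \textemdash your induction needs only $\xi_1>0$, $\mu>0$ and $D_L>0$, so it goes through for any $p\in(0,1)$, including $p>\overline{p}$, where that auxiliary bound is delicate. For the same reason your closing caveat about the sign of $\xi_1$ is moot: your own closed form, $p^{L}\xi_1=\log(\beta)\left[(p-1)-\kappa\Gamma_c(\phi_\pi-p)\right]/(\phi_\pi\kappa\Gamma_c)$, is strictly positive whenever $\phi_\pi>1>p$ and $\kappa\Gamma_c>0$, irrespective of $\overline{p}$ (only the knife-edge $p=\overline{p}$, where the exit-state equation is singular, is excluded). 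Your remark on the scope of the scalar collapse is accurate: it hinges on the static Phillips curve, which is exactly the setting of this proposition, while the forward-looking curve used in the paper's numerical section would require a two-dimensional argument of the kind you sketch.
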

\begin{proof}
To prove that the ELB is indeed binding, we will use the fact that the policy shock $g_1\sim 0$. This will allow us to write:
\begin{align*}
Y_{\ell} &= \mathbf{A}^*Y_{\ell+1}+B^*k_\ell+p^{\ell-1}C^*_{2}\xi_1 + E^{*},
\end{align*}
for $\ell=L,\dots,1$ and where we have used the recursive expression for the Markov states of the exogenous shock. By definition, the shock $\xi_1$ is set up so that
\begin{align}
\label{eq:induction_pi}
\E_t\mathbf{\Pi}_{t+L} = \pi_{L+1} = \frac{\log(\beta)}{\phi_\pi}
\end{align}
It follows that in this case we also have :
\begin{align}
\label{eq:induction_c}
\E_t\mathbf{C}_{t+L}=c_{L+1} = \frac{1}{\kappa\Gamma_c}\pi_{L+1} = \frac{1}{\kappa\Gamma_c} \frac{\log(\beta)}{\phi_\pi}
\end{align}
given the assumption of stochastic exit from the ELB. We will use induction to show that 
\[
\pi_{L+1-k} \leq \frac{\log(\beta)}{\phi_\pi}
\]
for $k=0,\dots,L$. Notice first that it must be the case that $\xi_1>0$ for it to be possible that the shadow interest rate comes back to the steady state from below.

With this in mind, the case $k=0$ is true by construction. Let us now assume that the hypothesis is true for $k=0,\dots,L-1$. In that case, we can write 
\begin{align*}
c_{L+1-k-1} &= c_{L+1-k} + \pi_{L+1-k} -\log(\beta) - p^{L+1-k-2}\xi_1    \\
            &=\left(1 + \kappa\Gamma_c \right)c_{L+1-k}-\log(\beta) - p^{L+1-k-2}\xi_1
\end{align*}
Using the induction hypothesis, we can write:
\begin{align}
c_{L+1-k-1} &\leq \left(1 + \kappa\Gamma_c \right)\frac{\log(\beta)}{ \kappa\Gamma_c\phi_\pi}-\log(\beta) - p^{L+1-k-2}\xi_1
\label{eq:induction_step}
\end{align}
Using the Euler equation for $c_{L+1}$, we can write:
\begin{align*}
c_{L+1} &= pc_{L+1} + (1-p)c_{L+2}-p^{L}\xi_1 -\log(\beta)\\
\Rightarrow\quad c_{L+1} &= c_{L+2} - \frac{1}{1-p}[p^{L}\xi_1 +\log(\beta)]
\end{align*}
By construction, we have that 
\begin{align*}
c_{L+1}\leq  \frac{\log(\beta)}{\kappa\Gamma_c\phi_\pi} < c_{L+2}.
\end{align*}
As a result, we can write that;
\begin{align*}
- \frac{1}{1-p}[p^{L}\xi_1 +\log(\beta)] &=  c_{L+1}-c_{L+2}   \\
\Rightarrow\quad \frac{1}{1-p}[p^{L}\xi_1 +\log(\beta)] &< 0 \\
\Rightarrow\quad p^L\xi_1 > -\log(\beta)
\end{align*}
Given that $p\in(0,1)$, this implies that
\begin{align*}
\xi_1 > p\xi_1 > \dots > p^{L-1}\xi_1 > p^{L}\xi_1 > -\log(\beta)     
\end{align*}
Now returning to the induction step, we have
\begin{align*}
c_{L+1-k-1} &\leq \left(1 + \kappa\Gamma_c \right)\frac{\log(\beta)}{ \kappa\Gamma_c\phi_\pi}-\log(\beta) - p^{L+1-k-2}\xi_1 \\
\Rightarrow\quad &\leq \left(1 + \kappa\Gamma_c \right)\frac{\log(\beta)}{ \kappa\Gamma_c\phi_\pi}\\
\Rightarrow\quad &\leq \frac{\log(\beta)}{ \kappa\Gamma_c\phi_\pi}.
\end{align*}
Now multiplying both sides by $\kappa\Gamma_c>0$, we find that
\begin{align*}
\pi_{L+1-k-1} \leq \frac{\log(\beta)}{\phi_\pi}.
\end{align*}
As a result, assuming that the constraint is binding for $k$ implies that it is also binding for $k=1$. That allows us to conclude that the lower bound binds for all the periods before the exit. We have shown that the induction hypothesis is true. It follows that 
\begin{align*}
\E_t\mathbf{\Pi}_{t+\ell}< \frac{\log(\beta)}{\phi_\pi}\quad \Rightarrow \quad -\log(\beta) + \E_t\mathbf{R}_{t+\ell} = 0    
\end{align*}
for $\ell=1,\dots,L$ and with equality for $\ell=L+1$, where the left hand side of the second equation represents the impulse response of the nominal interest rate in level. 
\end{proof}
Now that we have shown that the ELB is indeed binding for time periods $1,\dots,L+1$ we can work backwards to solve for the vectors $Y_1,\dots,Y_L$ as follows:
Given the perfect foresight assumption, we can write:
\begin{align*}
Y_{\ell}&= \mathbf{A}^*Y_{\ell+1}+B^*k_\ell+C^*_{1}z_{1,\ell}+\text{t.i.p}    \\
         &= \mathbf{A}^*Y_{\ell+1}+\left(\frac{q^{\ell-1}-p^{\ell-1}}{q-p}B^*d+p^{\ell-1}C^*_{1}\right)g_1+\text{t.i.p}    
\end{align*}
for all $\ell=L-1,\dots,1$. We can get the following general expression:
\begin{align*}
Y_{L-\ell} &= (\mathbf{A}^*)^{\ell+1}Y_{L+1}+p^{L-\ell-1}\left(\sum_{i=0}^{\ell}(p\mathbf{A}^*)^i\right)C^*_{1}g_1+\text{t.i.p}\\
&+\frac{1}{q-p}\left[q^{L-\ell-1}\sum_{i=0}^{\ell}(q\mathbf{A}^*)^i-p^{L-\ell-1}\sum_{i=0}^{\ell}(p\mathbf{A}^*)^i\right]B^*dg_1.  
\end{align*}
In particular, to get the impact effect we set $\ell=L-1$ and obtain:
\begin{align*}
Y_1 &= (\mathbf{A}^*)^LY_{L+1}+\left(\sum_{i=0}^{L-1}(p\mathbf{A}^*)^i\right)C^*_{1}g_1\\
&+\frac{1}{q-p}\left[\sum_{i=0}^{L-1}(q\mathbf{A}^*)^i-\sum_{i=0}^{L-1}(p\mathbf{A}^*)^i\right]B^*dg_1  +\text{t.i.p}  \\
&= (\mathbf{A}^*)^LY_{L+1}+\left(I-(p\mathbf{A}^*)^L\right)\left(I-p\mathbf{A}^*\right)^{-1} C^*_{1}g_1+\text{t.i.p}\\
&+ \frac{1}{q-p}\left[\left(I-(q\mathbf{A}^*)^L\right)\left(I-q\mathbf{A}^*\right)^{-1}-\left(I-(p\mathbf{A}^*)^L\right)\left(I-p\mathbf{A}^*\right)^{-1}\right]B^*dg_1.
\end{align*}
Note that we can compute a similar equilibrium in which the ELB does binds in state $L$ but does not bind in state $L+1$. In this case we have $T=L$ and $Y_{L+1}$ can be solved for using equation \eqref{eq:EtYtlp1}. In that case, $\xi_1$ can be calibrated as being in a range such that
\begin{align*}
p^{L-1}\Theta_{2}(2,1)\xi_1 + \Tilde{E}^*(2,1) \leq  \frac{\log(\beta)}{\phi_\pi} < p^{L-1}\Theta_{2}(2,1)\xi_1.
\end{align*}
This effectively replicates the algorithm developed in \cite{Guerrieri2014}.

\pagebreak
\section{Proof of Proposition \ref{prop:pi_1_NTimes}}
\label{sec:proof_pi1_NTimes}

We reproduce the short run Euler equation and Phillips curve for convenience:
\begin{align}
\label{eq:c_1_Ntimes_app}
c_1 &= pc_1 + (1-p)c_2 -\left[\phi_\pi \pi_1 - p\pi_1 - (1-p)\pi_2\right]\\
\pi_1 &= \kappa\left[\Gamma_c c_1 + \Gamma_g g\right].
\label{eq:pi_1_Ntimes_app}
\end{align}
Collecting the $c_1$ terms and re-arranging terms in equation \eqref{eq:c_1_Ntimes_app}, we get:
\begin{align*}
c_1 = c_2+\pi_2-\frac{\phi_\pi-p}{1-p}\pi_1.    
\end{align*}
Using the expressions for medium run inflation and consumption as a function of $k=\tilde{\delta}g$, we get:
\begin{align*}
c_1 &= \frac{\kappa(\phi_\pi-1)\Gamma_k}{1-q+\kappa\Gamma_c(\phi_\pi-q)}\frac{\tilde{\delta}}{1-p}g-\frac{\phi_\pi-p}{1-p}\pi_1\\
    &=  \frac{\kappa(\phi_\pi-1)(1+\Gamma_g)}{\Delta(\phi_\pi,q,\cdot)}\frac{\tilde{\delta}}{1-p}g\epsilon_g-\frac{\phi_\pi-p}{1-p}\pi_1
\end{align*}
where we have used the expression for $\Gamma_k=(1+\Gamma_g)\epsilon_g$ as well as defined $\Delta(\phi_\pi,q,\cdot)= 1-q+\kappa\Gamma_c(\phi_\pi-q)$ on the second line. Using now equation \eqref{eq:pi_1_Ntimes_app} to substitute for $\pi_1$ and re-arrange, we get:
\begin{align*}
\frac{\partial c_1}{\partial g} = \kappa\frac{\frac{(\phi_\pi-1)(1+\Gamma_g)}{\Delta(\phi_\pi,q,\cdot)}\tilde{\delta}\epsilon_g - (\phi_\pi-p)\Gamma_g}{\Delta(\phi_\pi,p,\cdot)}    .
\end{align*}
Notice that the denominator on the right hand side is strictly positive for $p\in(0,1)$. As a result, the sign of the fraction depends only on its denominator. For given structural parameters, the denominator is an affine function of $\epsilon_g$. For the special case of $\epsilon_g=0$, the numerator is strictly negative. In addition, the numerator is strictly increasing in $\epsilon_g$ and goes to $+\infty$ as $\epsilon_g\to +\infty$. As a result, we know that there exists a threshold value $\epsilon_g^I$ such that 
\begin{align*}
\frac{\partial c_1}{\partial g}>0 \quad\text{if}\quad \epsilon_g>  \epsilon_g^I.   
\end{align*}
It is straightforward to see that the threshold value is given by
\begin{align*}
\epsilon_g^I = \frac{\phi_\pi-p}{\phi_\pi-1} \frac{\Gamma_g}{1+\Gamma_g} \frac{\Delta(\phi_\pi,q,\cdot)}{\tilde{\delta}}>0.
\end{align*}
This completes the proof of parts 1 and 2. To prove part 3, start with the Phillips curve and obtain:
\begin{align*}
\frac{\partial \pi_1}{\partial g} = \kappa\left[\Gamma_c\frac{\partial c_1}{\partial g} + \Gamma_g\right].    
\end{align*}
As a result, the effect is positive only if the term in brackets sum to a positive number. Using the short run multiplier effect on consumption, we get:
\begin{align*}
\frac{\partial \pi_1}{\partial g}>0\quad & \Leftrightarrow\quad  \Gamma_c\frac{\partial c_1}{\partial g} + \Gamma_g>0\\
 & \Leftrightarrow \kappa\Gamma_c \frac{\frac{(\phi_\pi-1)(1+\Gamma_g)}{\Delta(\phi_\pi,q,\cdot)}\tilde{\delta}\epsilon_g - (\phi_\pi-p)\Gamma_g}{\Delta(\phi_\pi,p,\cdot)} + \Gamma_g>0\\
 & \Leftrightarrow  \frac{\kappa\Gamma_c\frac{(\phi_\pi-1)(1+\Gamma_g)}{\Delta(\phi_\pi,q,\cdot)}\tilde{\delta}\epsilon_g -\kappa\Gamma_c (\phi_\pi-p)\Gamma_g + \Delta(\phi_\pi,p,\cdot)\Gamma_g}{\Delta(\phi_\pi,p,\cdot)} >0\\
 & \Leftrightarrow  \frac{\kappa\Gamma_c\frac{(\phi_\pi-1)(1+\Gamma_g)}{\Delta(\phi_\pi,q,\cdot)}\tilde{\delta}\epsilon_g  + (\Delta(\phi_\pi,p,\cdot)-\kappa\Gamma_c (\phi_\pi-p))\Gamma_g}{\Delta(\phi_\pi,p,\cdot)} >0.
 \end{align*}
Given the definition of $\Delta(\phi_\pi,p,\cdot)$, notice that
\begin{align*}
\Delta(\phi_\pi,p,\cdot)-\kappa\Gamma_c (\phi_\pi-p) &=  1-q+\kappa\Gamma_c(\phi_\pi-p)   -\kappa\Gamma_c (\phi_\pi-p)\\
&= 1-q>0.
\end{align*}
Given the fact that the first term on the denominator is strictly positive given parameter restrictions, it follows that the multiplier
effect on inflation in the short run is necessarily positive. 

\pagebreak

\section{Proof of Proposition \ref{prop:Mc_NTimes}}
\label{sec:proof_Mc_NTimes}

From Appendix \ref{sec:proof_pi1_NTimes}
, we know that
\begin{align*}
\frac{\partial c_1}{\partial g} = \frac{\Theta\tilde{\delta}\epsilon_g-\kappa(\phi_\pi-p)\Gamma_g}{\det(I-p\mathbf{A})}.
\end{align*}
As a result, we can write the cumulative PDV multiplier as
\begin{align*}
\mathfrak{M}_c = \frac{\Theta\tilde{\delta}\epsilon_g-\kappa(\phi_\pi-p)\Gamma_g}{\det(I-p\mathbf{A})}+\frac{\beta}{1-\beta q}\Theta_{c,k}\tilde{\delta}\epsilon_g
\end{align*}
This cumulative PDV multiplier is strictly positive if and only if
\begin{align*}
\frac{\Theta\tilde{\delta}\epsilon_g}{\det(I-p\mathbf{A})} + \frac{\beta}{1-\beta q}\Theta_{c,k}\tilde{\delta}\epsilon_g &> \frac{\kappa(\phi_\pi-p)\Gamma_g}{\det(I-p\mathbf{A})} \\
\Leftrightarrow\quad \frac{\Theta\tilde{\delta}\epsilon_g}{\det(I-p\mathbf{A})} + \frac{\beta}{1-\beta q}\Theta_{c,k}\tilde{\delta}\epsilon_g &> \frac{\kappa(\phi_\pi-p)\Gamma_g}{\det(I-p\mathbf{A})} \\
\Leftrightarrow\quad \epsilon_g + \det(I-p\mathbf{A})\frac{\beta}{1-\beta q}\frac{\Theta_{c,k}}{\Theta}\epsilon_g &> \epsilon_g^I\\
\Leftrightarrow\quad \epsilon_g  &> \frac{\epsilon_g^I}{1+\det(I-p\mathbf{A})\frac{\beta}{1-\beta q}\frac{\Theta_{c,k}}{\Theta}}\\
\Leftrightarrow\quad \epsilon_g  &> \frac{\epsilon_g^I}{1+\det(I-p\mathbf{A})\frac{\beta}{1-\beta q}\frac{\phi_\pi-q}{\phi_\pi-1}},
\end{align*}
where we have used the fact that $\Theta_{c,k}/\Theta=(\phi_\pi-q)/\phi_\pi-1$ on the last line. This completes the proof.

\pagebreak
\section{Proof of Proposition \ref{prop:eigs}}
\label{sec:proof_eigs}

At the ELB, we have:
\begin{align*}
(\mathbf{A}^*_0)^{-1}\mathbf{A}^*_1 \equiv \mathbf{A}^* = 
\begin{bmatrix}
1 & 1\\
\kappa\Gamma_c & \kappa\Gamma_c
\end{bmatrix}.
\end{align*}
If $\lambda$ is an eigenvalue of $\mathbf{A}$, then it solves the following polynomial equation:
\begin{align*}
\det\left(\mathbf{A}-\lambda I\right) = 0 \quad &\Leftrightarrow\quad \lambda^2 - (1+\kappa\Gamma_c) \lambda = 0\\
&\Leftrightarrow\quad \lambda\left[\lambda-(1+\kappa\Gamma_c)\right]=0.
\end{align*}
As a result, this matrix has two distinct eigenvalues given by
\begin{align*}
\lambda_1 = 0\quad\&\quad \lambda_2 = 1+\kappa\Gamma_c.    
\end{align*}
The first eigenvalue is then stable. Further, if $\lambda_1$ is an eigenvalue of $\mathbf{A}$ then $p\lambda_1$ is an eigenvalue of $p\mathbf{A}$. As a result, this eigenvalue of $p\mathbf{A}$ is necessarily stable as well.

Given that $\kappa\Gamma_c>0$, it follows that $\lambda_2>1$. Then, the second eigenvalue of $p\mathbf{A}$ is unstable if and only if 
\begin{align*}
\lambda_2>\frac{1}{p}\quad &\Leftrightarrow \quad 1+\kappa\Gamma_c>\frac{1}{p}\\
&\Leftrightarrow p > \overline{p} \equiv \frac{1}{1+\kappa\Gamma_c} \in (0,1),
\end{align*}
where we have used the fact that $\lambda_2>1$ to guarantee that $\overline{p}\in (0,1)$. We now move on to the second part of the Proposition. Using the expression for $\mathbf{A}$, we can write:
\begin{align*}
I-p\mathbf{A} = 
\begin{bmatrix}
1-p & -p\\
-p\kappa\Gamma_c & 1-p\kappa\Gamma_c.
\end{bmatrix}    
\end{align*}
Therefore, we can compute
\begin{align*}
\det\left(I-p\mathbf{A}\right) &= (1-p)(1-p\kappa\Gamma_c) -p^2\kappa\Gamma_c\\
&= 1-p(1+\kappa\Gamma_c)\\
&\equiv f(p).
\end{align*}
Note that $f$ is strictly decreasing for $p\in(0,1)$ given that $f^{'}(p)=-(1+\kappa\Gamma_c)$. Further, note that $f(0)=1>0$ while $f(1)=-\kappa\Gamma_c<0$. As a result, $f$ there exists a value of $p=\tilde{p}$ such that 
\begin{itemize}
    \item $f(p)>0$  for $p<\tilde{p}$
    \item $f(p)=0$ for $p=\tilde{p}$
    \item $f(p)<0$  for $p>\tilde{p}$.
\end{itemize}
Given that $f(p)$ is an affine function, we can write $\tilde{p}$ as:
\begin{align*}
\tilde{p} = \frac{1}{1+\kappa\Gamma_c} = \overline{p}. 
\end{align*}
This completes the proof. 

\pagebreak 
\section{Proof of Proposition \ref{prop:c_1pi_1_short_lived}}
\label{sec:proof_c_1pi_1_short_lived}

We reproduce the short run Euler equation and Phillips curve for convenience:
\begin{align}
\label{eq:c_1_Short_lived_app}
c_1 &= pc_1 + (1-p)c_2 + p\pi_1 + (1-p)\pi_2 - \xi\\
\pi_1 &= \kappa\left[\Gamma_c c_1 + \Gamma_g g\right].
\label{eq:pi_1_Short_lived_app}
\end{align}
Collecting the $c_1$ terms and re-arranging terms in equation \eqref{eq:c_1_Ntimes_app}, we get:
\begin{align*}
c_1 = c_2+\pi_2+\frac{p}{1-p}\pi_1.    
\end{align*}
Using the expressions for medium run inflation and consumption as a function of $k=\tilde{\delta}g/(1-p)$, we get:
\begin{align*}
c_1 &= \Theta\frac{\tilde{\delta}}{1-p}\epsilon_g g+\frac{p}{1-p}\pi_1.
\end{align*}
Using now equation \eqref{eq:pi_1_Short_lived_app} to substitute for $\pi_1$ and re-arrange, we get:
\begin{align*}
\frac{\partial c_1}{\partial g} = \frac{\Theta\tilde{\delta}  \epsilon_g+p\kappa\Gamma_g}{\det(I-p\mathbf{A}^*)}    .
\end{align*}
Now using equation \eqref{eq:pi_1_Short_lived_app}, we get:
\begin{align*}
\frac{\partial \pi_1}{\partial g} &= \kappa\Gamma_c\left[\Theta\frac{\Tilde{\delta}}{1-p}\epsilon_g+\frac{p}{1-p}\frac{\partial \pi_1}{\partial g}\right]+\kappa\Gamma_g     \\
\Leftrightarrow\quad (1-p)\frac{\partial \pi_1}{\partial g} &= \kappa\Gamma_c\Theta\Tilde{\delta}\epsilon_g+p\kappa\Gamma_c\frac{\partial \pi_1}{\partial g} + (1-p)p\kappa\Gamma_g\\
\frac{\partial \pi_1}{\partial g} &= \kappa\frac{\Theta\Tilde{\delta}\epsilon_g\Gamma_c+(1-p)\Gamma_g}{\det(I-p\mathbf{A}^*)},
\end{align*}
where we have used the fact that $\det(I-p\mathbf{A}^*)=1-p(1+\kappa\Gamma_c)$ on the last line. This completes the proof. 

\pagebreak
\section{Proof of Proposition \ref{prop:PDV_long_trap}}
\label{sec:proof_PDV_long_trap}

From Appendix \ref{sec:proof_c_1pi_1_short_lived}
, we know that
\begin{align}
\frac{\partial c_1}{\partial g} = \frac{\Theta\tilde{\delta}\epsilon_g+p\kappa\Gamma_g}{\det(I-p\mathbf{A}^*)}.
\label{eq:short_lived_det_neg}
\end{align}
As a result, we can write the cumulative PDV multiplier as
\begin{align*}
\mathfrak{M}_c = \frac{\Theta\tilde{\delta}\epsilon_g+p\kappa\Gamma_g}{\det(I-p\mathbf{A}^*)}+\frac{\beta}{1-\beta q}\Theta_{c,k}\tilde{\delta}\epsilon_g
\end{align*}
In this case, given that $\det(I-p\mathbf{A}^*)$ is negative, the short run effect of public capital is negative while the medium run effect is positive. The second one dominates if and only if
\begin{align}
\frac{\beta}{1-\beta q}\Theta_{c,k}   > -\frac{\Theta}{\det(I-p\mathbf{A}^*)}>0.
\label{eq:condition_short_lived_det_neg}
\end{align}
Using the fact that
$\frac{\Theta_{c,k}}{\Theta} = \frac{\phi_\pi-q}{\phi_\pi-1}$,
condition \eqref{eq:condition_short_lived_det_neg} boils down to:
\begin{align*}
\beta\frac{\phi_\pi-q}{1-\beta q}>-\frac{\phi_\pi-1}{\det(I-p\mathbf{A}^*)}>0    .
\end{align*}
If this condition holds, then one can view $\mathfrak{M}_c$ as an affine function $\mathcal{F}(\epsilon_g)$ that is such that $\mathcal{F}(0)<0$ as well as $\mathcal{F}'(\epsilon_g)>0$. As a result, there exists a threshold $\epsilon^{M,z}_g$ such that if $\epsilon_g>\epsilon^{M,z}_g$ then $\mathfrak{M}_c>0$. From equation \eqref{eq:short_lived_det_neg}, it is clear that if condition \eqref{eq:condition_short_lived_det_neg} then 
\[
\mathfrak{M}_c<\frac{\partial c_1}{\partial g}<0.
\]
This completes the proof.

\pagebreak
\section{Proof of Proposition \ref{prop:long-lived_trap}}
\label{sec:proof_long-lived_trap}

Using the results from Appendix \ref{sec:proof_match_IRF}, we can compute the vector of impact states as:
\begin{align*}
Y_1 &= (\mathbf{A}^*)^LY_{L+1}+\left(I-(p\mathbf{A}^*)^L\right)\left(I-p\mathbf{A}^*\right)^{-1} C^*_{1}g+\text{t.i.p}\\
&+ \frac{1}{q-p}\left[\left(I-(q\mathbf{A}^*)^L\right)\left(I-q\mathbf{A}^*\right)^{-1}-\left(I-(p\mathbf{A}^*)^L\right)\left(I-p\mathbf{A}^*\right)^{-1}\right]B^*\Tilde{\delta}g.
\end{align*}
Using the expression for $Y_{L+1}$ derived before, we can rewrite the first term on the right hand side as follows:
\begin{align*}
(\mathbf{A}^*)^LY_{L+1} &=  (\mathbf{A}^*)^L\left(I-p\mathbf{A}^*\right)^{-1}\left[q^{L}\mathbf{A}^*\left(I-q\mathbf{A}\right)^{-1}B\Tilde{\delta}+\frac{q^{L}-p^{L}}{q-p}B^*d+p^LC^*_{1}\right]g_1 +\text{t.i.p}  \\
&= (q\mathbf{A}^*)^L\left(I-p\mathbf{A}^*\right)^{-1}\mathbf{A}^*\left(I-q\mathbf{A}\right)^{-1}B\Tilde{\delta}g_1\\
&+ \frac{1}{q-p} (q\mathbf{A}^*)^L\left(I-p\mathbf{A}^*\right)^{-1}B^*\Tilde{\delta}g_1\\
&-\frac{1}{q-p} (p \mathbf{A}^*)^L\left(I-p\mathbf{A}^*\right)^{-1}B^*\Tilde{\delta}g_1\\
&+(p \mathbf{A}^*)^L\left(I-p\mathbf{A}^*\right)^{-1}C^*_1g_1+\text{t.i.p}
\end{align*}
Therefore, we can rewrite the vector of impact states as
\begin{align*}
Y_1 &=  (p \mathbf{A}^*)^L\left(I-p\mathbf{A}^*\right)^{-1}C^*_1g +  \left(I-(p\mathbf{A}^*)^L\right)\left(I-p\mathbf{A}^*\right)^{-1} C^*_{1}g+\pazocal{Q}(L,\theta)g+\text{t.i.p}\\
&= \left(I-p\mathbf{A}^*\right)^{-1}C^*_1g+\pazocal{Q}(L,\theta)g+\text{t.i.p},
\end{align*}
where $\pazocal{Q}(L,\theta)$ collects all the terms involving $g$ through public capital. Under assumptions \ref{Ass:eigenvalues}-\ref{Ass:big_shock}, notice that both $(p\mathbf{A}^*)^L$ and $(q\mathbf{A}^*)^L$ converge to a matrix of zeros as $L\to\infty$. The same goes for the t.i.p terms which are finite and multiplied by $(p\mathbf{A}^*)^L$. Again using assumptions \ref{Ass:eigenvalues}-\ref{Ass:big_shock}, notice that 
\begin{align*}
\lim_{L\to\infty}  \left(I-(q\mathbf{A}^*)^L\right)\left(I-q\mathbf{A}^*\right)^{-1} &= \left(I-q\mathbf{A}^*\right)^{-1}  \\
\lim_{L\to\infty}  \left(I-(p\mathbf{A}^*)^L\right)\left(I-p\mathbf{A}^*\right)^{-1} &= \left(I-p\mathbf{A}^*\right)^{-1}  
\end{align*}
Using these two properties, we can write that
\begin{align*}
\lim_{L\to\infty} Y_1 &= \left(I-p\mathbf{A}^*\right)^{-1} C^*_{1}g+\text{t.i.p}\\
&+ \frac{1}{q-p}\left[\left(I-q\mathbf{A}^*\right)^{-1}-\left(I-p\mathbf{A}^*\right)^{-1}\right]B^*\Tilde{\delta}g.
\end{align*}
We now need to relate this expression to the one that we have derived in the Introduction. To do so, notice that
\begin{align*}
(q-p)    \left(I-p\mathbf{A}^*\right)^{-1}\mathbf{A}^*\left(I-q\mathbf{A}^*\right)^{-1} &= \left(I-q\mathbf{A}^*\right)^{-1}-\left(I-p\mathbf{A}^*\right)^{-1}\\
\Leftrightarrow\quad (q-p)\mathbf{A}^*\left(I-q\mathbf{A}^*\right)^{-1} &= \left(I-p\mathbf{A}^*\right)\left(I-q\mathbf{A}^*\right)^{-1}-I\\
\Leftrightarrow\quad (q-p)\mathbf{A}^* &= (I-p\mathbf{A}^*)-(I-q\mathbf{A}^*)\\
&= q\mathbf{A}^*-p\mathbf{A}^*,
\end{align*}
where we have left-multiplied both sides by $(I-p\mathbf{A}^*)$ on the second line and right-multiplied by $(I-q\mathbf{A}^*)$ on the third line. This guarantees that the equality on the first line holds. This proves that
\begin{align*}
\lim_{L\to\infty} \pazocal{Q}(L,\theta) &=     \frac{1}{q-p}\left[\left(I-q\mathbf{A}^*\right)^{-1}-\left(I-p\mathbf{A}^*\right)^{-1}\right]B^*\Tilde{\delta}\\
&= \left(I-p\mathbf{A}^*\right)^{-1}\mathbf{A}^*\left(I-q\mathbf{A}^*\right)^{-1}B^*\Tilde{\delta},
\end{align*}
which completes the proof.

\pagebreak
\section{Proof of Proposition \ref{prop:c_1pi_1_long_lived}}
\label{sec:proof_c_1pi_1_long_lived}

We reproduce the short run Euler equation and Phillips curve for convenience:
\begin{align}
\label{eq:c_1_Short_lived_app}
c_1 &= pc_1 + (1-p)c_2 + p\pi_1 + (1-p)\pi_2 - \xi\\
\pi_1 &= \kappa\left[\Gamma_c c_1 + \Gamma_g g\right].
\label{eq:pi_1_Short_lived_app}
\end{align}
Collecting the $c_1$ terms and re-arranging terms in equation \eqref{eq:c_1_Ntimes_app}, we get:
\begin{align*}
c_1 = c_2+\pi_2+\frac{p}{1-p}\pi_1.    
\end{align*}
Using the expressions for medium run inflation and consumption as a function of $k=\tilde{\delta}g/(1-p)$, we get:
\begin{align*}
c_1 &= \Theta^z\frac{\tilde{\delta}}{1-p}\epsilon_g g+\frac{p}{1-p}\pi_1.
\end{align*}
Using now equation \eqref{eq:pi_1_Short_lived_app} to substitute for $\pi_1$ and re-arrange, we get:
\begin{align*}
\frac{\partial c_1}{\partial g} = \frac{\Theta^z\tilde{\delta}  \epsilon_g+p\kappa\Gamma_g}{\det(I-p\mathbf{A}^*)}    .
\end{align*}
Now using equation \eqref{eq:pi_1_Short_lived_app}, we get:
\begin{align*}
\frac{\partial \pi_1}{\partial g} &= \kappa\Gamma_c\left[\Theta^z\frac{\Tilde{\delta}}{1-p}\epsilon_g+\frac{p}{1-p}\frac{\partial \pi_1}{\partial g}\right]+\kappa\Gamma_g     \\
\Leftrightarrow\quad (1-p)\frac{\partial \pi_1}{\partial g} &= \kappa\Gamma_c\Theta^z\Tilde{\delta}\epsilon_g+p\kappa\Gamma_c\frac{\partial \pi_1}{\partial g} + (1-p)p\kappa\Gamma_g\\
\frac{\partial \pi_1}{\partial g} &= \kappa\frac{\Theta^z\Tilde{\delta}\epsilon_g\Gamma_c+(1-p)\Gamma_g}{\det(I-p\mathbf{A}^*)},
\end{align*}
where we have used the fact that $\det(I-p\mathbf{A}^*)=1-p(1+\kappa\Gamma_c)$ on the last line. This completes the proof. 

\pagebreak
\section{Additional figures}
\label{sec:additional_figs}

\begin{figure}[htp]
\centering
\caption{Output multiplier as a function of $L$ - one unstable eigenvalue, deterministic exit}
\includegraphics[width=0.8\textwidth]{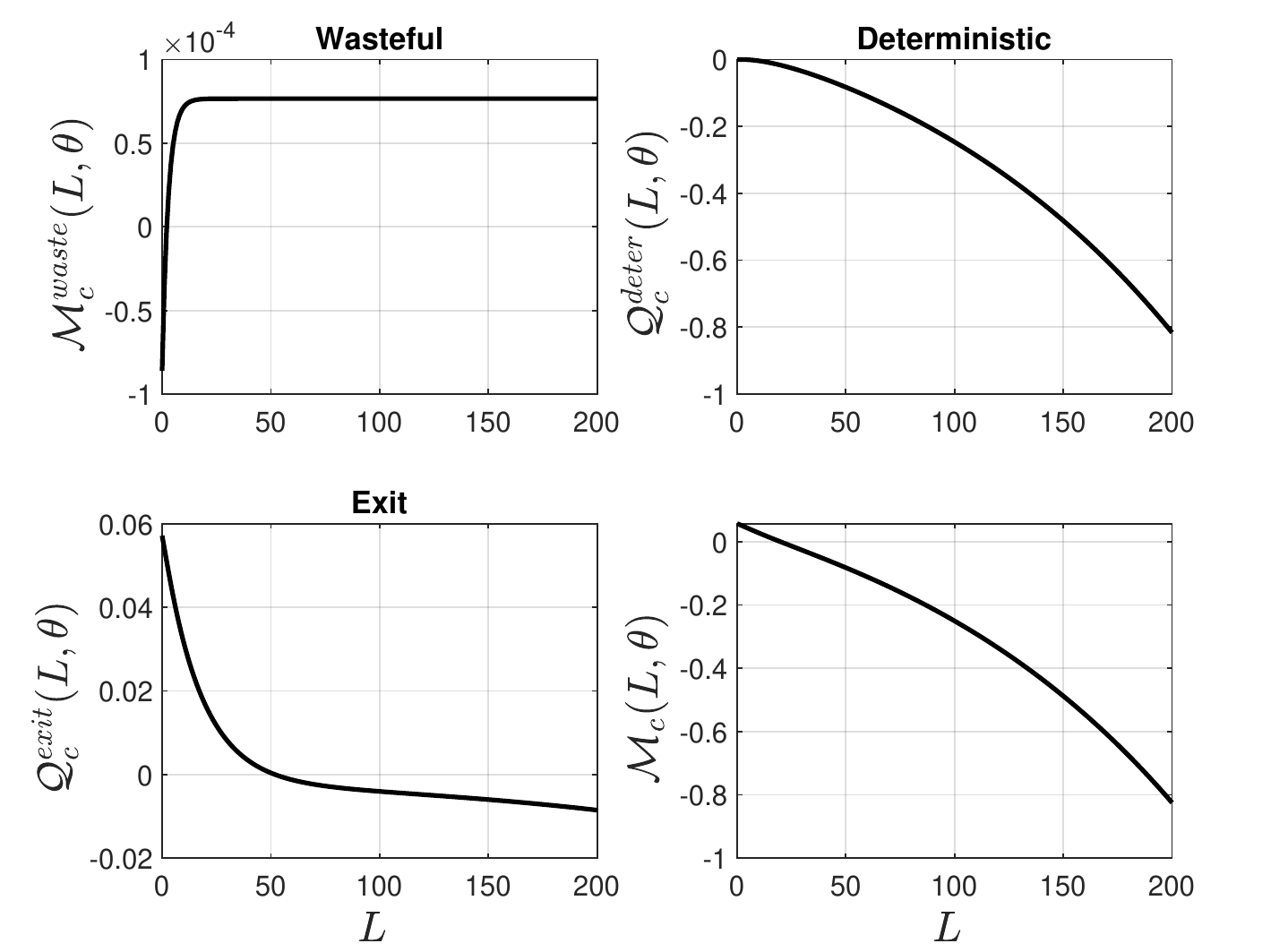}
\begin{minipage}{0.8\textwidth}
\protect\footnotesize Notes: \protect\scriptsize
We use the same calibration as in Figure \ref{fig:Mc_unstable_q_hybrid}, except that we now assume a deterministic exit.
\end{minipage}
\label{fig:Mc_unstable_q_deter}
\end{figure}

\end{document}